\newcommand{\Z}{\mathbb{Z}}
\newcommand{\add}{{\mathbb{Z}_2\mathbb{Z}_4}}
\newcommand{\C}{{\cal C}}
\newcommand{\zero}{{\mathbf{0}}}
\newcommand{\cG}{{\cal G}}
\newcommand{\vv}{{\bf  v}}
\newcommand{\vu}{{\bf  u}}
\newcommand{\vw}{{\bf  w}}
\newcommand{\D}{{\cal D}}
\newcommand{\K}{{\cal K}}
\newcommand{\R}{{\cal R}}
\newcommand{\rank}{\operatorname{rank}}
\newtheorem{theorem}{Theorem}
\newtheorem{definition}[theorem]{Definition}
\newtheorem{corollary}[theorem]{Corollary}
\newtheorem{proposition}[theorem]{Proposition}
\newtheorem{lemma}[theorem]{Lemma}
\newtheorem{example}{Example}
\renewcommand{\arraystretch}{0.90}
\title{$\add$-Additive Cyclic Codes: Kernel and Rank}
\author{ Joaquim Borges\thanks{J. Borges is with the Department of Information and Communications
Engineering, Universitat Aut\`{o}noma de Barcelona, 08193-Bellaterra, Spain
(e-mail: joaquim.borges@uab.cat).}  \\
  Steven T. Dougherty\thanks{S. T. Dougherty is with the Department of Mathematics, University of Scranton, Scranton, PA 18510, USA (e-mail:prof.steven.dougherty@gmail.com).} \\
  Cristina Fern\'andez-C\'ordoba\thanks{C. Fern\'andez-C\'ordoba is with the Department of Information and
Communications Engineering, Universitat Aut\`{o}noma de Barcelona,
08193-Bellaterra, Spain (e-mail: cristina.fernandez@uab.cat).}
\\
  Roger Ten-Valls \thanks{R. Ten-Valls is with the Department of Information and Communications
Engineering, Universitat Aut\`{o}noma de Barcelona, 08193-Bellaterra, Spain
(e-mail: roger.ten@uab.cat).}
\thanks{This work has been partially
supported by the Spanish MINECO grants TIN2016-77918-P and MTM2015-69138-REDT,
and by the Catalan AGAUR grant 2014SGR-691.}}
\date{\today}
\begin{document}

\maketitle

\begin{abstract}
A ${\mathbb{Z}}_2{\mathbb{Z}}_4$-additive code ${\cal C}\subseteq{\mathbb{Z}}_2^\alpha\times{\mathbb{Z}}_4^\beta$ is called cyclic if the set of coordinates can be partitioned into two subsets, the set of ${\mathbb{Z}}_2$ coordinates and the set of ${\mathbb{Z}}_4$ coordinates, such that any cyclic shift of the coordinates of both subsets leaves the code invariant. Let $\Phi(\C)$ be the binary Gray map image of $\C$. We study the rank and the dimension of the kernel of a $\Z_2\Z_4 $-additive cyclic code $\C$, that is, the dimensions of the binary linear codes $\langle \Phi(\C) \rangle$ and $\ker(\Phi(\C))$. We give upper and lower bounds for these parameters. It is known that the codes $\langle \Phi(\C) \rangle$ and $\ker(\Phi(\C))$ are binary images of $\Z_2\Z_4$-additive codes $\R(\C)$ and $\K(\C)$, respectively. Moreover, we show that $\R(\C)$ and $\K(\C)$ are also cyclic and we determine the generator polynomials of these codes in terms of the generator polynomials of the code $\C$.
\end{abstract}
\textbf{Keywords} $\add$-additive cyclic codes, Gray map, kernel, rank.
\section{Introduction}

Denote by  $\Z_2$ and $\Z_4$ the rings of integers modulo 2 and modulo 4, respectively.  We denote the space of $n$-tuples over these rings as $\Z_2^n$ and $\Z_4^n$. A binary code is any non-empty subset $C$ of $\Z_2^n$, and if that subset is a vector space then we say that it is a linear code.  Any non-empty subset $\C$ of $\Z_4^n$ is a quaternary code and a submodule of $\Z_4^n$ is called a linear code over $\Z_4$.

In 1994, Hammons et al. discovered that some good non-linear binary codes can be seen as the Gray map images of linear codes over $\Z_4$, \cite{sole}. From then on, the study of codes over $\Z_4$ and other finite rings has been developing and the construction of Gray maps has been a topic of study.

In Delsarte's 1973 paper (see \cite{del}),  he defined additive codes as  subgroups of the underlying abelian
group in a translation association scheme. For the binary Hamming scheme, namely when the underlying abelian group is of order $2^{n}$, the only structures for the abelian group are those of the form $\Z_2^\alpha\times \Z_4^\beta$, with $\alpha+2\beta=n$. This means that  the subgroups $\C$ of $\Z_2^\alpha\times \Z_4^\beta$ are the only additive codes in a binary Hamming scheme.
Hence, the study of codes in $\Z_2^\alpha\times \Z_4^\beta$ became important. These codes are called $\add$-additive codes. In recent times, the structure and properties of $\add$-additive codes have been intensely studied (see \cite{AddDual}).

In \cite{Abu}, $\add$-additive cyclic codes were introduced and in \cite{Z2Z4CDual}, the duality was studied. In this paper, we study the rank and the kernel of $\add$-additive cyclic codes, taking into account the known results for general codes over $\Z_4$ and $\add$-additive codes \cite{Z4CyclicRK,Z4RK,Z2Z4RK}.

The paper is organized as follows. In Section \ref{preliminaries}, we recall the necessary concepts and properties on $\add$-additive codes and $\add$-additive cyclic codes. In Section \ref{rankkernel}, we give the main results of the paper about the rank and the kernel of $\add$-additive cyclic codes. We prove that both the binary span and the kernel are binary images of $\add$-additive cyclic codes, we determine the possible values of the dimensions of the corresponding binary images, and we compute the generator polynomials of these codes.

\section{Preliminaries}\label{preliminaries}
\subsection{$\add$-additive codes}

A ${\mathbb{Z}}_2{\mathbb{Z}}_4$-additive code ${\cal C}$ is a subgroup of
${\mathbb{Z}}_2^\alpha\times{\mathbb{Z}}_4^\beta$ (see \cite{AddDual}). Since ${\cal C}$ is a
subgroup of $\mathbb{Z}_2^\alpha\times\mathbb{Z}_4^\beta$, it is also isomorphic
to a commutative structure of the form $\mathbb{Z}_2^\gamma\times\mathbb{Z}_4^\delta$
and it has $|{\cal C}| = 2^{\gamma +2\delta}$ codewords.

Let $X$ (respectively $Y$) be the set of $\mathbb{Z}_2$ (respectively
$\mathbb{Z}_4$) coordinate positions, so $|X| =\alpha$ and $|Y| = \beta$. Unless
otherwise stated, the set $X$ corresponds to the first $\alpha$ coordinates and
$Y$ corresponds to the last $\beta$ coordinates. Let
$\C_X $ be the binary punctured code of ${\cal C}$ formed by deleting the
coordinates outside $X$. Define similarly the quaternary code $\C_Y$.

Let ${\cal C}_b$ be the subcode of ${\cal C}$ generated by all order two codewords and let $\kappa$ be the
dimension of $({\cal C}_b)_X$, which is a binary linear code. For the case $\alpha = 0$, we write $\kappa = 0$. With all these parameters, we say that a code ${\cal C}$ is of type $(\alpha, \beta; \gamma, \delta; \kappa).$

For a vector ${\bf u} \in \Z_2^\alpha\times\Z_4^\beta$ we write
${\bf u}=(u\mid u')$, where
$u=(u_0,\dots,u_{\alpha-1})\in\Z_2^\alpha$ and
$u'=(u'_0,\dots,u'_{\beta-1})\in\Z_4^\beta$.

In \cite{AddDual}, it is shown that  a $\add$-additive code is permutation
equivalent to a
$\add$-additive code with standard generator
matrix of the form:
\begin{equation}\label{eq:StandardForm}
\cG_S= \left ( \begin{array}{cc|ccc}
I_{\kappa} & T_b & 2T_2 & \zero & \zero\\
\zero & \zero & 2T_1 & 2I_{\gamma-\kappa} & \zero\\
\zero & S_b & S_q & R & I_{\delta} \end{array} \right ),
\end{equation} \noindent where $I_k$ is the identity matrix of size $k\times k$; $T_b, S_b$
are matrices over $\Z_2$;  $T_1, T_2, R$ are
matrices over $\Z_4$ with all entries in $\{0,1\}\subset\Z_4$;
and $S_q$ is a matrix over $\Z_4$.

A $\add$-additive code $\C$ is said to be separable if $\C = \C_X \times \C_Y$.  Otherwise the code is said to be non-separable.

Let $u'=(u'_0,\dots, u'_{n-1})$ be an element of $\mathbb{Z}_4^n$ such that $u'_i=\tilde{u}'_i + 2 \hat{u}'_i$, for $i=0, \dots, n-1$ and with $\tilde{u}'_i, \hat{u}'_i\in \{0,1\}.$ As in \cite{sole}, the \emph{Gray map} $\phi$ from $\mathbb{Z}_4^n$ to $\mathbb{Z}_2^{2n}$ is defined by
$$\phi({ u}')= (\hat{u}'_0,\dots, \hat{u}'_{n-1}, \tilde{u}'_0 + \hat{u}'_0,\dots, \tilde{u}'_{n-1}+\hat{u}'_{n-1} ).$$

The \emph{extended Gray map} $\Phi$ is the map from
$\mathbb{Z}_2^\alpha\times\mathbb{Z}_4^\beta$ to $\mathbb{Z}_2^{\alpha+2\beta}$
given by
$$\Phi({u}\mid {u'})= ({ u}\mid \phi({u'})).$$

\subsection{$\add$-additive cyclic codes}

Cyclic codes have been a primary area of study for coding theory, \cite{macwilliams}. Recently, the class of $\add$-additive cyclic codes has been defined in \cite{Abu}.

Let ${\bf u} = (u \mid u' ) \in \Z_2^\alpha \times\Z_4^\beta$, then the cyclic shift $\pi$ is given by
$\pi({\bf u}) = (\pi (u) \mid \pi(u')) $ where $\pi (u) = \pi ( u_0,u_1,\dots,u_{\alpha-1}) = (u_{\alpha-1},u_0,u_1,\dots,u_{\alpha-2})$ and $\pi (u') = (u'_{\beta-1},u'_0,u'_1,\dots,u'_{\beta-2})$ .
We say that a $\add$-additive code $\C$ is cyclic if $\pi(\C) = \C.$

There exists a bijection between $\Z_2^\alpha \times\Z_4^\beta$ and $R_{\alpha,\beta}=\mathbb{Z}_2[x]/(x^\alpha-1)\times\mathbb{Z}_4[x]/(x^\beta-1)$ given by:
\begin{align*}
(u_0, u_1,\dots, u_{\alpha-1}\mid  u'_0,\dots, u'_{\beta-1})&\mapsto\\ (u_0+ u_1x&+\dots +  u_{\alpha-1}x^{\alpha-1}\mid  u'_0+\dots +u'_{\beta-1}x^{\beta-1}).
\end{align*}
Therefore, as usual in the study of cyclic codes, any codeword is identified as a vector or as a polynomial.

From now on, the binary reduction of a polynomial $p(x) \in \mathbb{Z}_4[x]$ will be denoted by $\tilde{p}(x) $.
Let $p (x) \in \Z_4 [x]$ and $ (b(x)\mid a(x)) \in R_{\alpha, \beta}$ and consider the following multiplication $p(x)\star(b(x)\mid a(x))=(\tilde{p}(x)b(x)\mid p(x)a(x)).$ From \cite{Abu}, $R_{\alpha, \beta}$ is a $\Z_4 [x]$-module with respect to this multiplication.

Let ${ u}'(x) =\tilde{{ u}}'(x) + 2\hat{u}'(x) $ be the polynomial representation of ${ u}'\in\mathbb{Z}_4^n$.
Then, the polynomial version of the Gray map is $\phi({ u}'(x) )= (\hat{u}'(x) , \tilde{ u}'(x) +
\hat{u}'(x) ).$ In the following, a polynomial $p (x) \in \Z_2 [x]$ or $\Z_4 [x]$ will be denoted simply by $p $.

Using the polynomial representation, an equivalent definition of $\add$-additive cyclic codes is the following.

\begin{definition}[\cite{Abu}]
	A subset $\C\subseteq R_{\alpha,\beta}$ is called a $\add$-additive cyclic code if $\C$ is a $\Z_4[x]$-submodule of $R_{\alpha,\beta}$.
\end{definition}

From \cite{Abu}, if $\beta$ is odd, we know that if $\C$ is a $\add$-additive cyclic code then it is of the form
\begin{equation} \label{form}
\langle (b\mid { 0}), (\ell \mid fh +2f) \rangle,
\end{equation}
where $fhg = x^\beta -1$ in $\Z_4[x]$, $b$ divides $x^\alpha-1$ in $\Z_2[x]$, and we can assume that $deg(\ell) < deg(b).$ The polynomials satisfying these conditions are said to be in standard form. In this case, we have that $|C| = 2^{\alpha - deg(b)} 4^{deg(g) } 2^{deg(h)}.$ From now on, we assume that $\beta$ is odd.  Then $f$, $g$ and $h$ are pairwise coprime polynomials. Since $h$ and $g$ are
coprime, there exist polynomials $\lambda$ and $\mu$, that will be used later
along the paper, such that
\begin{equation}\label{eq:lambda-mu}
\lambda h+\mu g=1.
\end{equation}

\begin{lemma}[{\cite[Corollary 2]{Z2Z4CDual}}]\label{bdiviXSgcd}
	Let ${\cal C}$ be a ${\mathbb{Z}_2 {\mathbb{Z}_4}}$-additive cyclic code of type $(\alpha, \beta; \gamma , \delta; \kappa)$ with ${\cal C} = \langle (b \mid { 0}), (\ell  \mid  f h  +2f ) \rangle$. Then, $b $ divides $\frac{x^\beta -1}{\tilde{f} } \gcd(b ,\ell )$ and $b $ divides $\tilde{h}  \gcd(b ,\ell \tilde{g} ).$
\end{lemma}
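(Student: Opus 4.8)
The plan is to reduce both divisibility claims to a single structural fact about the standard form: a codeword of the shape $(c\mid 0)$ lies in $\C$ if and only if $b$ divides $c$; equivalently, the binary parts of those codewords of $\C$ that vanish on the $\Z_4$ coordinates form exactly the ideal $\langle b\rangle$ of $\Z_2[x]/(x^\alpha-1)$. Granting this, it suffices, for each claim, to exhibit a codeword $(c\mid 0)\in\C$ whose binary part $c$ is precisely the polynomial that $b$ is asserted to divide; the divisibility then follows at once, and because $b$ divides $x^\alpha-1$ the conclusion is a genuine divisibility in $\Z_2[x]$, not merely one modulo $x^\alpha-1$.

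For the first claim I would start from a B\'ezout identity $\gcd(b,\ell)=sb+t\ell$ in $\Z_2[x]$ and lift $s,t$ to $S,T\in\Z_4[x]$ with binary reductions $s,t$. Then $S\star(b\mid 0)+T\star(\ell\mid fh+2f)=(\gcd(b,\ell)\mid Tf(h+2))$ is a codeword, and multiplying it by $gh$ kills the $\Z_4$ part, since $Tf(h+2)\cdot gh=T(h+2)\,fgh=T(h+2)(x^\beta-1)\equiv 0$, leaving $(\,\tilde g\tilde h\,\gcd(b,\ell)\mid 0)\in\C$. As binary reduction is a ring homomorphism, $\tilde f\tilde g\tilde h=x^\beta-1$ in $\Z_2[x]$, so $\tilde g\tilde h=\frac{x^\beta-1}{\tilde f}$, and the structural fact gives $b\mid\frac{x^\beta-1}{\tilde f}\gcd(b,\ell)$.

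For the second claim the idea is the same, but I first manufacture a codeword whose binary part carries a factor $\tilde g$. Multiplying the second generator by $g$ yields $g\star(\ell\mid fh+2f)=(\ell\tilde g\mid 2fg)$, using $gfh=x^\beta-1\equiv 0$. With a B\'ezout identity $\gcd(b,\ell\tilde g)=s'b+t'\ell\tilde g$ and lifts $S',T'$, I form $S'\star(b\mid 0)+T'\star(\ell\tilde g\mid 2fg)=(\gcd(b,\ell\tilde g)\mid 2t'fg)$ and then multiply by $h$; since $2t'fgh=2t'(x^\beta-1)\equiv 0$, the $\Z_4$ component vanishes and we obtain $(\,\tilde h\,\gcd(b,\ell\tilde g)\mid 0)\in\C$, whence $b\mid\tilde h\,\gcd(b,\ell\tilde g)$.

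The manipulations above, namely clearing the $\Z_4$ component by multiplying by $g$, $h$, or $gh$ and realizing the gcds through B\'ezout lifts, are mechanical; the genuine content is the structural fact invoked at the start, which I expect to be the main obstacle. I would derive it from the construction of the standard form in \cite{Abu}: here $b$ is by definition the generator of the kernel of the projection $\C\to\Z_4[x]/(x^\beta-1)$ onto the $\Z_4$ coordinates, which for $\beta$ odd is the principal ideal $\langle f(h+2)\rangle$. Reproving the nontrivial inclusion $\{\,c:(c\mid 0)\in\C\,\}\subseteq\langle b\rangle$ directly would use that $\beta$ is odd, so $f,g,h$ are pairwise coprime and $x^\beta-1$ is separable over $\Z_4$; via the CRT splitting $\Z_4[x]/(x^\beta-1)\cong\Z_4[x]/(f)\times\Z_4[x]/(g)\times\Z_4[x]/(h)$ one checks that $Q\,f(h+2)\equiv 0$ forces $g\mid Q$ and $\tilde h\mid\tilde Q$, which pins the admissible binary parts down to multiples of $b$. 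I would, however, simply cite this characterization of $b$ from \cite{Abu} and spend the proof on the two explicit constructions.
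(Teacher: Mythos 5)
Your argument is correct and is essentially the proof given for this statement in its source (the paper itself imports the lemma from \cite[Corollary~2]{Z2Z4CDual} without proof): one multiplies the second generator by $gh$ to land the codeword $(\tilde{g}\tilde{h}\ell\mid 0)\in\C$, invokes the defining property of $b$ in the standard form to get $b\mid \tilde{g}\tilde{h}\ell$ and $b\mid \tilde{h}(\ell\tilde{g})$, and concludes via $\gcd$; your B\'ezout lifts are a harmless extra step, since $b\mid pb$ and $b\mid p\ell$ already give $b\mid p\gcd(b,\ell)$. One caution: your sketched ``direct'' derivation of the structural fact $\{c:(c\mid 0)\in\C\}=\langle b\rangle$ from the two-generator presentation is latently circular, because showing that $Qf(h+2)\equiv 0$ forces $b\mid\tilde{Q}\ell$ is exactly the first divisibility claim of the lemma; you are right to instead take this fact as the definition of $b$ in the standard form of \cite{Abu}, which is what makes the whole argument non-circular.
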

%

We can put the generator matrix (\ref{eq:StandardForm}) in the following form, called the standard form:
\begin{equation} \label{gen}
\left(\begin{array}{ccc|ccc}
I_{\kappa_1} & T& T_{b_1}& {\bf 0}&{\bf 0}&{\bf 0} \\
{\bf 0} & I_{\kappa_2} & T_{b_2} & 2T_2 & {\bf 0} & {\bf 0}  \\
{\bf 0} & {\bf 0} & {\bf 0} & 2T_1 & 2 I_{\gamma - (\kappa_1 + \kappa_2)}& {\bf 0} \\
{\bf 0} & {\bf 0} & S' & S  & R& I_{\delta}
\end{array}\right).
\end{equation}

The next theorem relates the parameters of the type of a
$\add$-additive code to its generator polynomials.

\begin{theorem}[{\cite[Theorem 5 and Proposition 6]{Z2Z4CDual}}]\label{TypeDependingDeg}
	Let $\C$ be a $\add$-additive cyclic code of type $(\alpha, \beta; \gamma , \delta; \kappa)$ with $\C = \langle (b\mid { 0}), (\ell \mid fh +2f) \rangle ,$ where $fhg = x^\beta -1.$
	Then
	\begin{align*}
	\gamma &= \alpha -\deg(b)+\deg(h),\\
	\delta &= \deg(g),\\
	\kappa &= \alpha -\deg(\gcd(\ell \tilde{g}, b))
	\end{align*}
	and
	$$\kappa_1= \alpha -\deg(b),\quad \kappa_2= \deg(b)-\deg(\gcd(b, \ell \tilde{g})),$$
	$$\delta_1= \deg(\gcd(b,\ell \tilde{g})) - \deg(\gcd(b,\ell)) \mbox{ and }
	\delta_2=\deg(g)-\delta_1.$$
\end{theorem}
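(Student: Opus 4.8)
The plan is to read off each parameter from a cardinality or dimension count, using the generating set $\C=\langle(b\mid\zero),(\ell\mid fh+2f)\rangle$ together with Lemma~\ref{bdiviXSgcd}. Recall that for any abelian group of type $\Z_2^\gamma\times\Z_4^\delta$ one has $|\C|=2^{\gamma+2\delta}$ and $|2\C|=2^\delta$, so $\gamma$ and $\delta$ are determined once $|\C|$ and $|2\C|$ are known.

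First I would compute $2\C$. Since $2(b\mid\zero)=(\zero\mid\zero)$ and $2(\ell\mid fh+2f)=(\zero\mid 2fh)$, the module $2\C$ is generated by $(\zero\mid 2fh)$; its quaternary coordinates run through $2\langle\tilde f\tilde h\rangle$ inside $\Z_4[x]/(x^\beta-1)$, a group isomorphic to the binary cyclic code $\langle\tilde f\tilde h\rangle\subseteq\Z_2[x]/(x^\beta-1)$ of dimension $\beta-\deg(\tilde f\tilde h)=\deg(g)$. Hence $\delta=\deg(g)$. Combining this with the known cardinality $|\C|=2^{\alpha-\deg(b)+\deg(h)+2\deg(g)}=2^{\gamma+2\delta}$ gives $\gamma=\alpha-\deg(b)+\deg(h)$, as claimed.

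Next I would handle $\kappa$. A codeword $q\star(\ell\mid fh+2f)+p\star(b\mid\zero)=(\tilde pb+\tilde q\ell\mid q(fh+2f))$ has order at most two exactly when $2q(fh+2f)=2qfh\equiv 0\pmod{x^\beta-1}$, that is, when $\tilde g\mid\tilde q$ in $\Z_2[x]/(x^\beta-1)$; thus $\C_b$ is the group of all such codewords. Projecting onto $X$, the generator $(b\mid\zero)$ contributes the ideal $\langle b\rangle$ and the choice $q=g$ contributes $\tilde g\ell$, so $\langle\gcd(b,\ell\tilde g)\rangle\subseteq(\C_b)_X$. The reverse inclusion is the delicate point: an arbitrary $\tilde q$ with $\tilde g\mid\tilde q$ only satisfies the divisibility modulo $x^\beta-1$, so $\tilde q\ell$ reduced modulo $x^\alpha-1$ must still be shown to lie in $\langle\gcd(b,\ell\tilde g)\rangle$. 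This is exactly where Lemma~\ref{bdiviXSgcd} enters: the relation $b\mid\tilde h\,\gcd(b,\ell\tilde g)$ forces the extra terms coming from the reduction modulo $x^\beta-1$ back into $\langle\gcd(b,\ell\tilde g)\rangle$. Granting this, $(\C_b)_X=\langle\gcd(b,\ell\tilde g)\rangle$ has dimension $\alpha-\deg(\gcd(b,\ell\tilde g))$, which is $\kappa$.

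Finally, the refined parameters follow from the same dimension counts applied to the standard form~(\ref{gen}). The full binary projection $\C_X$ equals $\langle b,\ell\rangle=\langle\gcd(b,\ell)\rangle$, so $\dim\C_X=\alpha-\deg(\gcd(b,\ell))$; reading off the binary columns of~(\ref{gen}) gives $\dim\C_X=\kappa_1+\kappa_2+\rank(S')=\kappa+\delta_1$, whence $\delta_1=\deg(\gcd(b,\ell\tilde g))-\deg(\gcd(b,\ell))$ and $\delta_2=\delta-\delta_1=\deg(g)-\delta_1$. For $\kappa_1$, the codewords with zero quaternary part are those surviving the additional constraint $\tilde h\mid\tilde q_1$ (writing $q=gq_1+2q_2$); projecting them onto $X$ and again invoking the relation $b\mid\frac{x^\beta-1}{\tilde f}\gcd(b,\ell)$ of Lemma~\ref{bdiviXSgcd} shows this projection is exactly $\langle b\rangle$, so $\kappa_1=\alpha-\deg(b)$ and $\kappa_2=\kappa-\kappa_1=\deg(b)-\deg(\gcd(b,\ell\tilde g))$. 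The main obstacle throughout is the mismatch between the two moduli $x^\alpha-1$ and $x^\beta-1$ governing the binary and quaternary coordinates; controlling the $X$-projections of the torsion subcodes is precisely what Lemma~\ref{bdiviXSgcd} is tailored to resolve.
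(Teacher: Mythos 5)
The paper never proves this theorem --- it is imported verbatim from \cite[Theorem 5 and Proposition 6]{Z2Z4CDual} --- so there is no in-paper argument to compare against; I can only assess your reconstruction, which is essentially correct as a sketch. The counting scheme ($|\C|=2^{\gamma+2\delta}$ and $|2\C|=2^{\delta}$, with $2\C=\langle(\zero\mid 2fh)\rangle\cong\langle\tilde f\tilde h\rangle$ of dimension $\deg(g)$, then $\gamma$ from the stated cardinality $2^{\alpha-\deg(b)+\deg(h)+2\deg(g)}$) is sound, as are the identifications $(\C_b)_X=\langle\gcd(b,\ell\tilde g)\rangle$ and $\C_X=\langle\gcd(b,\ell)\rangle$ and the bookkeeping that extracts $\kappa_1,\kappa_2,\delta_1,\delta_2$ from the standard form~(\ref{gen}). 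One diagnosis is off, though harmlessly so: the ``delicate point'' you flag in the $\kappa$ step does not exist. The order-two condition $2qfh\equiv 0\pmod{x^{\beta}-1}$ reads $\tilde f\tilde h\tilde g\mid\tilde f\tilde h\tilde q$ in the UFD $\Z_2[x]$, so $\tilde g\mid\tilde q$ as honest polynomials, not merely modulo $x^{\beta}-1$; hence $\tilde q\ell\in\langle\tilde g\ell\rangle$ directly and Lemma~\ref{bdiviXSgcd} is not needed for the reverse inclusion there. Where that lemma genuinely is indispensable is exactly the other place you invoke it: a codeword with zero quaternary part forces $\tilde g\tilde h\mid\tilde q$, so the $X$-projection of that subcode is $\langle\gcd(b,\tilde g\tilde h\ell)\rangle$, and one needs $b\mid\tilde h\tilde g\gcd(b,\ell)\mid\tilde h\tilde g\ell$ from the lemma to conclude this is exactly $\langle b\rangle$ and hence $\kappa_1=\alpha-\deg(b)$. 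The remaining looseness is of the ``sketch'' variety (e.g., you assert rather than verify that $\delta_1=\rank(S')$ in the cited standard form), but the route is viable and, absent the source proof, a reasonable reconstruction.
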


It is well known that if $\C$ is a $\add$-additive code, then the
$\add$-linear code $C=\Phi(\C)$ is not linear in general. The linearity
of these codes was studied in \cite{Z2Z4RK}. The key to establish this
linearity was the fact that
\begin{equation} \label{equation:linear}
\Phi(\vv+\vw) =
\Phi(\vv) + \Phi(\vw) + \Phi(2\vv*\vw),
\end{equation}
where $*$ denotes the component-wise product. It follows immediately that $\Phi(\C)$ is
linear if and only if $2 {\bf u} * \vv \in \C$, for all ${\bf u}, \vv\in\C$.

It is shown in \cite{Z2Z4Image} that, for a $\add$-additive cyclic code $\C$, $\C_X$
is a binary cyclic code and $\C_Y$ is a linear cyclic code over $\Z_4$.
Moreover, if $\Phi(\C)$ is linear, then $\phi(\C_Y)$ is also linear but the converse
is not true in general. The characterization of linear cyclic codes over $\Z_4$
of odd length whose Gray map images are linear binary codes was given in
\cite{wolfmann}. Let ${p} $ be a divisor of $x^n-1$ in $\mathbb{Z}_2[x]$ with $n$ odd and let $\xi$ be a primitive $n$th root of unity over $\mathbb{Z}_2$. The polynomial $({p}\otimes{p}) $ is defined as the divisor of $x^n-1$ in $\mathbb{Z}_2[x]$ whose roots are the products $\xi^i\xi^j$ such that $\xi^i$ and $\xi^j$ are roots of ${p} .$

\begin{theorem}[{\cite[Theorem 20]{wolfmann}}]\label{Phi(CY)Linear}
	Let ${\cal D}=\langle f h  +2f  \rangle$ be a ${{\mathbb{Z}_4}}$-additive
	cyclic
	code of odd length $n$ and where $f h g  = x^n -1.$ The following properties
	are
	equivalent.
	\begin{enumerate}
		\item $\gcd(\tilde{f} , (\tilde{g}\otimes\tilde{g}) )=1$ in $\mathbb{Z}_2[x]$;
		\item $\phi ({\cal D})$ is a binary linear code of length $2n$.
	\end{enumerate}
\end{theorem}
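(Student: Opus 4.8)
The plan is to translate linearity of $\phi(\D)$ into a purely combinatorial condition on the zeros of the generator polynomials and then recognize that condition as $\gcd(\tilde f,\tilde g\otimes\tilde g)=1$. By~\eqref{equation:linear}, applied with $\alpha=0$ so that $\Phi=\phi$, the code $\phi(\D)$ is linear if and only if $2\vu*\vv\in\D$ for all $\vu,\vv\in\D$. First I would simplify this product: reducing modulo $4$ componentwise gives $2\vu*\vv=2(\tilde\vu*\tilde\vv)$, where $\tilde\vu,\tilde\vv$ are the binary reductions and the inner $*$ is the componentwise product of binary vectors. As $\vu$ ranges over $\D$, its reduction $\tilde\vu$ ranges over the residue code $\mathrm{Res}(\D)=\langle\tilde f\tilde h\rangle$, and the two reductions may be chosen independently; moreover a binary vector $w$ satisfies $2w\in\D$ exactly when $w$ lies in the torsion code $\mathrm{Tor}(\D)=\langle\tilde f\rangle$. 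Both identities follow from $\tilde f\tilde h\tilde g=x^n-1$ together with $fgh\equiv0$ modulo $x^n-1$ and the pairwise coprimality of $f,g,h$ for odd $n$. Since $\mathrm{Tor}(\D)$ is linear, the criterion collapses to the single inclusion
\begin{equation}\label{eq:schur}
\langle\tilde f\tilde h\rangle * \langle\tilde f\tilde h\rangle \subseteq \langle\tilde f\rangle .
\end{equation}

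Next I would analyze \eqref{eq:schur} in the spectral domain. Let $\xi$ be a primitive $n$th root of unity and, for a binary cyclic code, call \emph{spectral support} the set of exponents $j$ at which a codeword's Mattson--Solomon (discrete Fourier) transform may be nonzero; these are exactly the exponents of the nonzeros of the code. The nonzeros of $\langle\tilde f\tilde h\rangle$ are the roots of $\tilde g$, with exponent set $G$, while the nonzeros of $\langle\tilde f\rangle$ are the roots of $\tilde g\tilde h$, with exponent set $G\cup H$, where $H$ and $F$ denote the exponent sets of the roots of $\tilde h$ and $\tilde f$. The crucial fact is that the componentwise product corresponds to cyclic convolution of transforms, so the spectral support of any $a*b$ is contained in the sumset $G+G=\{\,i+j \bmod n : i,j\in G\,\}$; and by the very definition of the $\otimes$ operation, $G+G$ is precisely the exponent set of the roots of $\tilde g\otimes\tilde g$.

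Both implications then follow. If $\gcd(\tilde f,\tilde g\otimes\tilde g)=1$, then $(G+G)\cap F=\emptyset$, so every product $a*b$ with $a,b\in\langle\tilde f\tilde h\rangle$ has spectral support inside $G\cup H$, hence vanishes at every root of $\tilde f$; this yields \eqref{eq:schur} and therefore linearity. Conversely, if the gcd is nontrivial there is a common exponent $k\in(G+G)\cap F$, and I would exhibit codewords $a,b\in\langle\tilde f\tilde h\rangle$ whose convolution has a nonzero coefficient at $k$, so that $a*b\notin\langle\tilde f\rangle$ and \eqref{eq:schur} fails. I expect this converse step to be the main obstacle: over $\FF_2$ the transform takes values in an extension field and must be invariant under Frobenius, so spectral coefficients cannot be prescribed one exponent at a time. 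The rigorous argument must therefore work with whole cyclotomic cosets, selecting $a$ and $b$ (for instance through the idempotent generators of the two codes) so that the convolution sum $\sum_{i+j\equiv k}\hat a_i\hat b_j$ is provably nonzero; this is precisely where the odd length of $n$ and the coprimality of $f,g,h$ are genuinely used.
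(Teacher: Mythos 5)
First, note that the paper does not prove this statement at all: it is imported verbatim as \cite[Theorem 20]{wolfmann}, so there is no internal proof to compare against. Your reconstruction follows the same route as Wolfmann's original argument: reduce linearity of $\phi(\D)$ via $\Phi(\vv+\vw)=\Phi(\vv)+\Phi(\vw)+\Phi(2\vv*\vw)$ to the Schur-product inclusion $\langle\tilde f\tilde h\rangle*\langle\tilde f\tilde h\rangle\subseteq\langle\tilde f\rangle$ between the residue and torsion codes, and then read that inclusion off the Mattson--Solomon spectra, where the sumset $G+G$ of the nonzeros is by construction the zero set of $\tilde g\otimes\tilde g$. This part, and hence the implication $(1)\Rightarrow(2)$, is essentially complete and correct.

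The gap is the one you yourself flag: the implication $(2)\Rightarrow(1)$ is only a plan, not a proof. You need that for every $k\in(G+G)\cap F$ there exist $a,b\in\langle\tilde f\tilde h\rangle$ with $\widehat{a*b}_k\neq 0$; equivalently, that the linear span of $\langle\tilde f\tilde h\rangle*\langle\tilde f\tilde h\rangle$ is the binary cyclic code with nonzero set exactly $G+G$, not something smaller. Over $\mathbb{F}_2$ this is genuinely delicate: writing codewords of the irreducible components as trace functions $t\mapsto\mathrm{Tr}(\lambda\xi^{it})$, the product $\mathrm{Tr}(\lambda\xi^{it})\,\mathrm{Tr}(\mu\xi^{jt})$ expands into spectral contributions at all exponents $2^ai+2^bj$, and the coefficient at $i+j$ can a priori be cancelled by other terms of the expansion landing in the same cyclotomic coset; one must choose $\lambda,\mu$ (or argue via idempotents) to rule this out. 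Without that lemma the converse does not follow, so as written the proposal establishes only one of the two implications. If you want to close it, either prove the sumset lemma for Schur products of binary cyclic codes of odd length directly, or simply cite Wolfmann's Theorem 20 as the paper does.
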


This result was generalized for $\add$-additive cyclic codes of type
$(\alpha,\beta;\gamma,\delta;\kappa)$ with $\beta$ odd.

\begin{theorem}[\cite{Z2Z4Image}]\label{Phi(C)Linear}
	Let ${\cal C}=\langle (b \mid 0), (\ell \mid f h  +2f ) \rangle$ be a
	$\add$-additive cyclic code of length $\alpha+\beta$, $\beta$ odd, and where $f
	h g  = x^\beta -1.$ The following properties are equivalent.
	\begin{enumerate}
		\item $\gcd(\frac{\tilde{f}b}{\gcd(b, \ell \tilde{g})},
		(\tilde{g}\otimes\tilde{g}) )=1$ in $\mathbb{Z}_2[x]$;
		\item $\Phi ({\cal C})$ is a binary linear code of length $\alpha + 2\beta$.
	\end{enumerate}
\end{theorem}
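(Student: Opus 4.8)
The plan is to reduce the linearity of $\Phi(\C)$ to a divisibility relation between two binary cyclic codes of length $\beta$, and then to translate that relation into the spectral (Mattson--Solomon) domain using the $\otimes$ operation, exactly as in Theorem~\ref{Phi(CY)Linear} but with the binary block taken into account. By (\ref{equation:linear}), $\Phi(\C)$ is linear if and only if $2\vu*\vv\in\C$ for all $\vu,\vv\in\C$. Writing $\vu=(u\mid u')$ and $\vv=(v\mid v')$, the $\Z_2$-coordinates of $2\vu*\vv$ vanish, so $2\vu*\vv=(\zero\mid 2u'*v')$ and the condition depends only on the $\Z_4$-parts, which run over $\C_Y=\D=\langle fh+2f\rangle$. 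Since $2u'*v'=2(\tilde{u}'*\tilde{v}')$ and $\tilde{u}'$ runs over the residue code $\langle\tilde{f}\tilde{h}\rangle=\langle\frac{x^\beta-1}{\tilde{g}}\rangle$, the linearity of $\Phi(\C)$ is equivalent to the requirement that $(\zero\mid 2(a*b))\in\C$ for all $a,b\in\langle\tilde{f}\tilde{h}\rangle$.

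Next I would introduce the binary cyclic code $T=\{s\in\Z_2^\beta:(\zero\mid 2s)\in\C\}$ of length $\beta$, so that the condition becomes $a*b\in T$ for all $a,b\in\langle\tilde{f}\tilde{h}\rangle$, and the crux is to compute the generator polynomial of $T$. Using the standard form (\ref{gen}), the codewords of $\C$ having zero $\Z_2$-part and order-two $\Z_4$-part are precisely the combinations of the third block row and of $2$ times the fourth block row; the identity blocks then give $\dim T=(\gamma-\kappa_1-\kappa_2)+\delta$. Substituting $\gamma,\delta,\kappa_1,\kappa_2$ from Theorem~\ref{TypeDependingDeg} yields $\deg t=\beta-\dim T=\deg f+\deg b-\deg\gcd(b,\ell\tilde{g})$ for the generator $t$ of $T$. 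This coincides with the degree of $N:=\frac{\tilde{f}b}{\gcd(b,\ell\tilde{g})}$, and by Lemma~\ref{bdiviXSgcd} the factor $\frac{b}{\gcd(b,\ell\tilde{g})}$ divides $\tilde{h}$, so $N$ divides $\tilde{f}\tilde{h}$ and hence $x^\beta-1$; a short argument comparing root sets then identifies $T=\langle N\rangle$.

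Finally I would pass to the spectral domain. Codewords of $\langle\tilde{f}\tilde{h}\rangle=\langle\frac{x^\beta-1}{\tilde{g}}\rangle$ have spectral support contained in the root set of $\tilde{g}$, and by the convolution property the product $a*b$ has support contained in the set of products of two roots of $\tilde{g}$, which by definition is the root set of $\tilde{g}\otimes\tilde{g}$. Hence the products $a*b$ span the cyclic code $\langle\frac{x^\beta-1}{\tilde{g}\otimes\tilde{g}}\rangle$, and $a*b\in T=\langle N\rangle$ for all $a,b$ holds exactly when $N\mid\frac{x^\beta-1}{\tilde{g}\otimes\tilde{g}}$. Because $N\mid x^\beta-1$ and $\beta$ is odd (so $x^\beta-1$ is squarefree), this divisibility is equivalent to $\gcd(N,\tilde{g}\otimes\tilde{g})=1$, which is condition~(1); this closes the equivalence.

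I expect the main obstacle to be Step two, the exact determination of $t$: matching degrees through Theorem~\ref{TypeDependingDeg} only gives $\deg t=\deg N$, and upgrading this to the identity $T=\langle N\rangle$ requires the divisibility relations of Lemma~\ref{bdiviXSgcd}, which are precisely what couples the $\Z_2$-block polynomial $b$ to the $\Z_4$-block polynomials $f,g,h$ across the two different block lengths $\alpha$ and $\beta$. A secondary subtlety, needed only for the implication $(2)\Rightarrow(1)$, is to check that the products $a*b$ actually realize the full spectral support attached to $\tilde{g}\otimes\tilde{g}$ rather than a proper subcode; this is the same spanning argument that underlies Theorem~\ref{Phi(CY)Linear}.
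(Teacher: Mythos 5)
This theorem is quoted from \cite{Z2Z4Image}; the present paper gives no proof of it, so there is no internal argument to compare yours against. Your outline is nonetheless correct and follows the route one would expect (and that the cited source essentially takes): reduce linearity to the condition $2\vu*\vv\in\C$ via (\ref{equation:linear}), note that only the $\Z_4$-block survives and that the mod-two reductions range over $\langle\tilde{f}\tilde{h}\rangle$, identify the binary cyclic code $T=\{s:(\zero\mid 2s)\in\C\}$, and finish with the spectral argument behind Theorem~\ref{Phi(CY)Linear}. The one place your sketch is thinner than it should be is the identification $T=\langle N\rangle$ with $N=\tilde{f}b/\gcd(b,\ell\tilde{g})$: knowing $\deg t=\deg N$ and $N\mid\tilde{f}\tilde{h}$ does not pin down the generator $t$, since $\tilde{f}\tilde{h}$ may have several distinct divisors of that degree. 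You need the containment $\langle N\rangle\subseteq T$, which does hold: writing $B=b/\gcd(b,\ell\tilde{g})$ and using $\lambda h+\mu g=1$ from (\ref{eq:lambda-mu}), one has $\tilde{f}B=B\tilde{\mu}\,\tilde{f}\tilde{g}+B\tilde{\lambda}\,\tilde{f}\tilde{g}\tilde{h}/\tilde{g}$, i.e. $\tilde{f}B=B\tilde{\mu}\,\tilde{f}\tilde{g}+B\tilde{\lambda}\,\tilde{f}\tilde{h}$, and the corresponding $\Z_4[x]$-combination of the generator $(\ell\mid fh+2f)$ has order-two $Y$-part equal to $2\tilde{f}B$ and $X$-part $B\tilde{\mu}\tilde{g}\ell$, which lies in $\langle b\rangle$ because $b$ divides $B\ell\tilde{g}$ by construction. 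Combined with your dimension count this gives $T=\langle N\rangle$, and the remaining steps (convolution of Mattson--Solomon spectra, squarefreeness of $x^\beta-1$ for $\beta$ odd, and the spanning claim for the products $a*b$ inherited from Wolfmann) are sound.
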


As a result, it is completely characterized when a $\add$-additive code $\C$
has binary linear image under the Gray map, just considering its generator
polynomials. The next step is study the rank and the dimension of the
kernel for those codes whose image $\Phi(\C)$ is not linear.

\section{Kernel and Rank of $\add$-Additive Cyclic Codes}\label{rankkernel}

For an additive code $\C\subseteq\Z_2^\alpha\times\Z_4^\beta$, the kernel of $\Phi(\C)$ is defined as $\ker(\Phi(\C)) = \{ v\in \Z_2^{\alpha+2\beta}
\ \mid v + \Phi(\C) = \Phi(\C)\}$. Define $\K(\C) = \{ \vv \in
\Z_2^\alpha\times \Z_4^\beta\mid \Phi(\vv) \in \ker(\Phi(\C))\}$.
Let $\rank(\Phi(\C)) = \dim(\langle \Phi(\C) \rangle)$ and $\R(\C) = \{ \vv \mid \ \vv \in \Z_2^\alpha\times \Z_4^\beta, \phi(\vv) \in
\langle \Phi(\C) \rangle \}. $  It is clear that $\K(\C) \subseteq \C \subseteq \R(\C).$

It is known that if $\C$ is a $\add$-additive code, then
$\langle\Phi(\C)\rangle$ and $\ker(\Phi(\C))$ are both $\add$-linear codes
(\cite{Z2Z4RK}).
Therefore, $\R(\C)$ and $\K(\C)$ are both $\add$-additive codes. In the next
sections we will see that if the code $\C$ is a $\add$-additive cyclic code,
then $\R(\C)$ and $\K(\C)$ are also $\add$-additive cyclic. Therefore, the
following proposition will be useful to relate the generator polynomials of
$\C$ to the generator polynomials of $\R(\C)$ and $\K(\C)$.


\begin{proposition}\label{f'_div_f}
	Let $\C_0=\langle  (b\mid 0), (\ell\mid fh +2f )\rangle$ and $\C_1=\langle
	(b'\mid
	0), (\ell'\mid f'h' +2f') \rangle$ be $\add$-additive cyclic codes with
	$\C_0\subseteq \C_1$. Then
	\begin{enumerate}
		\item $f'$ divides $f$;
		\item $\gcd(b',\ell')$ divides $\gcd(b,\ell)$.
	\end{enumerate}
\end{proposition}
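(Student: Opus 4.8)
The plan is to reduce both statements to the monotonicity of the two coordinate projections under code inclusion. Since $\C_0\subseteq\C_1$, puncturing commutes with inclusion, so $(\C_0)_X\subseteq(\C_1)_X$ and $(\C_0)_Y\subseteq(\C_1)_Y$. By the results recalled in Section~\ref{preliminaries}, $(\C_i)_X$ is a binary cyclic code and $(\C_i)_Y$ is a linear cyclic code over $\Z_4$. The idea is to read off $\gcd(b,\ell)$ from the first projection and $f$ from the second, show that each is the generator polynomial of a suitable (binary cyclic) invariant of that projection, and then invoke the standard fact that, for ideals of $\Z_2[x]/(x^\alpha-1)$ (resp. $\Z_2[x]/(x^\beta-1)$), inclusion of codes is equivalent to reverse divisibility of generator polynomials.

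For part~2, I would compute $(\C_i)_X$ explicitly. A generic codeword of $\C_0$ has the form $p\star(b\mid 0)+q\star(\ell\mid fh+2f)=(\tilde{p}\,b+\tilde{q}\,\ell\mid q(fh+2f))$, so its $X$-part is $\tilde{p}\,b+\tilde{q}\,\ell$. As $p,q$ range over $\Z_4[x]$, the reductions $\tilde{p},\tilde{q}$ range over all of $\Z_2[x]$, whence $(\C_0)_X=\langle b\rangle+\langle\ell\rangle=\langle\gcd(b,\ell)\rangle$, and likewise $(\C_1)_X=\langle\gcd(b',\ell')\rangle$. The inclusion $\langle\gcd(b,\ell)\rangle\subseteq\langle\gcd(b',\ell')\rangle$ then forces $\gcd(b',\ell')\mid\gcd(b,\ell)$, which is exactly (2).

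For part~1, I would isolate $f$ from $(\C_i)_Y=\langle fh+2f\rangle$ by looking at its order-two part $(\C_i)_Y\cap 2\Z_4^\beta$. Working in the decomposition of $\Z_4[x]/(x^\beta-1)$ into local rings indexed by the irreducible factors of $x^\beta-1$ (available since $\beta$ is odd and $f,h,g$ are pairwise coprime), the generator $fh+2f$ vanishes on the factors dividing $f$, equals $2\cdot(\text{unit})$ on those dividing $h$, and is a unit on those dividing $g$. Hence an element $2w\in 2\Z_4^\beta$ lies in $\langle fh+2f\rangle$ exactly when $\tilde f$ divides the binary polynomial $w$, so $(\C_0)_Y\cap 2\Z_4^\beta$ is, under the shift-preserving identification $2\Z_4^\beta\cong\Z_2^\beta$, the binary cyclic code $\langle\tilde f\rangle$ (and similarly $\langle\tilde{f'}\rangle$ for $\C_1$). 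Monotonicity gives $\langle\tilde f\rangle\subseteq\langle\tilde{f'}\rangle$, i.e. $\tilde{f'}\mid\tilde f$ in $\Z_2[x]$. The step I expect to be the main obstacle is passing from this binary divisibility back to $f'\mid f$ over $\Z_4$: here I would use that, for $\beta$ odd, $x^\beta-1$ is separable over $\Z_2$ and its monic divisors correspond bijectively, via Hensel lifting and reduction mod~$2$, to the monic divisors of $x^\beta-1$ in $\Z_4[x]$ that are products of the basic irreducible factors; since $f,f'$ are such products by the standard form $fhg=f'h'g'=x^\beta-1$, $\tilde{f'}\mid\tilde f$ is equivalent to $f'\mid f$, establishing (1).
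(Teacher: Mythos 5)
Your proof is correct and follows the same route as the paper: both claims are reduced to the inclusions $(\C_0)_X\subseteq(\C_1)_X$ and $(\C_0)_Y\subseteq(\C_1)_Y$, with part (2) obtained exactly as in the paper's proof from the identification $(\C_i)_X=\langle\gcd(b,\ell)\rangle$ (you usefully make explicit the computation behind the paper's ``clearly''). The only real difference is in part (1): the paper simply cites \cite[Theorem 3]{Z4CyclicRK} for the fact that $\langle fh+2f\rangle\subseteq\langle f'h'+2f'\rangle$ forces $f'\mid f$, whereas you reprove it directly by identifying the order-two subcode $(\C_i)_Y\cap 2\Z_4^\beta$ with the binary cyclic code $\langle\tilde f\rangle$ via the decomposition of $\Z_4[x]/(x^\beta-1)$ into Galois rings and then lifting $\tilde{f'}\mid\tilde f$ back to $\Z_4[x]$ using the uniqueness of the basic-irreducible factorization of $x^\beta-1$ for odd $\beta$ --- a sound, self-contained substitute for the citation.
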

\begin{proof}
	Since $\C_0\subseteq \C_1$, we have that $(\C_0)_Y=\langle fh +2f
	\rangle\subseteq (\C_1)_Y=\langle f'h' +2f' \rangle$. Therefore, by
	\cite[Theorem 3]{Z4CyclicRK}, $f'$ divides $f$.
	
	Finally, since $\C_0$ and $\C_1$ are cyclic $\add$-additive codes, then clearly $(\C_0)_X= \langle \gcd(b,\ell)  \rangle$ and  $ (\C_1)_X = \langle
	\gcd(b',\ell')\rangle$. Then, since
	$(\C_0)_X \subseteq (\C_1)_X $, $\gcd(b',\ell')$ divides $\gcd(b,\ell)$.
\end{proof}

In order to study the rank and the kernel of a $\add$-additive code $\C$, it is
necessary to consider the code $\C_b$.

\begin{proposition}[\cite{Z2Z4Image}]\label{2GenOrderTwoCode}
	Let $\C$ be a $\add$-additive cyclic code with $\C = \langle (b\mid{0}), (\ell
	\mid fh +2f) \rangle$. Then
	$\C_b=\langle (b\mid{ 0}), (\bar{\mu}\ell \tilde{g} \mid 2f)\rangle.$
\end{proposition}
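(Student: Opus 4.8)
The plan is to establish the two inclusions and obtain equality through a cardinality count. First I observe that $\C_b$, being generated by the codewords of order two, is exactly the subgroup of elements of $\C$ annihilated by $2$; since $\C\cong\Z_2^\gamma\times\Z_4^\delta$, this $2$-torsion subgroup has order $2^{\gamma+\delta}$, so Theorem~\ref{TypeDependingDeg} gives $|\C_b|=2^{\alpha-\deg(b)+\deg(h)+\deg(g)}$. Writing $D=\langle(b\mid 0),(\bar\mu\ell\tilde g\mid 2f)\rangle$, the task reduces to showing $D\subseteq\C_b$ together with $|D|=|\C_b|$.

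For the inclusion $D\subseteq\C_b$, both generators are annihilated by $2$ (their $\Z_4$-parts $0$ and $2f$ are, and $\Z_2$-parts always are), so it suffices to check that both lie in $\C$. This is immediate for $(b\mid 0)$. For the second generator I apply the module operation $\star$ with the multiplier $\mu g$: since the binary reduction of $\mu g$ is $\bar\mu\tilde g$, the $X$-part of $\mu g\star(\ell\mid fh+2f)$ is $\bar\mu\tilde g\,\ell$, matching the claim, while using $fgh=x^\beta-1$ the $Y$-part reduces modulo $x^\beta-1$ to $2\mu gf$. To correct this to $2f$ I add the codeword $\lambda\star(2\star(\ell\mid fh+2f))=(0\mid 2\lambda fh)\in\C$; the identity $\lambda h+\mu g=1$ from~\eqref{eq:lambda-mu} then gives $2\mu gf+2\lambda fh=2f(\mu g+\lambda h)=2f$, so $(\bar\mu\ell\tilde g\mid 2f)\in\C$, hence in $\C_b$.

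It remains to compute $|D|$. Since every element of $D$ is $2$-torsion, I pass to the binary picture via $2w\mapsto\tilde w$, identifying $D$ with the $\Z_2[x]$-submodule of $\Z_2[x]/(x^\alpha-1)\times\Z_2[x]/(x^\beta-1)$ generated by $(b\mid 0)$ and $(\bar\mu\ell\tilde g\mid\tilde f)$. Factoring the order through the projection onto $Y$ yields $|D|=|(D)_Y|\cdot|\{u:(u\mid 0)\in D\}|$. The projection $(D)_Y=\langle\tilde f\rangle$ has $2^{\beta-\deg(f)}=2^{\deg(g)+\deg(h)}$ elements, so I must show that the codewords of $D$ with vanishing $Y$-part are exactly $\langle b\rangle$, of order $2^{\alpha-\deg(b)}$; multiplying these two cardinalities reproduces $|\C_b|$.

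I expect the main obstacle to be this last point, that $\{u:(u\mid 0)\in D\}$ is no larger than $\langle b\rangle$. The inclusion $\langle b\rangle\subseteq\{u:(u\mid 0)\in D\}$ is clear; for the reverse, a general element of $D$ with vanishing $Y$-part forces $\tilde g\tilde h\mid\tilde q$ for the coefficient $\tilde q$ of the second generator, and its $X$-part then lies in $\langle b,\tilde g^2\tilde h\bar\mu\ell\rangle$. Collapsing this to $\langle b\rangle$ requires $b\mid\tilde g^2\tilde h\bar\mu\ell$, which I deduce from Lemma~\ref{bdiviXSgcd}: that lemma gives $b\mid\tilde h\gcd(b,\ell\tilde g)$, whence $b\mid\tilde h\ell\tilde g$, and therefore $b\mid\tilde g^2\tilde h\bar\mu\ell$. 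This divisibility, where the structural constraints on $(b,\ell)$ relative to the factorization $fhg=x^\beta-1$ enter, is the crux of the cardinality count; once it is in hand, $|D|=2^{\alpha-\deg(b)+\deg(g)+\deg(h)}=|\C_b|$ together with $D\subseteq\C_b$ gives the equality.
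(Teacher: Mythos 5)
Your argument is correct and complete. Note that the paper does not actually prove this proposition: it is imported verbatim from \cite{Z2Z4Image}, so there is no in-text proof to compare against. Your route --- showing $(b\mid 0)$ and $(\bar{\mu}\ell\tilde{g}\mid 2f)$ lie in $\C$ via the multipliers $\mu g$ and $2\lambda$ together with $\lambda h+\mu g=1$, identifying $\C_b$ with the full $2$-torsion subgroup of order $2^{\gamma+\delta}=2^{\alpha-\deg(b)+\deg(h)+\deg(g)}$, and then matching cardinalities by projecting the candidate code onto $Y$ --- is sound at every step. The one place where something substantive is needed is exactly where you flag it: the kernel of the projection onto $Y$ is a priori $\langle b,\tilde{g}^2\tilde{h}\tilde{\mu}\ell\rangle$, and collapsing it to $\langle b\rangle$ requires $b\mid \tilde{g}^2\tilde{h}\tilde{\mu}\ell$, which you correctly extract from the divisibility $b\mid\tilde{h}\gcd(b,\ell\tilde{g})$ of Lemma~\ref{bdiviXSgcd}. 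An alternative (and likely the route of \cite{Z2Z4Image}) is to start from the three-generator form (\ref{eq:3gen}) and read off the order-two part directly; your version has the advantage of being self-contained modulo Lemma~\ref{bdiviXSgcd} and Theorem~\ref{TypeDependingDeg}, at the cost of the explicit counting argument.
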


Note that if $\C=\C_b$, then (\ref{equation:linear}) is satisfied and
the code $\Phi(\C)$ is linear. In this case, $\delta=0$ and, by
Theorem \ref{TypeDependingDeg}, $g=1$. Therefore $\C_b=\langle (b\mid{ 0}),
(\ell\mid 2f)\rangle.$

\subsection{Kernel of $\add$-Additive Cyclic Codes}

In this section, we will study the kernel of $\add$-additive
cyclic codes. We will prove that, for a $\add$-additive cyclic code $\C$, the code $\K(\C)$ is also cyclic and we will establish some properties of its generator polynomials. We will show that there does not exist a $\add$-additive cyclic code for all the possible values of the dimension of the kernel as for general $\add$-additive codes.

Let ${\cal C}$ be a $\add$-additive code. By (\ref{equation:linear}), we can give the following definition of $\K(\C)$ (see \cite{Z2Z4RK}):

$$
\K(\C)=\{\vv\in\C\mid 2\vv*\vw\in\C, \forall\vw\in\C\}.
$$

\begin{lemma}\label{lemm:KerCY}
	Let ${\cal C}$ be a $\add$-additive code. Then, $\K(\C)_Y\subseteq\K(\C_Y)$.
\end{lemma}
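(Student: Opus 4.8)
The plan is to work directly from the characterization of the kernel recalled just before the statement, namely $\K(\C)=\{\vv\in\C\mid 2\vv*\vw\in\C,\ \forall\vw\in\C\}$, applied both to $\C$ and to the punctured quaternary code $\C_Y$ (which is itself a $\add$-additive code with $\alpha=0$, so the same description holds: $\K(\C_Y)=\{v'\in\C_Y\mid 2v'*w'\in\C_Y,\ \forall w'\in\C_Y\}$). First I would take an arbitrary element of $\K(\C)_Y$; by definition of the punctured code this is the $Y$-part $v'$ of some $\vv=(v\mid v')\in\K(\C)$, and in particular $v'\in\C_Y$. It then remains to verify the multiplicative closure condition for $v'$ inside $\C_Y$.

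To that end I would fix any $w'\in\C_Y$ and lift it to a codeword $\vw=(w\mid w')\in\C$ (possible precisely because $\C_Y$ is the projection of $\C$ onto the $Y$ coordinates). Since $\vv\in\K(\C)$, the defining condition gives $2\vv*\vw\in\C$. The key observation is that the component-wise product $2\vv*\vw$ has trivial $X$-part: on the $\alpha$ binary coordinates the factor $2$ acts as $0$ in $\Z_2$, so that $2\vv*\vw=(\zero\mid 2v'*w')$. Consequently, projecting this codeword onto the $Y$ coordinates yields $(2\vv*\vw)_Y=2v'*w'\in\C_Y$.

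Since $w'\in\C_Y$ was arbitrary, this shows $2v'*w'\in\C_Y$ for every $w'\in\C_Y$, and together with $v'\in\C_Y$ we conclude $v'\in\K(\C_Y)$. As $v'$ was an arbitrary element of $\K(\C)_Y$, this establishes $\K(\C)_Y\subseteq\K(\C_Y)$.

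I do not expect a serious obstacle here; the argument is essentially bookkeeping. The one point requiring care is the lifting step, where one must use that every $w'\in\C_Y$ extends to a genuine codeword of $\C$ (so that the kernel condition for $\vv$ can be invoked), and the observation that multiplication by $2$ annihilates the binary coordinates, which is exactly what makes the $X$-component of $2\vv*\vw$ vanish and lets the condition descend cleanly to the quaternary part.
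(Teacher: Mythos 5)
Your proof is correct and follows essentially the same route as the paper's: both arguments use the characterization $\K(\C)=\{\vv\in\C\mid 2\vv*\vw\in\C,\ \forall\vw\in\C\}$ together with the observation that $2\vv*\vw=(\zero\mid 2v'*w')$, and then project onto the $Y$ coordinates. You merely spell out the lifting of $w'\in\C_Y$ to a codeword of $\C$, a step the paper leaves implicit.
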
	
\begin{proof}
	Let $\C$ be a $\add$-additive cyclic. Let $\vv=(v\mid v')\in\C$. We have that $\vv\in\K(\C)$ if and only if $2\vv*\vw\in\C$, $\forall\vw=(w\mid w')\in\C$.   Since $ 2\vv *\vw=(0\mid 2v'*w')$, we have that if $\vv\in\K(\C)$ then $v'\in\K(\C_Y)$ and the statement follows.
\end{proof}

\begin{example}\label{ex:inequalityKernel}
	Let $\C$ be the $\add$-additive cyclic code in $\frac{\Z_2[x]}{\langle
		x-1\rangle}\times \frac{\Z_4[x]}{\langle x^3-1\rangle}$ 
	generated by $\langle
	(1\mid x+1)\rangle$, where $f=1$ and $h=x-1$. Note that $\C$ is of type
	$(1,3;1,2;1)$ and the generator matrix of $\C$ in standard form,
	(\ref{eq:StandardForm}), is
	$$
	\left(\begin{array}{c|ccc}
	1 & 2 & 0 & 0  \\
	0 & 3 & 1 & 0  \\
	0 & 3 & 0 & 1
	\end{array}\right).
	$$
	
	Since $f=1$, by Theorem \ref{Phi(CY)Linear}, we know that
	$\K(\C_Y)=\C_Y=\langle
	(x+1)\rangle$. We have that the generator matrix of $\K(\C)$ in standard 
	form is
	$$
	\left(\begin{array}{c|ccc}
	1 & 2 & 0 & 0  \\
	0 & 2 & 2 & 0  \\
	0 & 2 & 0 & 2
	\end{array}\right)
	$$
	and therefore $\K(\C)=\langle (1\mid 2)\rangle$. Hence,
	$\K(\C)_Y\varsubsetneq \K(\C_Y)$.
\end{example}

The following theorems determine an upper and a lower bound for the kernel of a
$\add$-additive code and that there exists a $\add$-additive code of type
($\alpha,\beta, \gamma,
\delta, \kappa$) for all possible values of the kernel.

\begin{theorem}[\cite{Z2Z4RK}]\label{minandmax}
	Let $\C$ be a $\add$-additive code with parameters ($\alpha,\beta, \gamma,
	\delta, \kappa$).  Then $\gamma + \delta
	\leq \dim (\ker(\Phi(\C)) \leq \gamma + 2 \delta.$
\end{theorem}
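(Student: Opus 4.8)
The plan is to translate the binary quantity $\dim(\ker(\Phi(\C)))$ into the size of the additive code $\K(\C)$ and then sandwich that size between $2^{\gamma+\delta}$ and $2^{\gamma+2\delta}$ using the abelian-group structure of $\C$. Since $\Phi$ is a bijection from $\Z_2^\alpha\times\Z_4^\beta$ onto $\Z_2^{\alpha+2\beta}$ and $\ker(\Phi(\C))=\Phi(\K(\C))$ is by definition a binary linear code, the first step I would record is that $\dim(\ker(\Phi(\C)))=\log_2|\K(\C)|$. Thus the two desired inequalities are equivalent to $2^{\gamma+\delta}\le|\K(\C)|\le 2^{\gamma+2\delta}$.

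The upper bound is immediate from the containment $\K(\C)\subseteq\C$ already noted in the preliminaries: since $|\C|=2^{\gamma+2\delta}$, we obtain $|\K(\C)|\le 2^{\gamma+2\delta}$, hence $\dim(\ker(\Phi(\C)))\le\gamma+2\delta$.

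For the lower bound I would exhibit a large explicit subcode of $\K(\C)$. Using the characterization $\K(\C)=\{\vv\in\C\mid 2\vv*\vw\in\C,\ \forall\vw\in\C\}$, the key remark is that every order-two codeword lies in $\K(\C)$: if $\vv=(v\mid v')$ satisfies $2\vv=\zero$, then each $\Z_4$-coordinate $v'_i$ lies in $\{0,2\}$, so $2v'_i w'_i=0$ for every $\vw=(w\mid w')\in\C$, giving $2\vv*\vw=\zero\in\C$. Consequently the whole subcode $\C_b$ generated by the order-two codewords is contained in $\K(\C)$. Because $\C$ is abelian of the form $\Z_2^\gamma\times\Z_4^\delta$, its $2$-torsion subgroup $\C_b$ is isomorphic to $\Z_2^\gamma\times\Z_2^\delta$ and therefore has order $2^{\gamma+\delta}$; this yields $|\K(\C)|\ge 2^{\gamma+\delta}$ and hence $\dim(\ker(\Phi(\C)))\ge\gamma+\delta$.

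The step that requires the most care is the identification $|\C_b|=2^{\gamma+\delta}$. I would justify it by noting that in the abelian $2$-group $\C\cong\Z_2^\gamma\times\Z_4^\delta$ the elements of order dividing $2$ already form a subgroup, so $\C_b$ coincides with this $2$-torsion subgroup, and then counting it componentwise, with $\{0,2\}$ being the order-two part of each $\Z_4$ factor. Everything else is routine, so I do not anticipate any obstacle beyond this bookkeeping.
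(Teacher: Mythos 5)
Your proof is correct. The paper itself does not prove this theorem---it is quoted from \cite{Z2Z4RK} as a known result---but your argument is the standard one and is complete: the sandwich $\C_b\subseteq\K(\C)\subseteq\C$ together with $|\C_b|=2^{\gamma+\delta}$ (the $2$-torsion subgroup of $\Z_2^\gamma\times\Z_4^\delta$) and $|\C|=2^{\gamma+2\delta}$ gives both bounds, and the identification $\dim(\ker(\Phi(\C)))=\log_2|\K(\C)|$ via the bijectivity of $\Phi$ is exactly as the preliminaries set things up. The only point worth making explicit is that you are using the characterization $\K(\C)=\{\vv\in\C\mid 2\vv*\vw\in\C,\ \forall\vw\in\C\}$, which rests on the identity $\Phi(\vv+\vw)=\Phi(\vv)+\Phi(\vw)+\Phi(2\vv*\vw)$; the paper records both, so this is legitimate.
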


\begin{theorem}[\cite{Z2Z4RK}] \label{all-kernels} Let $\alpha,\beta,\gamma,
	\delta,\kappa$ be integers satisfying
	\begin{equation} \label{bounds-code}
	\begin{tabular}{c}
	$\alpha, \beta, \gamma, \delta, \kappa \geq 0$,  $\quad \alpha+\beta >0$, \\
	$0 < \delta+\gamma \leq \beta + \kappa \quad $ \textrm{and} $\quad \kappa
	\le \min(\alpha,\gamma)$.
	\end{tabular}
	\end{equation} Then, there exists a $\add$-linear code $C$ of
	type $(\alpha,\beta;\gamma, \delta;\kappa)$ with 
	$\dim(\ker(C))=\gamma+2\delta-\bar k$
	if and only if
	$$\left\{
	\begin{tabular}{lll}
	$\bar k=0$, &  if & $s=0$,  \\
	$\bar k \in \{0\} \cup \{2, \ldots, \delta \} \mbox{ and $\bar k$ even},$ &
	if & $s=1$, \\
	$\bar k \in \{0\} \cup \{2,  \ldots , \delta \},  $ & if &$ s\geq 2$,
	\end{tabular} \right.$$ where $s=\beta-(\gamma-\kappa)-\delta$.
\end{theorem}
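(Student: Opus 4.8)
The plan is to attach to $\C$ a single alternating bilinear form whose rank controls $\bar k=\gamma+2\delta-\dim(\ker(\Phi(\C)))$, and then to read off the admissible ranks from the structure of this form. I would work from a generator matrix in the standard form (\ref{eq:StandardForm}) and from the description $\K(\C)=\{\vv\in\C\mid 2\vv*\vw\in\C,\ \forall\vw\in\C\}$. Since $2\vv*\vw$ has order at most two, it lies in $\C$ exactly when it lies in $\C_b$; this gives $\C_b\subseteq\K(\C)\subseteq\C$, hence the bounds of Theorem \ref{minandmax} and the writing $\dim(\ker(\Phi(\C)))=\gamma+2\delta-\bar k$ with $0\le\bar k\le\delta$. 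On $\C/\C_b\cong\Z_2^{\delta}$ I would define
\[
B(\bar\vv,\bar\vw)=2\vv*\vw \pmod{\C_b},
\]
taking values in the space $W$ obtained by reducing order-two vectors modulo $\C_b$. Using Proposition \ref{2GenOrderTwoCode} to describe $\C_b$ one checks that $B$ is well defined and symmetric, that its radical is precisely $\K(\C)/\C_b$, and therefore that $\bar k=\rank B$.

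The decisive structural fact is that $B$ is \emph{alternating}. Indeed $2a^{2}=2a$ for every $a\in\Z_4$, so component-wise $2\vv*\vv=2\vv$, and $2\vv$ has order at most two and thus lies in $\C_b$; hence $B(\bar\vv,\bar\vv)=0$ for all $\vv$. A second computation with (\ref{eq:StandardForm}) shows that, after reduction modulo $\C_b$, the value $2\vv*\vw$ is supported on the $s=\beta-(\gamma-\kappa)-\delta$ quaternary coordinates not occupied by the blocks $2I_{\gamma-\kappa}$ and $I_{\delta}$, so that $\dim W\le s$. For the necessity direction I would then combine three elementary facts about a $W$-valued alternating form $B$ on $\Z_2^{\delta}$. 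First, an alternating form never has rank $1$: if the radical had codimension one, choosing $\bar\vv$ outside it would give $B(\bar\vv,\bar\vw)=0$ for all $\bar\vw$ by alternation and bilinearity, forcing $\bar\vv$ into the radical; this excludes $\bar k=1$ for every value of $s$. Second, when $s=0$ we have $W=0$, so $B\equiv0$ and $\bar k=0$, recovering the linearity of $\Phi(\C)$. Third, when $s=1$ we have $\dim W\le1$, so $B$ is (up to the scalar identification) an ordinary $\Z_2$-valued alternating form, whose rank is even; hence $\bar k$ is even. Together with $0\le\bar k\le\delta$ this yields exactly the three listed cases.

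The sufficiency direction is a construction: I would prescribe the entries of the order-four generators on the $s$ free coordinates (the block $S_q$) so that the induced Hadamard-product form $B$ has any desired rank in the allowed set, while keeping the remaining blocks in (\ref{eq:StandardForm}) fixed so that the type $(\alpha,\beta;\gamma,\delta;\kappa)$ is unchanged; Theorem \ref{TypeDependingDeg} (or a direct inspection) then confirms the type. A single free coordinate realizes any even rank, and when $s\ge2$ two coordinates already suffice to realize odd ranks $\ge3$ as well, by arranging the radicals of the two coordinate-forms to intersect in the prescribed codimension.

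The main obstacle, and the technical heart of the argument, is precisely this interplay between $s$ and the form $B$: pinning down $\dim W$ in terms of $s$ through the exact shape of $\C_b$ in (\ref{eq:StandardForm}), and verifying that with only two free coordinates one can still hit \emph{every} admissible rank up to $\delta$ (and no odd rank with only one). This is where the finite geometry of alternating $\Z_2$-forms meets the combinatorics of the standard form, and it is the step I would expect to require the most care.
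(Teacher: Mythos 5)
The paper does not prove this statement at all: Theorem~\ref{all-kernels} is imported verbatim from \cite{Z2Z4RK}, so there is no internal proof to compare against, and your argument has to stand on its own. On its own terms, your route is sound and is a cleaner packaging than the explicit matrix constructions of the cited reference. The necessity half is essentially complete: $\C_b\subseteq\K(\C)$ because every $2\vv*\vw$ has order at most two, the form $B(\bar\vv,\bar\vw)=2\vv*\vw\bmod\C_b$ on $\C/\C_b\cong\Z_2^\delta$ is well defined with radical $\K(\C)/\C_b$, it is alternating because $2\vv*\vv=2\vv\in\C_b$, and the no-rank-one argument plus $\dim W\le s$ gives exactly the three cases. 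One point deserves more care than you give it: the raw vector $2\vv_j*\vv_k$ ($j\neq k$) is supported on the first $s+(\gamma-\kappa)$ quaternary coordinates, and only after subtracting the order-two generators carrying $2I_{\gamma-\kappa}$ does it land in the $s$-coordinate block $V_s$; you then need $V_s\cap\C_b=\{\zero\}$ (which follows from the identity blocks $I_\kappa$, $2I_{\gamma-\kappa}$, $I_\delta$ in (\ref{eq:StandardForm})) so that $W\cong V_s$ and, for $s=1$, $B$ really is a $\Z_2$-valued alternating form rather than identically zero. On the sufficiency side your sketch is right but understates a constraint: each coordinate-form is not an arbitrary alternating form but one of the special shape $\sum_{j<k}m_jm_k\,e_j\wedge e_k$ determined by a single column $m$ of $\tilde S_q$ (Gram matrix $mm^T+\operatorname{diag}(m)$), whose rank is $t$ for $m$ of even weight $t$ and $t-1$ for odd weight, with radical $\{x: x|_{\operatorname{supp}(m)}=0\}$ in the even case. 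This still realizes every even $\bar k\le\delta$ with one column, and a second column of weight two overlapping the first support in one position realizes every odd $\bar k$ with $3\le\bar k\le\delta$ when $s\ge2$; but that verification, together with checking the type is preserved, is the part you have only asserted and should be written out.
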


We will see that not all possible values for the kernel of a $\add$-additive 
code
$\C$ are possible if $\C$ is cyclic. First, we will determine some properties of
the kernel of a $\add$-additive cyclic code.

\begin{proposition}
	Let $\C=\langle  (b\mid 0), (\ell\mid fh +2f )\rangle$ be a $\add$-additive 
	cyclic code. Then
	$$\alpha -\deg(b)+\deg(h) + \deg(g)
	\leq |\K (\C)| \leq \alpha -\deg(b)+\deg(h) + 2 \deg(g).$$
\end{proposition}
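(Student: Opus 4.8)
The plan is to recognize this proposition as nothing more than a restatement of the general kernel bounds of Theorem~\ref{minandmax}, re-expressed in the language of the generator polynomials of $\C$. The quantity being bounded is the dimension of the kernel: since the extended Gray map $\Phi$ is a bijection between $\Z_2^\alpha\times\Z_4^\beta$ and $\Z_2^{\alpha+2\beta}$, the set $\K(\C)=\Phi^{-1}(\ker(\Phi(\C)))$ has exactly the same cardinality as $\ker(\Phi(\C))$, so $|\K(\C)|$ is to be read as $\dim(\ker(\Phi(\C)))$ (equivalently, as the base-$2$ logarithm of the cardinality), in agreement with the convention of Theorem~\ref{minandmax}.

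First I would invoke Theorem~\ref{minandmax}: for a $\add$-additive code of type $(\alpha,\beta;\gamma,\delta;\kappa)$ it gives the chain of inequalities
$$\gamma+\delta\ \le\ \dim(\ker(\Phi(\C)))\ \le\ \gamma+2\delta.$$
Next I would use Theorem~\ref{TypeDependingDeg} to translate the type parameters $\gamma$ and $\delta$ into degrees of the generator polynomials, namely $\gamma=\alpha-\deg(b)+\deg(h)$ and $\delta=\deg(g)$. Substituting these identities gives
$$\gamma+\delta=\alpha-\deg(b)+\deg(h)+\deg(g),\qquad \gamma+2\delta=\alpha-\deg(b)+\deg(h)+2\deg(g),$$
and plugging these two expressions into the inequality above yields precisely the claimed bounds.

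There is essentially no obstacle to overcome here: the substantive content is already carried by the two cited theorems, and what remains is a pure substitution of the polynomial-degree formulas for $\gamma$ and $\delta$ into the abstract bound. The only point deserving a word of care is the reading of the symbol $|\K(\C)|$ as the dimension of the kernel, which is justified by the bijectivity of $\Phi$ and keeps the statement consistent with Theorem~\ref{minandmax}.
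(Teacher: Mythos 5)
Your proof is correct and matches the paper's argument exactly: the paper's own proof is the one-line "Straightforward from Theorems \ref{minandmax} and \ref{TypeDependingDeg}," i.e., precisely the substitution of $\gamma=\alpha-\deg(b)+\deg(h)$ and $\delta=\deg(g)$ into the general kernel bounds. Your remark that $|\K(\C)|$ must be read as the dimension $\dim(\ker(\Phi(\C)))$ rather than a cardinality is a fair observation about the paper's notation and does not affect the argument.
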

\begin{proof}
	Straightforward from Theorems \ref{minandmax} and \ref{TypeDependingDeg}.
\end{proof}

Note that the upper bound is sharp when the code has binary linear image, i.e.,
$\C=\K(\C)$.  Moreover, the lower bound is tight when $\K(\C)$ only has the 
all-zero
vector and all order two codewords; that is, $\C=\C_b$.

\begin{proposition}\label{K(C)Y_in_K(CY)}
	Let $\C$ be a $\add$-additive cyclic code, then $\K(\C) \subseteq
	\C_X\times \K(\C_Y).$
\end{proposition}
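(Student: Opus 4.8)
The plan is to argue element by element, exploiting the two containments that are already available. The target set $\C_X\times\K(\C_Y)$ is a direct product, so a vector $\vv=(v\mid v')$ lies in it precisely when $v\in\C_X$ and $v'\in\K(\C_Y)$. I would therefore fix an arbitrary $\vv=(v\mid v')\in\K(\C)$ and verify these two membership conditions separately, one for the binary block of coordinates and one for the quaternary block.

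For the binary part, I would use the inclusion $\K(\C)\subseteq\C$ already noted in this section. Since $\vv\in\K(\C)\subseteq\C$, the vector $\vv$ is a codeword of $\C$, and by the very definition of the punctured code $\C_X$, its restriction to the $X$-coordinates, namely $v$, belongs to $\C_X$.

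For the quaternary part, I would invoke Lemma \ref{lemm:KerCY}, which gives $\K(\C)_Y\subseteq\K(\C_Y)$. Since $v'$ is exactly the $Y$-projection of $\vv\in\K(\C)$, we have $v'\in\K(\C)_Y\subseteq\K(\C_Y)$. Combining the two conditions yields $\vv\in\C_X\times\K(\C_Y)$, which is the claim.

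I do not expect a genuine obstacle here: essentially all of the content has been pushed into Lemma \ref{lemm:KerCY}, whose argument rested on the observation that $2\vv*\vw$ has zero $X$-component, so the defining kernel condition restricts cleanly to the $Y$-coordinates. The only point worth spelling out is that membership in the Cartesian product $\C_X\times\K(\C_Y)$ decomposes exactly into the two componentwise conditions I use, which is immediate from the definition of the product.
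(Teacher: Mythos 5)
Your proof is correct and follows essentially the same route as the paper's: the $X$-part comes from $\K(\C)\subseteq\C$, and the $Y$-part from the fact that $2\vv*\vw=(\mathbf{0}\mid 2v'*w')$ forces $v'\in\K(\C_Y)$. The only cosmetic difference is that you cite Lemma \ref{lemm:KerCY} for the $Y$-part, whereas the paper re-derives that same one-line computation inline.
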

\begin{proof}
	Let $\vv=(v,v') \in \K(\C)$.  For all $\vw=(w,w') \in \C$, $2 \vv*\vw 
	\in
	\C$ since $\vv $ is in the kernel.  Then $v\in\C_X$ and
	$2 v' * w' \in \C_Y$, for all $w' \in \C_Y$ which gives $v' \in 
	\K(\C_Y)$ and
	$\vv\in \C_X\times \K(\C_Y)$ and the result follows.
\end{proof}

\begin{proposition} If $\C$ is a separable $\add$-additive code then $\K(\C) =
	\C_X \times \K(\C_Y).$
\end{proposition}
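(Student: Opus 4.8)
The plan is to establish the reverse inclusion to the one supplied by Proposition \ref{K(C)Y_in_K(CY)}. First I would observe that the inclusion $\K(\C)\subseteq\C_X\times\K(\C_Y)$ already holds in our setting: the argument written for Proposition \ref{K(C)Y_in_K(CY)} never invokes cyclicity, so it applies verbatim to any separable $\add$-additive code. Hence it suffices to prove the containment $\C_X\times\K(\C_Y)\subseteq\K(\C)$.

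To this end I would take an arbitrary $\vv=(v\mid v')\in\C_X\times\K(\C_Y)$ and verify the kernel condition $2\vv*\vw\in\C$ for every $\vw=(w\mid w')\in\C$. The first step is to check $\vv\in\C$ itself: since $\K(\C_Y)\subseteq\C_Y$ we have $v'\in\C_Y$, and by separability $\C=\C_X\times\C_Y$, so indeed $\vv\in\C$. The key observation is then that doubling annihilates the binary coordinates, giving $2\vv*\vw=(\zero\mid 2v'*w')$; the $\Z_2$ part of the product contributes nothing. Because $v'\in\K(\C_Y)$ and $w'\in\C_Y$, the characterization of the kernel yields $2v'*w'\in\C_Y$, while $\zero\in\C_X$ trivially. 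Invoking separability once more, $(\zero\mid 2v'*w')\in\C_X\times\C_Y=\C$, whence $\vv\in\K(\C)$, and the two inclusions combine to give the claimed equality.

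I do not expect a genuine obstacle here: the result hinges entirely on the fact that, after the doubling, the kernel-membership condition depends only on the quaternary part, so separability allows the $X$ and $Y$ components to be treated independently. The only points requiring care are to confirm that the binary component of $2\vv*\vw$ vanishes—so that no constraint is placed on $v$ beyond $v\in\C_X$—and to use separability in both directions, once to place $\vv$ inside $\C$ and once to place the product $2\vv*\vw$ inside $\C$.
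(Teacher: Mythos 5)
Your proof is correct and follows essentially the same route as the paper's: the forward inclusion is delegated to Proposition~\ref{K(C)Y_in_K(CY)} (correctly noting its proof does not use cyclicity), and the reverse inclusion is obtained by computing $2\vv*\vw=(\zero\mid 2v'*w')$ and invoking separability to place both $\vv$ and the product back in $\C$. Your version is in fact slightly more careful than the paper's, which omits the explicit check that $\vv\in\C$.
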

\begin{proof}
	Let $\textbf{v}\in\C_X \times \K(\C_Y)$, then $2\textbf{v}*\textbf{w}=({\bf 0 
	}\mid 2v'*w')$ for all $\textbf{w}\in\C$. Since $v'\in\K(\C_Y)$, then 
	$2v'*w'\in\C_Y$. Moreover, since $\C$ is a separable $\add$-additive cyclic 
	code,  $2\textbf{v}*\textbf{w}=({\bf 0 }\mid 2v'*w')\in\C$. Therefore 
	$\textbf{v}\in\K(\C)$ and $\C_X \times \K(\C_Y)\subseteq \K(\C)$.
	
	Now, by Proposition~\ref{K(C)Y_in_K(CY)}, we have $\K(\C)\subseteq \C_X \times
	\K(\C_Y)$.
	$\textbf{w}\in\C$. Since $\C$ is a separable $\add$-additive cyclic code, 
	$2v'*w'\in\C_Y$ for all $w'\in\C_Y$ and so $v'\in\K(\C_Y)$. Therefore 
\end{proof}

The following example shows that if the $\add$-additive code $\C$ is
non-separable, then the kernel is not necessarily $\C_X \times \K(\C_Y).$

\begin{example}
	Let $\C=\langle (1\mid x+1)\rangle$, the $\add$-additive code of 
	Example~\ref{ex:inequalityKernel}, with $\C_X=\langle 1 \rangle$ and 
	$\C_Y=\langle x+1\rangle$. We have seen that $\K(\C_X)=\C_X$, $\K(\C_Y)=\C_Y$ 
	and $\K(\C)=\langle (1\mid 2) \rangle$. Therefore $\K(\C)\subsetneq \C_X \times
	\K(\C_Y)$.
\end{example}

Therefore, if the code is non-separable, the equality is not satisfied in
general.

From Proposition \ref{K(C)Y_in_K(CY)}, we obtain that $\dim(\ker(\Phi(\C))) \leq
\kappa + \dim(\ker(\phi(\C_Y))).$ However, we can give a bound that is more
accurate.

\begin{proposition}\label{K(C)<K(C_Y)+k1}
	Let $\C$ be a $\add$-additive code, then $\dim(\ker(\Phi(\C))) \leq
	\kappa_1+ \dim(\ker(\phi(\C_Y))).$
\end{proposition}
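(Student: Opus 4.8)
The plan is to compare $\ker(\Phi(\C))$ with $\ker(\phi(\C_Y))$ by means of the coordinate projection that deletes the first $\alpha$ binary positions, and then to apply the rank--nullity theorem to this projection. Write $W=\ker(\Phi(\C))$, a binary linear code, and let $\psi\colon\Z_2^{\alpha+2\beta}\to\Z_2^{2\beta}$ be the $\Z_2$-linear map deleting the first $\alpha$ coordinates. For a codeword $\Phi(\vv)=(v\mid\phi(v'))$ arising from $\vv=(v\mid v')\in\K(\C)$ one has $\psi(\Phi(\vv))=\phi(v')$, so $\psi$ restricts to a linear map on $W$, and the proof reduces to bounding the dimensions of the image and of the kernel of $\psi|_W$.

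First I would identify the image. By construction the image of $\psi|_W$ is $\{\phi(v')\mid (v\mid v')\in\K(\C)\}=\phi(\K(\C)_Y)$. Lemma~\ref{lemm:KerCY} gives $\K(\C)_Y\subseteq\K(\C_Y)$, and since $\phi$ is a bijection with $\phi(\K(\C_Y))=\ker(\phi(\C_Y))$, it follows that the image of $\psi|_W$ is contained in $\ker(\phi(\C_Y))$; hence its dimension is at most $\dim(\ker(\phi(\C_Y)))$. Next I would control the kernel of $\psi|_W$. A codeword $\Phi(\vv)\in W$ lies in this kernel exactly when $\phi(v')=0$, i.e.\ $v'=0$, so the kernel is the image under $\Phi$ of the subcode $\{(v\mid 0)\in\K(\C)\}$, whose $\Z_2$-dimension equals that of $\{v\mid(v\mid 0)\in\K(\C)\}$. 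This code is contained in $\{v\mid(v\mid 0)\in\C\}$, the subcode of $\C$ whose codewords vanish on the $\Z_4$ part.

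The main technical step, and the step I expect to be delicate, is to show that this $Y$-vanishing subcode of $\C$ has dimension exactly $\kappa_1$. This is exactly the span of the first $\kappa_1$ rows of the standard form (\ref{gen}), namely those generators that are zero on the $\Z_4$ coordinates; in the cyclic setting it is cleanest to argue by counting, since the projection of $\C$ onto $Y$ is surjective onto $\C_Y$ with kernel precisely this subcode, so its cardinality is $|\C|/|\C_Y|$. Evaluating with $|\C|=2^{\gamma+2\delta}$, the size of the quaternary cyclic code $\C_Y=\langle fh+2f\rangle$, and Theorem~\ref{TypeDependingDeg} yields $2^{\kappa_1}$ with $\kappa_1=\alpha-\deg(b)$. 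Consequently $\dim(\ker(\psi|_W))\le\kappa_1$.

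Finally, applying rank--nullity to $\psi$ on the $\Z_2$-vector space $W=\ker(\Phi(\C))$ gives
$$\dim(\ker(\Phi(\C)))\le\dim(\ker(\phi(\C_Y)))+\kappa_1,$$
which is the asserted bound. Everything apart from the dimension count of the $Y$-vanishing subcode is a routine combination of rank--nullity with Lemma~\ref{lemm:KerCY}, so that count is where I would concentrate the care.
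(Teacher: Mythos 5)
Your proof is correct and follows essentially the same route as the paper: the paper also introduces the $Y$-vanishing subcode $\C_0=\{(v\mid 0)\in\C\}$ of dimension $\kappa_1$, invokes Lemma~\ref{lemm:KerCY} to place the $Y$-projection of $\K(\C)$ inside $\K(\C_Y)$, and concludes by the same (there only implicit) rank--nullity count on the projection. Your version merely makes the image/kernel bookkeeping of that projection explicit, which is a welcome clarification rather than a different argument.
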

\begin{proof}
	Define $\C_0 = \{ \vv=(v\mid v') \in \C \mid v' ={\bf 0 } \}$.
	We have that $\C_0 \subseteq \K(\C)$, and $\dim(\ker(\C_0))=\kappa_1$.
	
	Let $\vv=(v\mid v')\in\K(\C)$. If $\vv'={\bf 0}$, then $\vv\in\C_0$. Otherwise, 
	$v'\in\K(\C)_Y\subseteq\K(\C_Y)$ by Lemma~\ref{lemm:KerCY} and, therefore, 
	$\dim(\ker(\Phi(\C))) \leq \kappa_1+\dim(\ker(\phi(\C_Y))).$
\end{proof}

From the generator matrix $G$ of $\C$ given in (\ref{gen}), we have that the 
code $\C_Y$ has a generator matrix of the form
\begin{equation} \label{genC_Y}
\left(\begin{array}{ccc}
2T_2 & {\bf 0} & {\bf 0}  \\
2T_1 & 2 I_{\gamma - (\kappa_1 + \kappa_2)}& {\bf 0} \\
S  & R& I_{\delta}
\end{array}\right).
\end{equation}

By  \cite[Proposition 1]{Z2Z4CDual}, we know that the code $\C_Y$ has type 
$4^\delta 2^{\gamma - \kappa_1}$. The minimum value for the dimension of 
$\ker(\phi(\C_Y))$ is $\delta + \gamma - \kappa_1$.

\begin{theorem} Let $\C$ be a $\add$-additive code. If $\K(\C_Y)$ is a minimum 
	then $\K(\C)$ is a minimum.
\end{theorem}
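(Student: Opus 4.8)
The plan is to sandwich $\dim(\ker(\Phi(\C)))$ between matching upper and lower bounds, so that once $\K(\C_Y)$ attains its minimal value the quantity $\dim(\ker(\Phi(\C)))$ is forced down to its own minimum $\gamma+\delta$ coming from Theorem~\ref{minandmax}. Before doing anything, I would make the two notions of ``minimum'' precise. Since $\C_Y$ has type $4^\delta 2^{\gamma-\kappa_1}$, the smallest possible value of $\dim(\ker(\phi(\C_Y)))$ is $\delta+\gamma-\kappa_1$, as recorded immediately above the theorem; thus the hypothesis ``$\K(\C_Y)$ is a minimum'' reads $\dim(\ker(\phi(\C_Y)))=\delta+\gamma-\kappa_1$. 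For $\C$ itself, Theorem~\ref{minandmax} gives the lower bound $\gamma+\delta\le\dim(\ker(\Phi(\C)))$, so the conclusion ``$\K(\C)$ is a minimum'' means $\dim(\ker(\Phi(\C)))=\gamma+\delta$.

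The main step is then to invoke Proposition~\ref{K(C)<K(C_Y)+k1}, which supplies exactly the upper bound I need,
$$\dim(\ker(\Phi(\C)))\le\kappa_1+\dim(\ker(\phi(\C_Y))).$$
Substituting the hypothesis $\dim(\ker(\phi(\C_Y)))=\delta+\gamma-\kappa_1$ collapses the right-hand side to $\gamma+\delta$, since the two $\kappa_1$ terms cancel. Combining this with the lower bound $\gamma+\delta\le\dim(\ker(\Phi(\C)))$ from Theorem~\ref{minandmax} pins the dimension to exactly $\gamma+\delta$, which is precisely the assertion that $\K(\C)$ is a minimum.

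Because both bounds are already available as cited results, there is essentially no computational obstacle; the argument is a clean squeeze. The only point requiring care — and what I would verify explicitly — is that the parameter $\kappa_1$ occurring in the upper bound of Proposition~\ref{K(C)<K(C_Y)+k1} is the same $\kappa_1$ that governs the minimal kernel dimension $\delta+\gamma-\kappa_1$ of $\C_Y$. Both arise from the standard-form generator matrix~(\ref{gen}) and the induced type $4^\delta 2^{\gamma-\kappa_1}$ of $\C_Y$, so the cancellation is legitimate, and the proof closes.
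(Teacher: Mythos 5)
Your proof is correct and follows essentially the same route as the paper: apply Proposition~\ref{K(C)<K(C_Y)+k1} with the minimal value $\delta+\gamma-\kappa_1$ for $\dim(\ker(\phi(\C_Y)))$ to get the upper bound $\gamma+\delta$, which meets the lower bound of Theorem~\ref{minandmax}. Your version is slightly more explicit about the squeeze against the lower bound, but the argument is the same.
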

\begin{proof}
	If $\K(\C_Y)$ is a minimum, by Proposition \ref{K(C)<K(C_Y)+k1}, then 
	$\dim(\ker(\Phi(\C))) \leq \kappa_1+ \delta + \gamma - \kappa_1 = \gamma + 
	\delta$.
\end{proof}

In the previous statements, we compute an upper bound of the kernel of a
$\add$-additive code $\C$ by considering the kernel of a code over $\Z_4$.
Now we shall give the exact value of the kernel of a $\add$-additive code $\C$
in terms of $\C_X$ and the kernel of a linear code over $\Z_4$, $\C'$. As we
have seen, in the case of a separable code $\C$, the code $\C'$ is exactly 
$\C_Y$
and the value $\dim(\ker(\Phi(\C)))$ is 
$\dim(\ker(\Phi(\C_X)))+\dim(\ker(\Phi(\C_Y)))$, where $\C_Y$ is a
cyclic code over $\Z_4$. If the code $\C$ is not separable, then  $\C'$ is not
necessarily cyclic.

Let $\C$ be a $\add$-additive code with generator matrix in the form of
(\ref{gen}) and let $\C'$ be the subcode generated by
\begin{equation} \label{genSubcode}
\left(\begin{array}{ccc|ccc}
{\bf 0} & {\bf 0} & {\bf 0} & 2T_1 & 2 I_{\gamma - (\kappa_1 + \kappa_2)}& {\bf
	0} \\
{\bf 0} & {\bf 0} & S' & S  & R& I_{\delta}
\end{array}\right).
\end{equation}

\begin{theorem}
	Let $\C$ be a $\add$-additive with generator matrix in the form of
	(\ref{gen}) and let $\C'$ be the subcode generated by the matrix in 
	(\ref{genSubcode}). Then
	$\dim(\ker(\Phi(\C)) = \kappa_1 + \kappa_2 + \dim (\ker(\phi(\C'_Y)))$.
\end{theorem}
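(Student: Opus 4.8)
The plan is to work directly with the set description $\K(\C) = \{\vv\in\C \mid 2\vv*\vw\in\C,\ \forall\vw\in\C\}$ and the analogous $\K(\C'_Y)=\{z'\in\C'_Y\mid 2z'*s\in\C'_Y,\ \forall s\in\C'_Y\}$, since $\dim(\ker(\Phi(\C)))=\log_2|\K(\C)|$ and $\dim(\ker(\phi(\C'_Y)))=\log_2|\K(\C'_Y)|$ because $\Phi$ and $\phi$ are injective. First I would compute, for $\vv=(v\mid v')$ and $\vw=(w\mid w')$, that the binary part of $2\vv*\vw$ vanishes and $2v'*w'=2\tilde{v}'*\tilde{w}'$, so $2\vv*\vw=(\zero\mid 2\tilde{v}'*\tilde{w}')$, an element whose $\Z_4$-part has all entries in $\{0,2\}$. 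Hence membership of $\vv$ in $\K(\C)$ depends only on the binary reduction $\tilde{v}'$, and the defining condition becomes $(\zero\mid 2\tilde{v}'*\tilde{w}')\in\C$ for every $\vw\in\C$.

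The crux, which I expect to be the main obstacle, is to show that this condition coincides exactly with the one defining $\K(\C'_Y)$. I would extract two facts from the generator matrix (\ref{gen}). First, forcing the binary part of a codeword to vanish kills the coefficients on the $I_{\kappa_1}$ and $I_{\kappa_2}$ rows, and I would show that for an order-two element $z$ (entries in $\{0,2\}$) one has $(\zero\mid z)\in\C$ if and only if $z\in\C'_Y$; the key point is that order-two-ness forces the coefficients on the order-four generators into $\{0,2\}$ via the $I_\delta$ block, which automatically annihilates the obstructing $S'$-block in the binary part. Since $2\tilde{v}'*\tilde{w}'$ is always order two, this identifies the relevant membership in $\C$ with membership in $\C'_Y$. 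Second, because $\C_Y$ differs from $\C'_Y$ only by the order-two rows $(2T_2\mid\zero\mid\zero)$, and binary reduction is additive with those reducing to $\zero$, the sets of reductions $\{\tilde{w}'\mid w'\in\C_Y\}$ and $\{\tilde{w}'\mid w'\in\C'_Y\}$ coincide; call this common set $B$. Combining the two facts, $\vv\in\C$ satisfies the kernel condition precisely when $\tilde{v}'$ lies in $G=\{t\in B\mid 2t*s\in\C'_Y\ \forall s\in B\}$, and the very same $G$ describes $\K(\C'_Y)$.

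Finally I would count using the reduction homomorphism $\rho\colon\C\to B$, $\vv\mapsto\tilde{v}'$, and its analogue $\rho'\colon\C'_Y\to B$, which are surjective with kernels equal to the respective $2$-torsion subgroups $\{\vv\mid 2\vv=\zero\}$ and $\{z'\in\C'_Y\mid 2z'=\zero\}$. Counting the standard-form generators (the order-four coefficients being forced into $\{0,2\}$) gives these torsion subgroups sizes $2^{\gamma+\delta}$ and $2^{\gamma-\kappa_1-\kappa_2+\delta}$ respectively. Since $\K(\C)=\rho^{-1}(G)$ and $\K(\C'_Y)=(\rho')^{-1}(G)$ have fibres of these constant sizes over the common set $G$, I obtain $|\K(\C)|=2^{\kappa_1+\kappa_2}\,|\K(\C'_Y)|$, and taking $\log_2$ yields $\dim(\ker(\Phi(\C)))=\kappa_1+\kappa_2+\dim(\ker(\phi(\C'_Y)))$. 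The only delicate verifications are the independence of the standard-form generators used in the two torsion counts and the equivalence of membership for order-two elements; everything else is routine additive bookkeeping.
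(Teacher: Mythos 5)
Your argument is correct, and it is organized differently from the paper's. The paper splits $\C$ as a direct sum of $\bar{\C}=\langle\{\vu_i\}_{i=1}^{\kappa_1+\kappa_2}\rangle$ and $\C'$, observes $\bar{\C}\subseteq\K(\C)$, and then identifies $\K(\C)\cap\C'$ first with $\K(\C')$ and then with $\K(\C'_Y)$; you instead fibre both $\K(\C)$ and $\K(\C'_Y)$ over the common set $B$ of binary reductions and compare the constant fibre sizes $2^{\gamma+\delta}$ and $2^{\gamma-(\kappa_1+\kappa_2)+\delta}$. The underlying mechanism is the same in both proofs --- $2\vv*\vw=(\zero\mid 2\tilde{v}'*\tilde{w}')$ is an order-two word with vanishing binary part, so the kernel condition is really a condition in $\C'_Y$ --- but your version makes explicit the step the paper leaves unjustified, namely that for order-two $z$ one has $(\zero\mid z)\in\C$ if and only if $z\in\C'_Y$: the $I_{\kappa_1}$ and $I_{\kappa_2}$ blocks kill the coefficients of the first $\kappa_1+\kappa_2$ rows in one direction, and the $I_\delta$ block forces the order-four coefficients into $\{0,2\}$ and hence annihilates the $S'$ contribution in the other. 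This is precisely what licenses the paper's assertions that ``$\vv\in\K(\C)$ if and only if $2\vv*\vw\in\C'$ for all $\vw\in\C'$'' and that $(\zero\mid 2v'*w')\in\C'$ if and only if $2v'*w'\in\C'_Y$. What your route buys is that the counting becomes completely mechanical once the two torsion sizes are known (and those follow from the independence of the rows of (\ref{gen}), together with the surjectivity of your reduction maps onto $B$, which are the only remaining points to check); what the paper's route buys is brevity and a structural statement, essentially $\K(\C)=\bar{\C}\oplus(\K(\C)\cap\C')$, that carries slightly more information than the dimension count alone.
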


\begin{proof}
	Let $\C$ be a $\add$-additive cyclic code with generator matrix $G$ in the
	form
	of (\ref{gen}). Let $\{\vu_i=(u_i\mid u'_i)\}_{i=1}^{\gamma}$ be the
	first $\gamma$ rows and  $\{\vv_j=(v_j\mid v'_j)\}_{j=1}^{\delta}$ the last
	$\delta$ rows of $G$. Define the codes
	$\bar{\C}=\langle\{\vu_i\}_{i=1}^{\kappa_1+\kappa_2}\rangle,$
	$\C'=\langle\{\vu_i\}_{i=\kappa_1+\kappa_2+1}^{\gamma}
	,\{\vv_j\}_{j=1}^{\delta}\rangle$.
	
	By \cite{Z2Z4RK}, $\vv\in\K(\C)$ if and only if $2\vv*\vw\in\C$ for all
	$\vw\in\C$. We have that $\vv\in\bar{C}$ is of order $2$ and hence $2\vv*\vw=0$, 
	$\forall \vw\in\C$. Then, $\bar{\C}\subseteq \K(\C)$ and
	$\dim(\ker(\bar{\C}))=\kappa_1+\kappa_2$. Let $\vv=(v\mid v')\in\C'$. Since
	$2\vv*\vw=\zero$ for all $\vw\in\bar{\C}$, we have that $\vv\in\K(\C)$ if and
	only if $2\vv*\vw\in\C'$ for all $\vw\in\C'$; that is, $\vv\in\K(\C')$. Finally,
	$2\vv*\vw=(\zero\mid 2v'*w')\in\C'$ if and only if $2v'*w'\in(\C')_Y$,
	and hence $\dim(\ker(\Phi(\C')))=\dim(\ker(\Phi((\C')_Y)))$. Therefore, 
	$\dim(\ker(\Phi(\C))) =
	\kappa_1 + \kappa_2 + \dim (\ker(\phi(\C'_Y)))$.
\end{proof}


Now we will establish the kernel of a $\add$-additive cyclic codes taking into 
account
its generator polynomials. In the following theorem we shall
prove that if $\C$ is $\add$-additive cyclic code, then $\K(\C)$ is also
cyclic. This result is a generalization of the case when $\C$ is a linear
cyclic code over $\Z_4$ that is given in \cite{Z4CyclicRK}.

\begin{theorem}\label{K(C)_cyclic}
	Let $\C$ be a $\add$-additive cyclic code.  Then $\K(\C) $ is a $\add$-additive
	cyclic code.
\end{theorem}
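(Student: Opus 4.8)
The plan is to use the characterization $\K(\C)=\{\vv\in\C\mid 2\vv*\vw\in\C,\ \forall\vw\in\C\}$ together with the fact, already recorded in the excerpt, that $\K(\C)$ is a $\add$-additive code (a subgroup of $\Z_2^\alpha\times\Z_4^\beta$). A subgroup that is invariant under the shift $\pi$ is automatically a $\Z_4[x]$-submodule of $R_{\alpha,\beta}$: the $\Z_4$-scalar action is already built into the group structure, multiplication by $x$ is exactly $\pi$, and every polynomial acts as a $\Z_4$-combination of shifts. Hence it suffices to prove the single containment $\pi(\K(\C))\subseteq\K(\C)$; the submodule structure then comes for free.

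The key elementary observation I would isolate first is that $\pi$, being a permutation of the coordinate positions acting separately on the $X$- and $Y$-blocks, commutes with the component-wise product and with scalar multiplication: for all $\vv,\vw$,
\[
\pi(\vv)*\pi(\vw)=\pi(\vv*\vw)\qquad\text{and}\qquad 2\pi(\vv)=\pi(2\vv).
\]
These hold because $*$ is defined coordinatewise (and, since the $X$-part of $2\vv*\vw$ vanishes, the content lives entirely in $\Z_4^\beta$), and any reindexing of coordinates commutes with a coordinatewise operation.

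With this in hand, take $\vv\in\K(\C)$ and show $\pi(\vv)\in\K(\C)$. First, $\pi(\vv)\in\C$ because $\C$ is cyclic. Next, fix an arbitrary $\vw\in\C$. Since $\C$ is cyclic and $\pi$ is a bijection of the finite set $\C$, we have $\pi^{-1}(\C)=\C$, so $\vw'=\pi^{-1}(\vw)\in\C$. Then
\[
2\pi(\vv)*\vw=\pi(2\vv)*\pi(\vw')=\pi\!\left(2\vv*\vw'\right).
\]
Because $\vv\in\K(\C)$ and $\vw'\in\C$, the argument $2\vv*\vw'$ lies in $\C$, and cyclicity of $\C$ gives $\pi(2\vv*\vw')\in\C$. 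Thus $2\pi(\vv)*\vw\in\C$ for every $\vw\in\C$, which is precisely the condition for $\pi(\vv)\in\K(\C)$.

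I do not anticipate a genuine obstacle: the only point needing care is the commutation identity above, and in particular the legitimacy of replacing $\vw$ by $\vw'=\pi^{-1}(\vw)$. This step is where cyclicity of $\C$ is used a second, essential time, since it guarantees $\pi^{-1}$ maps $\C$ onto itself. Once $\pi(\K(\C))\subseteq\K(\C)$ is established, $\K(\C)$ is a $\add$-additive code invariant under $\pi$, hence a $\add$-additive cyclic code, which completes the proof.
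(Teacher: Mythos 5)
Your proposal is correct and follows essentially the same route as the paper's own proof: both reduce to showing $\pi(\K(\C))\subseteq\K(\C)$ and both rest on the identity $2\pi(\vv)*\vw=\pi\left(2\vv*\pi^{-1}(\vw)\right)$ together with the cyclicity of $\C$. You merely make explicit two points the paper leaves implicit (that $\pi$ commutes with the coordinatewise product, and that a shift-invariant $\add$-additive subgroup is automatically a $\Z_4[x]$-submodule), which is fine.
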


\begin{proof}
	We know that $\K(\C)$ is a $\add$-additive code so we just have to show that if
	$\vu = (u \ | \ u') \in \K(\C)$ then $\pi(\vu) \in \K(\C)$. That is, we want
	to show that $2\pi(\vu)*\vw\in\C$, for all $\vw\in\C$.
	
	Let $\vu \in \K(\C), \vw\in\C$.
	Then  $2 \pi(\vu)*\vw = \pi (2 \vu * \pi^{-1} (\vw))$.
	We have that $\vu \in \K(\C)$ and $\pi^{-1}(\vw)\in
	\C$, therefore $2 \vu * \pi^{-1} (\vw) \in \C$ by
	(\ref{equation:linear}).
	Since
	the code $\C$ is cyclic, $\pi(2 \vu * \pi^{-1} (\vw)) \in \C$, which gives
	that $2 \pi(\vu)*\vw\in\C$, and  $\pi(\vu) \in \K(\C)$.
\end{proof}

\begin{corollary}Let ${\cal C}=\langle (b\mid 0), (\ell\mid fh +2f) \rangle$ be
	a $\add$-additive cyclic code, where $fhg = x^\beta -1.$ Then, $\K(\C)=\langle
	(b_k\mid 0), (\ell_k\mid f_kh_k +2f_k) \rangle$, where $f_kh_kg_k = x^\beta -1$
	and
	\begin{enumerate}
		\item $f$ divides $f_k$;
		\item $\gcd(b,\ell)$ divides $\gcd(b_k,\ell_k)$.
	\end{enumerate}
\end{corollary}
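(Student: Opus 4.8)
The plan is to combine two results already established in the excerpt: Theorem~\ref{K(C)_cyclic}, which guarantees that $\K(\C)$ is itself a $\add$-additive cyclic code, and Proposition~\ref{f'_div_f}, which controls the generator polynomials of one $\add$-additive cyclic code contained in another. No new machinery is needed beyond correctly invoking these two facts.

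First I would invoke Theorem~\ref{K(C)_cyclic} to conclude that $\K(\C)$ is a $\add$-additive cyclic code. Since $\beta$ is odd by the standing assumption, the structural description in~(\ref{form}) applies, so $\K(\C)$ admits a standard-form generating set $\langle (b_k\mid 0),(\ell_k\mid f_kh_k+2f_k)\rangle$ with $f_kh_kg_k=x^\beta-1$, $b_k$ dividing $x^\alpha-1$, and $\deg(\ell_k)<\deg(b_k)$. This already yields the claimed shape of $\K(\C)$.

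Next I would use the inclusion $\K(\C)\subseteq\C$, which holds by definition of the kernel (as noted immediately after its definition). Applying Proposition~\ref{f'_div_f} with $\C_0=\K(\C)$ (the contained code, carrying the polynomials $f_k,b_k,\ell_k$) and $\C_1=\C$ (the ambient code, carrying $f,b,\ell$), part~(1) gives that $f$ divides $f_k$ and part~(2) gives that $\gcd(b,\ell)$ divides $\gcd(b_k,\ell_k)$. These are precisely the two divisibility assertions of the corollary, completing the argument.

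Since both ingredients are already available, there is no genuine obstacle here; the one point demanding care is the role assignment when citing Proposition~\ref{f'_div_f}. Because $\K(\C)$ is the \emph{smaller} code, its polynomials occupy the unprimed slots $(f,b,\ell)$ of the proposition while those of $\C$ occupy the primed slots, so each divisibility in the proposition translates into the reverse-looking statement recorded in the corollary. Beyond this bookkeeping, no further computation is required.
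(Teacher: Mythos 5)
Your proposal is correct and follows exactly the paper's own argument: apply Theorem~\ref{K(C)_cyclic} to get that $\K(\C)$ is $\add$-additive cyclic (hence of the form~(\ref{form}) since $\beta$ is odd), then use $\K(\C)\subseteq\C$ together with Proposition~\ref{f'_div_f}. Your explicit remark about which code occupies the primed versus unprimed slots in the proposition is the only point where an error could creep in, and you have handled it correctly.
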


\begin{proof}
	By Theorem \ref{K(C)_cyclic}, $\K(\C)$ is cyclic and therefore $\K(\C)=\langle
	(b_k\mid 0), (\ell_k\mid f_kh_k +2f_k) \rangle$, where $f_kh_kg_k = x^\beta
	-1$. Since $\K(\C)\subseteq \C$, the result follows from
	Proposition~\ref{f'_div_f}.
\end{proof}

Let ${\cal C}=\langle (b\mid 0), (\ell\mid fh +2f) \rangle$ be a $\add$-additive
cyclic code, where $fhg = x^\beta -1.$ In \cite{Z2Z4Image} it is proved that 

\begin{equation}\label{eq:3gen}
\C=\langle (b\mid 0), (\ell'\mid fh),(\tilde{\mu}\ell\tilde{g}
\mid 2f)\rangle,
\end{equation}
where $\ell'=\ell -\tilde{\mu}\ell\tilde{g}$.

\begin{lemma}\label{lemm:lk}
	Let ${\cal C}=\langle (b\mid 0), (\ell\mid fh +2f) \rangle$ be a $\add$-additive
	cyclic code, where $fhg = x^\beta -1.$ Let $\langle (b\mid 0),
	(\ell_k\mid fhk +2f) \rangle\subset\C$, for $k|g$. Then ${\ell}_k=\tilde{k}\ell 
	+
	(1-\tilde{k})\tilde{\mu}\ell\tilde{g}
	\pmod{b}$, for $\mu$ in (\ref{eq:lambda-mu}).
\end{lemma}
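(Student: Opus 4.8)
The plan is to realize the generator $(\ell_k\mid fhk+2f)$ explicitly as a $\Z_4[x]$-combination of the three generators of $\C$ in the decomposition (\ref{eq:3gen}), and then simply read off its $\Z_2$-component. Recall from (\ref{eq:3gen}) that $\C=\langle (b\mid 0),(\ell'\mid fh),(\tilde{\mu}\ell\tilde{g}\mid 2f)\rangle$ with $\ell'=\ell-\tilde{\mu}\ell\tilde{g}$. Since $k$ divides $g$, the element $k\star(\ell'\mid fh)=(\tilde{k}\ell'\mid fhk)$ lies in $\C$, and adding the third generator yields
\[
k\star(\ell'\mid fh)+(\tilde{\mu}\ell\tilde{g}\mid 2f)=(\tilde{k}\ell'+\tilde{\mu}\ell\tilde{g}\mid fhk+2f)\in\C,
\]
which already has the prescribed $\Z_4$-component $fhk+2f$.

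Next I would substitute $\ell'=\ell-\tilde{\mu}\ell\tilde{g}$ into the $\Z_2$-component of this codeword and simplify, obtaining
\[
\tilde{k}\ell'+\tilde{\mu}\ell\tilde{g}=\tilde{k}(\ell-\tilde{\mu}\ell\tilde{g})+\tilde{\mu}\ell\tilde{g}=\tilde{k}\ell+(1-\tilde{k})\tilde{\mu}\ell\tilde{g},
\]
which is exactly the claimed expression. Because $(b\mid 0)\in\C$, I may add a suitable multiple of $b$ to the $\Z_2$-component to bring it into standard form (degree less than $\deg(b)$) without leaving $\C$; this shows that $\ell_k\equiv \tilde{k}\ell+(1-\tilde{k})\tilde{\mu}\ell\tilde{g}\pmod{b}$.

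The main obstacle is to verify that $\ell_k$ is determined \emph{unambiguously} modulo $b$, i.e.\ that any other $\Z_4[x]$-combination of the generators producing the same $\Z_4$-component $fhk+2f$ gives the same $\Z_2$-component modulo $b$. Two such combinations differ by a codeword $(c\mid 0)\in\C$ with vanishing $\Z_4$-part; tracking the freedom in the coefficients, the multiplier of $(\ell'\mid fh)$ is pinned down only modulo $g$ and the multiplier of $(\tilde{\mu}\ell\tilde{g}\mid 2f)$ only modulo $hg$, so the induced change in the $\Z_2$-component is a multiple of $\tilde{h}\tilde{g}\ell$, once one notes that $\ell'=\tilde{\lambda}\tilde{h}\ell$ (a consequence of reducing (\ref{eq:lambda-mu}) modulo $2$). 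I would then close the argument with Lemma~\ref{bdiviXSgcd}: since $b\mid\tilde{h}\gcd(b,\ell\tilde{g})$ and $\gcd(b,\ell\tilde{g})\mid\ell\tilde{g}$, it follows that $b\mid\tilde{h}\ell\tilde{g}$, so every such change is a multiple of $b$ and the standard-form representative $\ell_k$ is well defined modulo $b$, completing the proof.
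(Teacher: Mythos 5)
Your proof is correct and follows essentially the same route as the paper's: both express $(\ell_k\mid fhk+2f)$ as a $\Z_4[x]$-combination of the three generators in (\ref{eq:3gen}) and read off the $\Z_2$-component as $\tilde{k}\ell'+\tilde{\mu}\ell\tilde{g}=\tilde{k}\ell+(1-\tilde{k})\tilde{\mu}\ell\tilde{g}$. The one genuine difference is that the paper simply asserts that the coefficients are forced to be $c_2=k$ and $c_3=1$, whereas you correctly observe that they are only determined modulo $g$ and (essentially) modulo $h$, and you close this gap by checking that the resulting ambiguity in the $\Z_2$-component is a multiple of $\tilde{h}\tilde{g}\ell$ (using $\ell'=\tilde{\lambda}\tilde{h}\ell$ from (\ref{eq:lambda-mu})) and hence vanishes modulo $b$ by Lemma~\ref{bdiviXSgcd}. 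This well-definedness step is exactly what the lemma's ``for any such $\ell_k$'' formulation requires, so your version is the more complete argument; the paper's buys brevity at the cost of leaving that verification implicit.
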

\begin{proof}
	Let ${\cal C}=\langle (b\mid 0), (\ell'\mid fh),(\tilde{\mu}\ell\tilde{g}
	\mid 2f)\rangle$, where $\ell'=\ell -\tilde{\mu}\ell\tilde{g}$, as in 
	(\ref{eq:3gen}). Since $(\ell_k\mid fhk +2f)\in\C$ and $(\ell_k\mid fhk
	+2f)=c_1(b\mid 0)+c_2(\ell'\mid fh)+c_3(\tilde{\mu}\ell\tilde{g}\mid 2f)$, we
	obtain $c_2=k$, $c_3=1$ and
	${\ell}_k=\tilde{k}\ell' + \tilde{\mu}\ell\tilde{g}\pmod{b}=\tilde{k}\ell +
	(1-\tilde{k})\tilde{\mu}\ell\tilde{g} \pmod{b}$.
\end{proof}

\begin{theorem}\label{Theo:GeneratorsKernel}
	Let ${\cal C}=\langle (b\mid 0), (\ell\mid fh +2f) \rangle$ be a $\add$-additive
	cyclic code, where $fhg = x^\beta -1.$ Then,
	$\K(\C)=\langle (b\mid 0), ({\ell}_k\mid fhk +2f) \rangle$, where $k$ divides
	$g$ and ${\ell}_k=\tilde{k}\ell + (1-\tilde{k})\tilde{\mu}\ell\tilde{g}
	\pmod{b}$, for $\mu$ in (\ref{eq:lambda-mu}).
\end{theorem}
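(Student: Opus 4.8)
The plan is to build on the structure already in place. By Theorem~\ref{K(C)_cyclic} the code $\K(\C)$ is $\add$-additive cyclic, so by the corollary to that theorem I may write $\K(\C)=\langle (b_k\mid 0),(\ell_k\mid f_kh_k+2f_k)\rangle$ with $f_kh_kg_k=x^\beta-1$, where already $f$ divides $f_k$ and $\gcd(b,\ell)$ divides $\gcd(b_k,\ell_k)$. The entire argument then reduces to pinning down the generator polynomials: I want to prove $f_k=f$, then $h_k=hk$ for some $k\mid g$, and finally $b_k=b$; once these hold, the formula for $\ell_k$ is exactly the content of Lemma~\ref{lemm:lk}. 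The backbone is the sandwich $\C_b\subseteq\K(\C)\subseteq\C$. The inclusion $\K(\C)\subseteq\C$ is immediate from the definition, while $\C_b\subseteq\K(\C)$ holds because every order-two codeword $\vv$ satisfies $2\vv=\zero$, hence $2\vv*\vw=\zero\in\C$ for all $\vw\in\C$.

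For $f_k=f$ I would play the two inclusions against each other via Proposition~\ref{f'_div_f}. The inclusion $\K(\C)\subseteq\C$ gives $f\mid f_k$. For the reverse, by Proposition~\ref{2GenOrderTwoCode} the subcode $\C_b=\langle(b\mid 0),(\bar\mu\ell\tilde g\mid 2f)\rangle$ has $Y$-generator $2f=f\cdot(hg)+2f$ (since $f\cdot hg=x^\beta-1\equiv 0$), so the $f$-polynomial of $\C_b$ is $f$; applying Proposition~\ref{f'_div_f} to $\C_b\subseteq\K(\C)$ yields $f_k\mid f$, whence $f_k=f$. Next, with $f_k=f$ the relation $f_kh_kg_k=fhg$ becomes $h_kg_k=hg$, and the $Y$-inclusion $\langle fh_k+2f\rangle\subseteq\langle fh+2f\rangle$ gives $fh_k+2f=p(fh+2f)$ for some $p$. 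Reducing modulo $2$ gives $\tilde f\tilde h_k\equiv \tilde p\tilde f\tilde h\pmod{x^\beta-1}$, hence $\tilde h\mid\tilde h_k$; since $\beta$ is odd the factorization of $x^\beta-1$ over $\Z_4$ lifts uniquely from $\Z_2$, so $h\mid h_k$. Writing $h_k=hk$ and combining with $h_kg_k=hg$ gives $kg_k=g$, so $k\mid g$.

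The main obstacle is $b_k=b$, and here I would work with the pure-$X$ subcode $\C_0=\{(v\mid 0)\in\C\}$. As in the proof of Proposition~\ref{K(C)<K(C_Y)+k1} one has $\C_0\subseteq\K(\C)$, and since any pure-$X$ codeword of $\K(\C)$ is a pure-$X$ codeword of $\C$, the sandwich forces the pure-$X$ subcodes of $\K(\C)$ and $\C$ to coincide. The delicate point is then to show that such a subcode is generated by the first generator alone, that is $\C_0=\langle(b\mid 0)\rangle$ and likewise $(\K(\C))_0=\langle(b_k\mid 0)\rangle$; equating them gives $\langle b\rangle=\langle b_k\rangle$, so $b=b_k$. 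Proving $\C_0=\langle(b\mid 0)\rangle$ is where the real computation lies: a pure-$X$ codeword has the form $(\tilde p b+\tilde q\ell\mid 0)$ with $q$ annihilating $fh+2f$, and a short computation shows every such $q$ reduces modulo $2$ to a multiple of $\frac{x^\beta-1}{\tilde f}=\tilde g\tilde h$. Lemma~\ref{bdiviXSgcd}, which gives $b\mid\frac{x^\beta-1}{\tilde f}\gcd(b,\ell)$ and hence (as $\gcd(b,\ell)\mid\ell$) $b\mid\frac{x^\beta-1}{\tilde f}\ell$, then forces $b\mid\tilde q\ell$, so indeed $b\mid\tilde p b+\tilde q\ell$.

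Finally, having established $b_k=b$, $f_k=f$ and $h_k=hk$ with $k\mid g$, the code $\K(\C)=\langle(b\mid 0),(\ell_k\mid fhk+2f)\rangle$ has exactly the shape required by Lemma~\ref{lemm:lk}, and since $\K(\C)\subseteq\C$ that lemma applies directly and yields $\ell_k=\tilde k\ell+(1-\tilde k)\tilde\mu\ell\tilde g\pmod b$, completing the proof. I expect the annihilator computation together with the application of Lemma~\ref{bdiviXSgcd} that identifies the pure-$X$ subcode to be the hard part, with the reductions giving $f_k=f$ and $h\mid h_k$ being comparatively routine.
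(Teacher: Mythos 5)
Your proof is correct and follows essentially the same route as the paper's: establish cyclicity via Theorem~\ref{K(C)_cyclic}, pin down $b_k$, $f_k$ and $h_k$ using the sandwich $\C_b\subseteq\K(\C)\subseteq\C$ together with Propositions~\ref{2GenOrderTwoCode} and~\ref{f'_div_f}, and then read off $\ell_k$ from Lemma~\ref{lemm:lk} (equivalently, from (\ref{eq:3gen})). The only difference is that you supply full justifications for two steps the paper compresses --- the annihilator-plus-Lemma~\ref{bdiviXSgcd} argument showing the pure-$X$ subcodes of $\C$ and $\K(\C)$ both equal $\langle (b\mid 0)\rangle$, which the paper dismisses with ``Clearly, $b_k=b$,'' and the explicit mod-$2$ divisibility argument giving $h_k=hk$ with $k\mid g$, which the paper delegates to an analogue of a cited theorem --- and both of your justifications are sound.
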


\begin{proof}
	By Theorem \ref{K(C)_cyclic}, $\K(\C)$ is cyclic and then $\K(\C)=\langle
	({b}_k\mid 0), ({\ell}_k\mid {f}_k{h}_k +2{f}_k) \rangle$. Clearly,
	${b}_k=b$. Since $\C_b\subseteq \K(\C)\subseteq\C$, by Proposition
	\ref{2GenOrderTwoCode} and Proposition \ref{f'_div_f}, we conclude that
	${f}_k=f$. Since $\K(\C)_Y=\langle f{h}_k+2f\rangle\subseteq\C_Y$, with
	an argument analogous to that of \cite[Theorem 9]{Z4CyclicRK} we obtain that 
	${h}_k=hk$
	with $k$ a divisor of $g$.

	Let $\ell'=\ell -\tilde{\mu}\ell\tilde{g}$. By (\ref{eq:3gen}), $(\ell'\mid fh),
	(\tilde{\mu}\ell\tilde{g}\mid 2f)\in\C$. Therefore,
	${\ell}_k=\tilde{k}\ell' + \tilde{\mu}\ell\tilde{g}\pmod{b}=\tilde{k}\ell +
	(1-\tilde{k})\tilde{\mu}\ell\tilde{g} \pmod{b}$.
\end{proof}

Theorem~\ref{all-kernels} shows that there exists a $\add$-additive code for all
possible values of the kernel for a given type $(\alpha, \beta; \gamma,\delta;
\kappa)$. Considering the last theorem, the next example illustrates that this
result is not true for $\add$-additive cyclic codes; i.e., for a given type
$(\alpha, \beta; \gamma,\delta; \kappa)$ there does not always exist a
$\add$-additive cyclic code for all possible values of the kernel.
Furthermore, it shows that there does not always exist a $\add$-additive cyclic 
code
for a given type $(\alpha, \beta; \gamma,\delta; \kappa)$.

\begin{example}\label{ex:Kernel}
	By Theorem~\ref{all-kernels}, there exists a
	$\add$-additive code $\C$ of type $(2,7; $ $2,3;$ $\kappa)$ with
	$\dim(\ker(\Phi(\C)))=k_d$, for all $k_d\in\{5,6,8\}$. We will see that there 
	does
	not exist any cyclic $\add$-additive code of type $(2,7; 2,3;
	\kappa)$, with dimension of the kernel in $\{6,8\}$.

	Let $\alpha=2$ and $\beta=7$. We have that $x^7-1=(x-1)(x^3 + 2x^2 + x + 3)(x^3 
	+ 3x^2 + 2x + 3)$. Let ${\cal C}=\langle (b\mid 0), (\ell\mid fh +2f) \rangle$ 
	be a
	$\add$-additive cyclic code of type $(2,7; 2,3; \kappa)$, where $fhg = x^7 -1$.
	
	By Theorem~\ref{TypeDependingDeg}, $\deg(g)=3$ and $\deg(b)=\deg(h)\leq 2$. Let
	$\{p_3,q_3\}=\{(x^3 + 2x^2 + x + 3),(x^3 + 3x^2 + 2x + 3)\}$. Assume without
	loss of generality that $g=p_3$ and, since $\deg(h)\leq 2$, we have that $q_3$
	divides $f$. It is easy to see that
	$\gcd(q_3,(\tilde{p_3}\otimes\tilde{p_3}))\not=1$ and therefore
	$\gcd(\frac{\tilde{f}b}{\gcd(b, \ell \tilde{g})}, (\tilde{g}\otimes\tilde{g})
	)\neq 1$. Hence, by Theorem~\ref{Phi(C)Linear}, there does not exist a
	$\add$-additive code of type $(2,7; 2,3; \kappa)$ with linear Gray image. Thus,
	$\dim(\ker(\Phi(\C)))\neq 8$.
	
	By Theorem~\ref{Theo:GeneratorsKernel}, $\K(\C)=\langle (b\mid 0),
	({\ell}_k\mid fhk +2f) \rangle$ where $k$ divides $g$. By the previous argument,
	$k\neq 1$ and then we have that $k=g=p_3$ and $\K(\C)=\langle (b\mid 0),
	({\ell}_k\mid 2f) \rangle$. Therefore $\K(\C)$ does not contain codewords of
	order $4$, thus $\dim(\ker(\Phi(\C)))=\gamma + \delta= 5$.
	
	Finally, we will give $\add$-additive cyclic codes of type $(2,7; 2,3;
	\kappa)$, for different values of $\kappa$. Recall that $\kappa\leq
	\min\{\alpha, \gamma\}=2$ and $\kappa= \alpha - \deg(\gcd(b,\ell \tilde{g}))$,
	then
	\begin{itemize}
		\item $\kappa=2$: $\C=\langle (x - 1\mid 0), (1\mid (x^3 + 2x^2 + x + 3)(x-1)
		+2(x^3 + 2x^2 + x + 3))\rangle$, or $\C=\langle (x - 1\mid 0), (1\mid (x^3 +
		3x^2 + 2x + 3)(x -1)
		+ 2(x^3 + 3x^2 + 2x + 3))\rangle$.
		\item $\kappa=1$: $\C=\langle (x - 1\mid 0), (0\mid (x^3 + 3x^2 + 2x + 3)(x -1)
		+ 2(x^3 + 3x^2 + 2x + 3))\rangle$, or $\C=\langle (x - 1\mid 0), (0\mid (x^3 +
		2x^2 + x + 3)(x-1)
		+2(x^3 + 2x^2 + x + 3))\rangle$.
		\item $\kappa=0$: In this case,  $\deg(\gcd(b,\ell \tilde{g}))=2$ and
		therefore, $\gcd(b,\ell \tilde{g})=x^2-1$. Note that $\tilde{p}_3$ and
		$\tilde{q}_3$ are not divisors of $x^2-1$ over $\Z_2$, thus there does not exist
		$\ell$ with $\deg(\ell)<2$ such that $\ell \tilde{g}=x^2-1$. There does not
		exist a $\add$-additive cyclic code of type $(2,7; 2,3; 0)$.
	\end{itemize}
\end{example}

The statement in Theorem \ref{Theo:GeneratorsKernel} is also true for any
maximal $\add$-additive cyclic subcode of a $\add$-additive cyclic code $\C$
whose Gray image is a linear subcode of $\Phi(C)$.

\begin{corollary}\label{Coro:GeneratorsMaximal}
	Let ${\cal C}=\langle (b\mid 0), (\ell\mid fh +2f) \rangle$ be a
	cyclic code, where $fhg = x^\beta -1.$ Then, if $\C_1$ is a maximal
	$\add$-additive cyclic subcode with $\Phi(\C_1)$ linear, then
	$\C_1=\langle (b\mid 0), ({\ell}_k\mid fhk +2f) \rangle$, where $k$ divides
	$g$ and ${\ell}_k=\tilde{k}\ell+(1 -\tilde{k})\tilde{\mu}\ell\tilde{g}
	\pmod{b}$, for $\mu$ in (\ref{eq:lambda-mu}).
\end{corollary}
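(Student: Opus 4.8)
The plan is to follow the same line as the proof of Theorem~\ref{Theo:GeneratorsKernel}, whose essential input was the chain $\C_b\subseteq\K(\C)\subseteq\C$. Here the only genuinely new point is to replace $\K(\C)$ by $\C_1$ and to establish the containment $\C_b\subseteq\C_1$; unlike for the kernel, this does not come for free, so I would extract it from the maximality hypothesis. Once $\C_b\subseteq\C_1\subseteq\C$ is in hand, pinning down the generator polynomials of $\C_1$ is formally identical to the kernel computation.

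First I would record that $\C_b$ is itself a $\add$-additive cyclic subcode of $\C$ (Proposition~\ref{2GenOrderTwoCode}) whose Gray image is linear, since every codeword of $\C_b$ has order dividing two, so $2\vu*\vw=\zero$ for all $\vu,\vw\in\C_b$ and (\ref{equation:linear}) holds trivially. Now consider $\C_1+\C_b$, which is again a $\add$-additive cyclic subcode of $\C$. The key step is to show that $\Phi(\C_1+\C_b)$ is linear, i.e. that $2\vu*\vw\in\C_1+\C_b$ for all $\vu,\vw\in\C_1+\C_b$. Writing $\vu=\vu_1+\vu_b$ and $\vw=\vw_1+\vw_b$ with $\vu_1,\vw_1\in\C_1$ and $\vu_b,\vw_b\in\C_b$, distributivity of the componentwise product gives $2\vu*\vw=2\vu_1*\vw_1+2\vu_1*\vw_b+2\vu_b*\vw_1+2\vu_b*\vw_b$. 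The three cross terms involving $\C_b$ vanish: on the $\Z_2$ coordinates multiplication by $2$ is zero, while on the $\Z_4$ coordinates each order-two factor has entries in $\{0,2\}$, so that $2\cdot 2=0$ in $\Z_4$ annihilates the product. Hence $2\vu*\vw=2\vu_1*\vw_1\in\C_1$ because $\Phi(\C_1)$ is linear, and therefore $\Phi(\C_1+\C_b)$ is linear. By maximality of $\C_1$ we conclude $\C_1+\C_b=\C_1$, that is, $\C_b\subseteq\C_1$.

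With $\C_b\subseteq\C_1\subseteq\C$ established, I would write $\C_1=\langle(b_1\mid 0),(\ell_1\mid f_1h_1+2f_1)\rangle$ (using that $\C_1$ is cyclic) and determine each polynomial exactly as in Theorem~\ref{Theo:GeneratorsKernel}. Since $(b\mid 0)\in\C_b\subseteq\C_1$ we get $b_1\mid b$, while $\C_1\subseteq\C$ gives $b\mid b_1$, hence $b_1=b$; Proposition~\ref{f'_div_f} applied to $\C_b\subseteq\C_1$ and to $\C_1\subseteq\C$ forces both $f_1\mid f$ and $f\mid f_1$, so $f_1=f$; and from $(\C_1)_Y=\langle fh_1+2f\rangle\subseteq\C_Y$ the argument analogous to \cite[Theorem 9]{Z4CyclicRK} yields $h_1=hk$ for some divisor $k$ of $g$. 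Finally, since $\langle(b\mid 0),(\ell_1\mid fhk+2f)\rangle\subseteq\C$ with $k\mid g$, Lemma~\ref{lemm:lk} gives $\ell_1=\tilde{k}\ell+(1-\tilde{k})\tilde{\mu}\ell\tilde{g}\pmod{b}$, which is precisely the claimed form. The main obstacle is the maximality step: one must verify that adjoining the order-two subcode $\C_b$ preserves linearity of the Gray image, which is exactly the cross-term vanishing computation in the second paragraph.
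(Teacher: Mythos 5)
Your proof is correct and follows essentially the same route as the paper, which states this corollary without proof as a direct transfer of the argument of Theorem~\ref{Theo:GeneratorsKernel}. Your only genuinely new step --- using maximality of $\C_1$ together with the vanishing of the cross terms $2\vu*\vw$ involving order-two codewords to force $\C_b\subseteq\C_1$ --- is exactly the detail the paper leaves implicit, and it is carried out correctly.
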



The kernel of a binary code is the intersection of all its maximal linear 
subspaces.
Therefore, if $\C_1$, $\C_2,\dots \C_s$ are all the maximal subcodes of a
$\add$-additive code $\C$ such that $\phi(\C_i)$ is a linear subcode of
$\phi(\C)$, for $1\leq i \leq s$, then
\begin{equation}\label{eq:kernel}
\K(\C)=\bigcap_{i=1}^{s}\C_i.
\end{equation}

In \cite{Z4CyclicRK} it is proved that if $\C_1=\langle fh_1+2f \rangle$
and $\C_2=\langle fh_2+2f \rangle$ are quaternary cyclic codes of odd
length $n$, then $\C_1\cap\C_2=\langle f\,{\rm lcm}(h_1,h_2)+2f
\rangle$. We will give a similar result for $\add$-additive cyclic codes.

\begin{proposition}\label{lemma:intersection}
	Let $\C=\langle (b\mid 0),(l \mid fh+2f)\rangle$ be a $\add$-additive cyclic
	code. Let $\C_1=\langle (b\mid 0),(l_{k_1} \mid fhk_1+2f) \rangle$ and
	$\C_2=\langle (b \mid 0),(l_{k_2} \mid fhk_2+2f) \rangle$ be $\add$-additive
	maximal subcodes of $\C$ whose images under the Gray map are
	linear subcodes of $\phi(\C)$. Then
	$$
	\C_1\cap\C_2=\langle (b\mid 0),(l_{k'}\mid fhk'+2f) \rangle,
	$$
	where $k'=\,{\rm lcm}(k_1,k_2)$ and ${\ell}_{k'}=\tilde{k'}\ell+(1 
	-\tilde{k'})\tilde{\mu}\ell\tilde{g}
	\pmod{b}$, for $\mu$ in (\ref{eq:lambda-mu}).
\end{proposition}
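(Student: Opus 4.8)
The plan is to prove the two inclusions $\D\subseteq\C_1\cap\C_2$ and $\C_1\cap\C_2\subseteq\D$ separately, where I write $\D=\langle (b\mid 0),(\ell_{k'}\mid fhk'+2f)\rangle$ with $k'=\lcm(k_1,k_2)$. First I would record the structural facts. By Corollary~\ref{Coro:GeneratorsMaximal} the two subcodes have the stated shape with $k_1,k_2$ dividing $g$; since $\beta$ is odd, $x^\beta-1$ and hence $g$ is squarefree, so $k'=\lcm(k_1,k_2)$ again divides $g$ and $f$, $hk'$, $g/k'$ are pairwise coprime. Thus $\D$ is a genuine $\add$-additive cyclic code of the required type, and the same applies to $\C_1$ (with $f$, $hk_1$, $g/k_1$ in the roles of $f$, $h$, $g$) and to $\C_2$. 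For the $Y$-coordinates I would pass to the quaternary projections $(\C_1)_Y=\langle f(hk_1)+2f\rangle$, $(\C_2)_Y=\langle f(hk_2)+2f\rangle$ and invoke the intersection formula for quaternary cyclic codes of \cite{Z4CyclicRK}: their intersection is $\langle f\,\lcm(hk_1,hk_2)+2f\rangle$. Using $\lcm(hk_1,hk_2)=h\,\lcm(k_1,k_2)=hk'$, this gives $(\C_1)_Y\cap(\C_2)_Y=\langle fhk'+2f\rangle=\D_Y$.

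The inclusion $\D\subseteq\C_1\cap\C_2$ is, I expect, the heart of the argument, and the key device is Lemma~\ref{lemm:lk} applied to two different ambient codes. It suffices to show $(\ell_{k'}\mid fhk'+2f)\in\C_1$, the case of $\C_2$ being symmetric. Applying the three-generator decomposition (\ref{eq:3gen}) to $\C_1$ itself — legitimate since the polynomials $f$, $hk_1$, $g/k_1$ are pairwise coprime — one forms the codeword of $\C_1$ obtained by multiplying the order-four generator by $k'/k_1$ and adding the order-two generator; note $k'/k_1$ divides $g/k_1$ because $k_1\mid k'\mid g$. This codeword has $Y$-coordinate $fhk'+2f$; call its $X$-coordinate $w$. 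Since it lies in $\C_1\subseteq\C$ and has the form $(w\mid fhk'+2f)$ with $k'\mid g$, Lemma~\ref{lemm:lk} applied to the original code $\C$ forces $w\equiv\ell_{k'}\pmod b$. Hence $(\ell_{k'}\mid fhk'+2f)$ differs from this codeword of $\C_1$ only by a multiple of $(b\mid 0)$, so it belongs to $\C_1$, and by symmetry to $\C_2$; therefore $\D\subseteq\C_1\cap\C_2$. The delicate point here is purely one of hypotheses: one must check that $\C_1$ (and $\C_2$) satisfy the standing assumptions so that both (\ref{eq:3gen}) and Lemma~\ref{lemm:lk} are genuinely available for them, after which the formula for $\ell_{k'}$ is supplied automatically by Lemma~\ref{lemm:lk} rather than by any direct manipulation on the $\Z_2$-coordinates.

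For the reverse inclusion I would count sizes. The zero-$Y$ subcode $\{v:(v\mid 0)\in\C_1\cap\C_2\}$ is squeezed between $\langle b\rangle$, which lies in $\D\subseteq\C_1\cap\C_2$, and $\{v:(v\mid 0)\in\C_1\}$; the latter equals $\langle b\rangle$ because $\C_1$ has $\kappa_1=\alpha-\deg(b)$ by Theorem~\ref{TypeDependingDeg}, so the squeeze gives equality. Combining this with $(\C_1\cap\C_2)_Y\subseteq(\C_1)_Y\cap(\C_2)_Y=\langle fhk'+2f\rangle$ from the quaternary step and splitting $\C_1\cap\C_2$ by the $Y$-projection yields
$$|\C_1\cap\C_2|=2^{\alpha-\deg(b)}\,|(\C_1\cap\C_2)_Y|\le 2^{\alpha-\deg(b)}\,4^{\deg(g/k')}2^{\deg(hk')}=|\D|.$$
Together with the inclusion $\D\subseteq\C_1\cap\C_2$ already established, this forces $\C_1\cap\C_2=\D$, which is exactly the claim since $\ell_{k'}=\tilde{k'}\ell+(1-\tilde{k'})\tilde\mu\ell\tilde g\pmod b$ by construction.

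In summary, the quaternary intersection result of \cite{Z4CyclicRK} disposes of the $Y$-part, the size count (anchored by $\kappa_1=\alpha-\deg(b)$) disposes of the reverse inclusion, and the only real work is the forward inclusion, where the trick is to read off $w\equiv\ell_{k'}\pmod b$ by invoking Lemma~\ref{lemm:lk} in the ambient $\C$ for a codeword that has been manufactured inside $\C_1$. I expect the main obstacle to be verifying that $\C_1$ and $\C_2$ legitimately inherit the standing hypotheses (pairwise coprimality of their generator polynomials, which uses that $\beta$ is odd), since this is precisely what licenses the double use of Lemma~\ref{lemm:lk}.
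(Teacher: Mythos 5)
Your proposal is correct and follows essentially the same route as the paper: both rest on the quaternary intersection formula of \cite{Z4CyclicRK} for the $Y$-projections and on Lemma~\ref{lemm:lk} to pin down $\ell_{k'}$ modulo $b$. The paper's own proof is considerably terser (it passes directly from $(\C_1)_Y\cap(\C_2)_Y=\langle fhk'+2f\rangle$ and $(b\mid 0)\in\C_1\cap\C_2$ to the conclusion), whereas you supply the two inclusions explicitly --- manufacturing $(\ell_{k'}\mid fhk'+2f)$ inside each $\C_i$ and closing with the cardinality count anchored at $\kappa_1=\alpha-\deg(b)$ --- which fills in details the paper leaves implicit.
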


\begin{proof}
	Let $\C=\langle (b\mid 0),(l \mid fh+2f)\rangle$ be a $\add$-additive cyclic
	code and let $\C_i=\langle (b \mid 0),(l_{k_i} \mid fhk_i+2f) \rangle$ be a
	$\add$-additive
	maximal subcode of $\C$ whose image under the Gray map is a
	linear subcode of $\phi(\C)$.
	
	We first consider $(\C_1)_Y\cap(\C_2)_Y$.
	By \cite{Z4CyclicRK}, $(\C_1)_Y\cap(\C_2)_Y=\langle f\,{\rm lcm}(hk_1,hk_2)+2f
	\rangle=\langle fhk'+2f\rangle$, where $k'={\rm lcm}(k_1,k_2)$. Since
	$\langle(b|0)\rangle\in\C_1\cap\C_2$, then $\C_1\cap\C_2= \langle (b\mid
	0),(l_{k'}\mid fhk'+2f) \rangle$, where ${\ell}_{k'}=\tilde{k'}\ell+(1 
	-\tilde{k'})\tilde{\mu}\ell\tilde{g}
	\pmod{b}$, for $\mu$ in (\ref{eq:lambda-mu}) by Lemma~\ref{lemm:lk}.
\end{proof}

\begin{lemma}\label{inside}
	Let $\C$ be a $\add$-additive cyclic code and let $\D$ be a maximal cyclic 
	subcode with linear binary image. Then, $\K(\C)\subseteq \D$.
\end{lemma}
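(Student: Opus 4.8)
The plan is to exhibit $\D$ as a maximal object and trap $\K(\C)$ inside it: I would form the sum $\mathcal{E}=\K(\C)+\D$ and prove that $\mathcal{E}$ is again a $\add$-additive cyclic subcode of $\C$ whose Gray image is linear. Since $\D\subseteq\mathcal{E}$ and $\D$ is maximal among the cyclic subcodes of $\C$ with linear image, this forces $\mathcal{E}=\D$, and therefore $\K(\C)\subseteq\D$.

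First I would collect the ingredients about $\K(\C)$. By Theorem~\ref{K(C)_cyclic} the code $\K(\C)$ is a $\add$-additive cyclic code, and since $\Phi(\K(\C))=\ker(\Phi(\C))$ is a linear binary code by the very definition of the kernel, $\K(\C)$ is itself a cyclic subcode of $\C$ with linear image. The sum $\mathcal{E}=\K(\C)+\D$ is a subgroup of $\Z_2^\alpha\times\Z_4^\beta$ contained in $\C$, because $\C$ is a subgroup containing both summands, and it is invariant under the shift $\pi$ since $\pi$ is additive and both $\K(\C)$ and $\D$ are $\pi$-invariant. Hence $\mathcal{E}$ is a $\add$-additive cyclic subcode of $\C$ containing $\D$.

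The crux is to check that $\Phi(\mathcal{E})$ is linear, for which I would invoke the criterion following (\ref{equation:linear}): it suffices that $2\vv*\vw\in\mathcal{E}$ for all $\vv,\vw\in\mathcal{E}$. Writing $\vv=\vv_1+\vv_2$ and $\vw=\vw_1+\vw_2$ with $\vv_1,\vw_1\in\K(\C)$ and $\vv_2,\vw_2\in\D$, and expanding $2\vv*\vw$ by bilinearity of $*$, the product splits into the four terms $2\vv_1*\vw_1$, $2\vv_1*\vw_2$, $2\vv_2*\vw_1$, and $2\vv_2*\vw_2$. The last term lies in $\D$ because $\Phi(\D)$ is linear. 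Each of the first three has one factor in $\K(\C)$ and one factor in $\C$, hence lies in $\C$ by the defining property of the kernel; being of the form $2\vu$, such a term $\vc$ satisfies $2\vc=\zero$, so $2\vc*\vw'=\zero\in\C$ for every $\vw'\in\C$, which places $\vc$ in $\K(\C)$. Thus all four terms lie in $\mathcal{E}$, giving $2\vv*\vw\in\mathcal{E}$ and the linearity of $\Phi(\mathcal{E})$.

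Finally, $\mathcal{E}$ is a $\add$-additive cyclic subcode of $\C$ with linear Gray image that contains the maximal such subcode $\D$, so $\mathcal{E}=\D$ and $\K(\C)\subseteq\D$, as claimed. I expect the only delicate point to be the bookkeeping in the linearity check: one must first confirm that the mixed products $2\vv_1*\vw_2$ and $2\vv_2*\vw_1$ genuinely belong to $\C$ (using $\vv_1,\vw_1\in\K(\C)$ together with $\vv_2,\vw_2\in\C$) before the order-two argument can place them in $\K(\C)$; the remaining verifications are routine.
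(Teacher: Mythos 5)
Your proof is correct and follows essentially the same strategy as the paper's: both enclose $\K(\C)\cup\D$ in a cyclic subcode of $\C$ with linear Gray image and then invoke the maximality of $\D$. The only difference is in the bookkeeping --- the paper adjoins the products $2\vu*\vv$ for $\vu,\vv\in\K(\C)\cup\D$ as extra generators of that subcode, whereas you observe that $\K(\C)+\D$ is already closed under $2\vv*\vw$ because the mixed products are order-two elements of $\C$ and hence lie in $\K(\C)$, which is a slightly more explicit way of carrying out the same idea.
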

\begin{proof}
	If $\K(\C)\not\subseteq \D$, then consider the $\add$-additive code $\D'$ 
	generated by $\K(\C)\cup \D\cup \{2\vu * \vv\mid \vu,\vv\in \K(\C)\cup \D\}$. 
	Since the binary image of $\K(\C)\cup \D$ is cyclic, $\D'$ is a cyclic subcode 
	of $\C$. Moreover, since $2\vu * \vv \in \D'$, for all $\vu,\vv\in \D'$, we have 
	that $\D'$ has linear binary image, leading to a contradiction since we are 
	assuming that $\D$ is maximal.
\end{proof}

\begin{theorem}\label{ker:k}
	Let ${\cal C}=\langle (b\mid 0), (\ell\mid fh +2f) \rangle$ be a $\add$-additive
	cyclic code, where $fhg = x^\beta -1$. Assume that $k_1,\dots,k_s$ are the 
	divisors
	of $g$ of minimum degree such that
	$$\gcd\left(\frac{\tilde{f}b}{\gcd(b, \ell \frac{\tilde{g}}{\tilde{k_i}})},
	\left(\frac{\tilde{g}}{\tilde{k_i}}\otimes\frac{\tilde{g}}{\tilde{k_i}}\right)
	\right)=1,$$
	for $i=1,\dots,s$. Then,
	$$
	\K(\C)=\langle (b\mid 0), ({\ell}_k'\mid fhk' +2f) \rangle,
	$$
	where  $k'=\,{\rm lcm}(k_1,\dots,k_s)$ and
	${\ell}_{k'}=\tilde{k'}\ell+(1 -\tilde{k'})\tilde{\mu}\ell\tilde{g}
	\pmod{b}$, for $\mu$ in (\ref{eq:lambda-mu}).
\end{theorem}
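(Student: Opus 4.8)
The plan is to realize $\K(\C)$ as an intersection of maximal cyclic subcodes of $\C$ with linear Gray image, and then to evaluate that intersection by Proposition~\ref{lemma:intersection}. By Theorem~\ref{K(C)_cyclic} the code $\K(\C)$ is cyclic, and $\ker(\Phi(\C))=\Phi(\K(\C))$ is linear, so $\K(\C)$ is itself a cyclic subcode with linear image. Formula~(\ref{eq:kernel}) writes $\K(\C)$ as the intersection of all maximal subcodes of $\C$ with linear image, while Lemma~\ref{inside} shows $\K(\C)\subseteq\D$ for every maximal \emph{cyclic} subcode $\D$ with $\Phi(\D)$ linear. So the first step is to prove that $\K(\C)$ is exactly the intersection of the maximal cyclic subcodes with linear image; Lemma~\ref{inside} gives one inclusion, and the reverse inclusion is the point needing care (discussed last).

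Next I would identify these maximal cyclic subcodes. By Corollary~\ref{Coro:GeneratorsMaximal} each of them has the form $\C_k=\langle (b\mid 0),(\ell_k\mid fhk+2f)\rangle$ with $k\mid g$ and $\ell_k=\tilde k\ell+(1-\tilde k)\tilde\mu\ell\tilde g\pmod b$. Viewing $\C_k$ as a code with generator data $f'=f$, $h'=hk$, $g'=g/k$, $\ell'=\ell_k$ (so $f'h'g'=x^\beta-1$), Theorem~\ref{Phi(C)Linear} says $\Phi(\C_k)$ is linear exactly when $\gcd\!\big(\tilde f b/\gcd(b,\ell_k\,\tilde g/\tilde k),\,(\tilde g/\tilde k)\otimes(\tilde g/\tilde k)\big)=1$. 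Using the expression for $\ell_k$ from Lemma~\ref{lemm:lk} together with the reduction $\tilde\lambda\tilde h+\tilde\mu\tilde g=1$ of~(\ref{eq:lambda-mu}), I would simplify $\gcd(b,\ell_k\,\tilde g/\tilde k)$ to $\gcd(b,\ell\,\tilde g/\tilde k)$, turning this into the condition stated in the theorem. Since $\C_{k_0}\subseteq\C_{k}$ iff $k\mid k_0$, the subcode $\C_k$ is \emph{maximal} with linear image precisely when $k$ is of least degree among the divisors of $g$ satisfying that condition; these are the $k_1,\dots,k_s$.

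Finally I would intersect. Iterating Proposition~\ref{lemma:intersection} over $\C_{k_1},\dots,\C_{k_s}$ gives $\bigcap_{i=1}^s\C_{k_i}=\langle (b\mid 0),(\ell_{k'}\mid fhk'+2f)\rangle$ with $k'=\lcm(k_1,\dots,k_s)$ and $\ell_{k'}=\tilde{k'}\ell+(1-\tilde{k'})\tilde\mu\ell\tilde g\pmod b$, which is exactly the asserted generating set. Combined with the first step, this yields $\K(\C)=\C_{k'}$.

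The main obstacle is the reverse inclusion $\C_{k'}\subseteq\K(\C)$ in the first step: formula~(\ref{eq:kernel}) a priori ranges over \emph{all} maximal linear-image subcodes, which need not be cyclic, so it cannot simply be quoted against the cyclic ones. I expect to resolve this as in the $\Z_4$ case of~\cite{Z4CyclicRK}: the identity $2\pi(\vu)*\pi(\vv)=\pi(2\vu*\vv)$ shows the cyclic shift permutes the maximal linear-image subcodes, so their intersection is cyclic, and by Theorem~\ref{Theo:GeneratorsKernel} we already know $\K(\C)=\C_{k_0}$ for a single divisor $k_0\mid g$; the two inclusions $k'\mid k_0$ (Lemma~\ref{inside}) and $k_0\mid k'$ then force $k_0=k'$, the second being the delicate half that the $\Z_4$ argument supplies. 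The other technical point is the gcd simplification $\gcd(b,\ell_k\,\tilde g/\tilde k)=\gcd(b,\ell\,\tilde g/\tilde k)$, which I would establish from the pairwise coprimality of $\tilde f,\tilde h,\tilde g$ (valid as $\beta$ is odd) by analysing the common factors of $b$ with $u=\tilde k+(1-\tilde k)\tilde\mu\tilde g$.
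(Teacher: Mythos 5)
Your architecture matches the paper's: identify the maximal cyclic subcodes of $\C$ with linear Gray image as $\D_i=\langle(b\mid 0),(\ell_{k_i}\mid fhk_i+2f)\rangle$ via Corollary~\ref{Coro:GeneratorsMaximal} and Theorem~\ref{Phi(C)Linear}, intersect them with Proposition~\ref{lemma:intersection}, and get $\K(\C)\subseteq\bigcap_i\D_i$ from Lemma~\ref{inside}. But the inclusion you yourself flag as ``the delicate half'' --- $\bigcap_i\D_i\subseteq\K(\C)$, equivalently $k_0\mid k'$ where $\K(\C)=\C_{k_0}$ --- is never actually proved; you only assert that ``the $\Z_4$ argument supplies it.'' The two ingredients you name do not supply it: the observation that $\pi$ permutes the maximal linear-image subcodes only re-derives that $\K(\C)$ is cyclic (Theorem~\ref{K(C)_cyclic}, already known), and Theorem~\ref{Theo:GeneratorsKernel} only says $\K(\C)=\C_{k_0}$ for \emph{some} divisor $k_0$ of $g$, which is precisely the $k_0$ you are trying to pin down. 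Nothing in your sketch excludes that $k_0$ is a proper multiple of $k'$, i.e., that $\K(\C)$ is strictly smaller than the intersection of the maximal \emph{cyclic} linear-image subcodes. This is a genuine gap, and it is the heart of the theorem.

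The paper closes it by a short detour through non-cyclic codes: each $\D_i$ is enlarged to a subcode $\C_i$ of $\C$ with linear image that is maximal among \emph{all} subcodes, not just the cyclic ones, so that $\bigcap_i\D_i\subseteq\bigcap_i\C_i$, and then (\ref{eq:kernel}) identifies $\bigcap_i\C_i$ with $\K(\C)$. That extension step is the missing idea; you correctly diagnose that (\ref{eq:kernel}) cannot be quoted directly against the cyclic subcodes, but then you abandon it instead of bridging from the $\D_i$ to the maximal subcodes it does govern. Your remaining points --- the reduction $\gcd(b,\ell_k\,\tilde g/\tilde k)=\gcd(b,\ell\,\tilde g/\tilde k)$ needed to turn the linearity criterion for $\D_i$ into the condition of the statement, and the iteration of Proposition~\ref{lemma:intersection} to reach $k'=\lcm(k_1,\dots,k_s)$ --- are consistent with, and in the gcd step more explicit than, what the paper does.
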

\begin{proof}
	Assume that $k_1,\dots,k_s$ are the divisors
	of $g$ of minimum degree such that
	$$\gcd\left(\frac{\tilde{f}b}{\gcd(b, \ell \frac{\tilde{g}}{\tilde{k_i}})},
	\left(\frac{\tilde{g}}{\tilde{k_i}}\otimes\frac{\tilde{g}}{\tilde{k_i}}\right)
	\right)=1,$$
	for $i=1,\dots,s$. Let $\D_i$ be a cyclic subcode of $\C$, where $\D_i=\langle
	(b\mid 0), ({\ell}_{k_i}\mid fhk_i +2f) \rangle$, for some
	$\ell_{k_i}$. Note that $\Phi(\D_i)$ is linear by
	Theorem~\ref{Phi(C)Linear}. Since $k_i$ is a polynomial of minimum degree
	dividing $g$, then $\D_i$ is a maximal cyclic subcode of $\C$ with linear binary
	image. Then each $\D_i$ extends to $\C_i$ which is a maximal subcode of $\C$,
	not necessarily cyclic, with linear binary image. Note that every maximal code
	with linear image must contain a cyclic code
	with linear image, e.g., every maximal code contains $\K(\C)$ that is cyclic
	with linear image. By Lemma \ref{inside}, we know $\K(\C)\subseteq \D_i$
	and, therefore $\K(\C)\subseteq \cap_i \D_i$. But $\cap_i\C_i= \K(\C)\subseteq
	\cap_i \D_i\subseteq \cap_i \C_i$, so $\K(\C)=\cap_i\D_i$. By Corollary
	\ref{Coro:GeneratorsMaximal} and Proposition \ref{lemma:intersection}, the
	result
	follows. \end{proof}

\begin{example}
	Let $x^7-1=(x-1)p_3q_3$ over $\Z_4$. Let $\C=\langle (1\mid 0),(0\mid  f) 
	\rangle$ of type $(1, 7; 1, 6; 1 )$ with $f=(x-1)$, $h=1$ and $g=p_3q_3$.
	
	Note that $\gcd(\frac{\tilde{f}b}{\gcd(b, \ell \tilde{g})}, 
	(\tilde{g}\otimes\tilde{g}))=x-1\neq 1$. We have that all maximal cyclic 
	subcodes of $\C$ with binary linear image are $\C_1=\langle (1\mid 0),(0\mid 
	fp_3 + 2f) \rangle,$ and $ \C_2=\langle (1\mid 0),(0\mid fq_3 + 2f) \rangle.$
	Clearly, $k'=\,{\rm lcm}(p_3, q_3)=p_3q_3$ and then $\K(\C)=\langle (1\mid 
	0),(0\mid fp_3q_3 + 2f) \rangle=\langle (1\mid 0),(0\mid 2f) \rangle$.
\end{example}

\subsection{Rank of $\add$-Additive Cyclic Codes}

In this section, we will study the rank of a $\add$-additive
cyclic code $\C$. We will prove that $\R(\C)$ is also cyclic and we will
establish some properties of its generator polynomials. However, we will show
that there does not exist a $\add$-additive cyclic code for all possible values
of the rank, in contrast to what is exhibited in the following results for a
$\add$-additive code.

\begin{proposition}[\cite{Z2Z4RK}] \label{bounds-rank}
	Let $C$ be a $\add$-linear code of binary length $n=\alpha+2\beta$
	and type $(\alpha,\beta;\gamma, \delta;\kappa)$.  Then,
	$ \rank(\Phi(C))\in \{ \gamma+2\delta,\ldots, \min (\beta+\delta+\kappa, 
	\; \gamma
	+2\delta + \binom{\delta}{2} ) \}.$
\end{proposition}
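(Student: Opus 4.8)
The plan is to establish the two-sided bound $\gamma+2\delta\le\rank(\Phi(\C))\le\min\bigl(\beta+\delta+\kappa,\;\gamma+2\delta+\binom{\delta}{2}\bigr)$ directly, where $\rank(\Phi(\C))=\dim\langle\Phi(\C)\rangle$; the stated membership is then immediate, and note that (as for a single code) only the inequalities, not achievability of every intermediate value, are required. The lower bound is the easy half: $\Phi$ is injective, so $\langle\Phi(\C)\rangle$ contains the $2^{\gamma+2\delta}$ distinct vectors of $\Phi(\C)$ and hence has dimension at least $\gamma+2\delta$.

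For the bound $\rank(\Phi(\C))\le\gamma+2\delta+\binom{\delta}{2}$ I would fix group generators of $\C$ read off from the standard form (\ref{eq:StandardForm}): vectors $\vu_1,\dots,\vu_\gamma$ of order two (the first $\gamma$ rows) and $\vv_1,\dots,\vv_\delta$ of order four (the last $\delta$ rows). Writing an arbitrary codeword as $\sum_i\lambda_i\vu_i+\sum_j\mu_j\vv_j$ and expanding $\Phi$ by repeated use of (\ref{equation:linear}), one shows by induction that $\Phi$ of a sum equals the sum of the $\Phi$'s of the summands plus the pairwise corrections $\Phi(2\vv*\vw)$; this uses that $\Phi$ is $\FF_2$-linear on the doubled subcode $2\C$ (as $2\vv*\vw=\zero$ whenever $\vv,\vw\in 2\C$). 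The mixed corrections $\Phi(2\vu_i*\vu_{i'})$ and $\Phi(2\vu_i*\vv_j)$ vanish, since a factor coming from an order-two vector annihilates the product after doubling. Hence $\langle\Phi(\C)\rangle$ is spanned by the $\gamma+\delta$ images $\Phi(\vu_i),\Phi(\vv_j)$ together with the corrections $\Phi(2\vv_j*\vv_l)$ for $1\le j\le l\le\delta$; there are $\binom{\delta+1}{2}=\delta+\binom{\delta}{2}$ of the latter (the diagonal terms being $\Phi(2\vv_j)$), giving the count $\gamma+2\delta+\binom{\delta}{2}$. This part is routine bookkeeping.

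The harder bound is $\rank(\Phi(\C))\le\beta+\delta+\kappa$, and here I would argue with a ``folding'' map. Split the $\alpha+2\beta$ binary coordinates into the $X$-block and the two length-$\beta$ halves $Y_1,Y_2$ of the quaternary Gray image, so that $\Phi(v\mid v')=(v\mid\hat v'\mid\tilde v'+\hat v')$, and let $\sigma(x\mid y_1\mid y_2)=y_1+y_2$. Then $\sigma(\Phi(v\mid v'))=\tilde v'$, while every correction $\Phi(2\vv*\vw)$ lies in $\ker\sigma$ because its quaternary part is even. Thus $\sigma(\langle\Phi(\C)\rangle)$ is the span of the binary reductions $\{\tilde v'\}$, of dimension at most $\delta$, and by rank--nullity it remains to bound $\dim\bigl(\langle\Phi(\C)\rangle\cap\ker\sigma\bigr)$ by $\beta+\kappa$.

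The key step, and the one I expect to be the main obstacle, is to control the $X$-projection of $\langle\Phi(\C)\rangle\cap\ker\sigma$. Any element there is a sum $z=\sum_t\Phi(\vv_t)$ with $\sum_t\tilde v_t'=0$; setting $\mathbf{s}=\sum_t\vv_t\in\C$ and using that binary reduction is additive, $\widetilde{\mathbf{s}'}=\sum_t\tilde v_t'=0$, so $\mathbf{s}'$ has only even entries and $\mathbf{s}\in\C_b$. Hence the $X$-part of $z$, which equals the $X$-part of $\mathbf{s}$, lies in $(\C_b)_X$, a space of dimension $\kappa$. Since the subspace of $\langle\Phi(\C)\rangle\cap\ker\sigma$ with zero $X$-projection consists of vectors $(\zero\mid y\mid y)$ and so has dimension at most $\beta$, a second application of rank--nullity gives $\dim\bigl(\langle\Phi(\C)\rangle\cap\ker\sigma\bigr)\le\kappa+\beta$, whence $\rank(\Phi(\C))\le\delta+\kappa+\beta$. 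Combining the two upper bounds with the lower bound yields the stated range. The delicate points to get right are the vanishing of the mixed corrections in the first bound and the identification $\mathbf{s}\in\C_b$ in the second, both of which rest on the behaviour of multiplication by $2$ together with (\ref{equation:linear}).
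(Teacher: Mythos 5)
Your argument is correct, but it cannot be compared with ``the paper's own proof'' because there isn't one: Proposition~\ref{bounds-rank} is imported verbatim from \cite{Z2Z4RK} with no proof in this paper, so you have supplied a complete self-contained derivation where the authors rely on a citation. Your first upper bound, $\gamma+2\delta+\binom{\delta}{2}$, essentially re-proves Proposition~\ref{lemm:RankSet} (also only cited here): iterating (\ref{equation:linear}) over a sum of generators, using that $\Phi$ is additive on order-two vectors, and noting that any correction $2\vv*\vw$ involving an order-two factor vanishes, leaves exactly the spanning set $\{\Phi(\vu_i)\}$, $\{\Phi(\vv_j)\}$, $\{\Phi(2\vv_j)=\Phi(2\vv_j*\vv_j)\}$ and $\{\Phi(2\vv_j*\vv_l)\}_{j<l}$, of the right cardinality. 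The genuinely nontrivial part is the bound $\beta+\delta+\kappa$ via the folding map $\sigma(x\mid y_1\mid y_2)=y_1+y_2$, and it checks out: $\sigma\circ\Phi$ returns the binary reduction of the quaternary half, so it annihilates every correction term and every order-two generator, giving $\dim\sigma(\langle\Phi(\C)\rangle)\le\delta$; and your key observation that an element of $\langle\Phi(\C)\rangle\cap\ker\sigma$ is a sum $\sum_t\Phi(\vw_t)$ whose associated codeword $\sum_t\vw_t$ has even quaternary part, hence lies in $\C_b$ and has $X$-projection in the $\kappa$-dimensional space $(\C_b)_X$, is exactly right. Two applications of rank--nullity (the fibre over a fixed $X$-projection inside $\ker\sigma$ having dimension at most $\beta$) then give the bound, and the lower bound $\gamma+2\delta$ from injectivity of $\Phi$ is immediate. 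I see no gaps; what your route buys is precisely a proof that the paper does not contain.
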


\begin{theorem}[\cite{Z2Z4RK}] \label{all-ranks} Let $\alpha,\beta,\gamma,
	\delta,\kappa$ be integers satisfying
	(\ref{bounds-code}). Then, there exists a $\add$-linear code $C$ of
	type $(\alpha,\beta;\gamma, \delta;\kappa)$ with $\rank(\Phi(C))=r$ if 
	and only if
	$$r\in \{ \gamma+2\delta, \ldots, \min (\beta+\delta+\kappa, \; \gamma
	+2\delta + {\delta \choose 2} )\}.$$
\end{theorem}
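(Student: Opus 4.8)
The plan is to prove the equivalence by treating the two implications separately. The forward implication---that any achievable rank $r$ must lie in the interval $\{\gamma+2\delta,\ldots,\min(\beta+\delta+\kappa,\gamma+2\delta+\binom{\delta}{2})\}$---is already supplied by Proposition \ref{bounds-rank}, which confines $\rank(\Phi(C))$ to exactly this interval for every $\add$-linear code of type $(\alpha,\beta;\gamma,\delta;\kappa)$. Hence the substance of the theorem is the converse: for each integer $r$ in the range one must construct an $\add$-linear code of the prescribed type whose Gray image attains rank exactly $r$.

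For that construction I would return to the standard generator matrix (\ref{eq:StandardForm}) and exploit the identity (\ref{equation:linear}). Because $\Phi(\vv+\vw)=\Phi(\vv)+\Phi(\vw)+\Phi(2\vv*\vw)$, the binary span $\langle\Phi(\C)\rangle$ is generated by $\Phi(\C)$ together with the products $\Phi(2\vv_i*\vv_j)$, where $\vv_1,\ldots,\vv_\delta$ are the $\delta$ order-four rows of (\ref{eq:StandardForm}). Each such product equals $(\zero\mid 2v_i'*v_j')$, so it is an order-two vector supported in the quaternary coordinates; adjoining $t$ of them that are independent modulo $\Phi(\C)$ produces the code $\R(\C)$ with unchanged order-four part and unchanged $\kappa$, whence $\rank(\Phi(\C))=\gamma+2\delta+t$. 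The strategy is therefore to start from a code with linear image ($t=0$, whose existence is guaranteed by Theorem \ref{Phi(C)Linear}) and to raise $t$ one unit at a time by modifying the block $S_q$ so as to create exactly one new independent product, while holding $T_b,T_1,T_2,S_b$ fixed to preserve $\gamma$, $\delta$ and $\kappa$.

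The main obstacle is to realize every intermediate value of $t$ up to $\min(\binom{\delta}{2},\,\beta+\delta+\kappa-\gamma-2\delta)$ without disturbing the type, and the two ceilings bind in genuinely different regimes. The value $\binom{\delta}{2}$ merely counts the pairs $(i,j)$; when it is the smaller ceiling the difficulty is the combinatorial one of arranging the odd supports of $v_1',\ldots,v_\delta'$ so that a prescribed number of the pairwise products is independent. The value $\beta+\delta+\kappa$ instead reflects the additive-code type inequality $\gamma'+\delta'\le\beta+\kappa'$ that $\R(\C)$ must satisfy: with $\delta'=\delta$ and $\kappa'=\kappa$ it forces $t\le\beta+\kappa-\gamma-\delta$. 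The delicate point common to both regimes is the increment-by-one step---passing from rank $\gamma+2\delta+t$ to $\gamma+2\delta+t+1$ through a single coordinate change that manufactures precisely one fresh independent product while leaving $\gamma$, $\delta$ and $\kappa$ intact. Once this step is established in each regime, the contiguity of the interval reduces the remaining argument to routine bookkeeping.
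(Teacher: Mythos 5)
First, note that the paper itself offers no proof of Theorem~\ref{all-ranks}: it is quoted verbatim from \cite{Z2Z4RK} and used as a black box, so there is no internal argument to compare yours against. Judged on its own terms, your proposal correctly disposes of the forward implication (Proposition~\ref{bounds-rank} gives the confinement of the rank to the stated interval) and correctly identifies, via Proposition~\ref{lemm:RankSet}, that the converse amounts to realizing every value $t\in\{0,1,\ldots,\min(\binom{\delta}{2},\beta+\kappa-\gamma-\delta)\}$ as the number of products $2\vv_i*\vv_j$ that are independent modulo $\Phi(\C)$. Your reading of where the two ceilings come from is also accurate.

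The gap is that the construction, which is the entire content of the theorem, is announced rather than performed. You state that one should ``modify the block $S_q$ so as to create exactly one new independent product'' and that ``once this step is established in each regime, the contiguity of the interval reduces the remaining argument to routine bookkeeping,'' but you never exhibit a single matrix, a single family of supports for $v_1',\ldots,v_\delta'$, or an inductive invariant that certifies the increment-by-one step. This step is genuinely delicate: changing one entry of $S_q$ perturbs several products $2v_i'*v_j'$ at once, and when $\beta+\kappa-\gamma-\delta$ is the binding ceiling the quaternary coordinates are too scarce to place the products on disjoint supports, so independence must be argued, not assumed. The proof in \cite{Z2Z4RK} is precisely an explicit construction handling these regimes; without it your argument establishes only the easy inclusion. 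A secondary slip: you invoke Theorem~\ref{Phi(C)Linear} to guarantee a base code of the given type with linear image, but that theorem characterizes linearity for $\add$-additive \emph{cyclic} codes of odd $\beta$ and says nothing about the existence of a (general) code of an arbitrary admissible type; the base case should instead be produced directly, e.g.\ from the standard form (\ref{eq:StandardForm}) with $S_q$, $R$, $T_1$, $T_2$ chosen so that all products $2\vv_i*\vv_j$ vanish.
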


\begin{proposition}[\cite{Z2Z4RK}]\label{lemm:RankSet}
	Let $\C$ be a $\add$-additive code of type
	$(\alpha,\beta;\gamma,\delta;\kappa)$ and let $C=\Phi(\C)$.
	Let $\cG$ be a generator matrix of $\C$ as in
	(\ref{eq:StandardForm}) and let $\{\vu_i\}_{i=1}^{\gamma}$ be the rows of order
	two and
	$\{\vv_j\}_{j=1}^{\delta}$ the rows of
	order four in $\cG$. Then, $\langle C \rangle$ is generated by
	$\{\Phi(\vu_i)\}_{i=1}^{\gamma}$,
	$\{\Phi(\vv_j),\Phi(2\vv_j)\}_{j=1}^{\delta}$ and $\{\Phi(2\vv_j *
	\vv_k)\}_{1\leq j<k\leq \delta}$. Moreover, $\langle C \rangle$ is
	both binary linear and $\add$-linear.
\end{proposition}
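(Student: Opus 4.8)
The plan is to prove the two inclusions separately and then dispatch the ``Moreover'' clause. Write $S$ for the proposed generating set $\{\Phi(\vu_i)\}_{i=1}^{\gamma}\cup\{\Phi(\vv_j),\Phi(2\vv_j)\}_{j=1}^{\delta}\cup\{\Phi(2\vv_j*\vv_k)\}_{1\le j<k\le\delta}$. First I would show $\langle S\rangle\subseteq\langle C\rangle$. The vectors $\vu_i$, $\vv_j$, and $2\vv_j$ all lie in $\C$, so their images lie in $C\subseteq\langle C\rangle$. For the pairwise products, I would apply (\ref{equation:linear}) to $\vv_j$ and $\vv_k$: since $\vv_j+\vv_k\in\C$, the identity rearranges to $\Phi(2\vv_j*\vv_k)=\Phi(\vv_j+\vv_k)+\Phi(\vv_j)+\Phi(\vv_k)$, exhibiting $\Phi(2\vv_j*\vv_k)$ as a binary sum of elements of $C$ and hence in $\langle C\rangle$. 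Thus $S\subseteq\langle C\rangle$, giving the first inclusion.

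The reverse inclusion $\langle C\rangle\subseteq\langle S\rangle$ is the crux, and it suffices to show $\Phi(\vc)\in\langle S\rangle$ for every $\vc\in\C$. Writing $\vc=\sum_i a_i\vu_i+\sum_j b_j\vv_j$ with $a_i\in\Z_2$ and $b_j\in\Z_4$, I would expand $\Phi(\vc)$ by iterating (\ref{equation:linear}). The guiding observation is that any vector carrying an overall factor of $2$ is of order two, and for an order-two vector $\vw$ one has $2\vw*\vx=\zero$ for all $\vx$; hence $\Phi$ restricted to the subgroup of order-two vectors is additive. Consequently, when the identity is iterated, no product of three or more generators can survive: every correction term $\Phi(2\,\vy*\vz)$ is itself the image of an order-two vector and produces no further corrections. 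Tracking the surviving terms, $\Phi(a_i\vu_i)=a_i\Phi(\vu_i)$ (the $\vu_i$ being order two), each $\Phi(b_j\vv_j)$ lies in $\langle\Phi(\vv_j),\Phi(2\vv_j)\rangle$ because $3\vv_j=2\vv_j+\vv_j$ yields $\Phi(3\vv_j)=\Phi(2\vv_j)+\Phi(\vv_j)$, and each cross term collapses to a scalar multiple of some $\Phi(2\vv_j*\vv_k)$; the mixed products $2\vu_i*\vu_{i'}$ and $2\vu_i*\vv_j$ vanish since the $\vu_i$ have their $\Z_4$-entries in $\{0,2\}$. I expect the main obstacle to be the bookkeeping in this induction: one must argue carefully that the expansion closes at the pairwise level and that the $\Z_4$-coefficients of the correction terms collapse modulo $2$, so that only the finitely many listed generators occur.

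For the final clause I would exhibit the $\add$-additive code $\R(\C)=\langle\C,\{2\vv_j*\vv_k\}_{1\le j<k\le\delta}\rangle$ and show $\Phi(\R(\C))=\langle C\rangle$. The adjoined generators are order-two vectors, and using $2\vv_j*\vv_j=2\vv_j\in\C$ together with the vanishing of products involving an already-doubled vector, one checks that $2\vx*\vy\in\R(\C)$ for all $\vx,\vy\in\R(\C)$. By the linearity criterion stated just after (\ref{equation:linear}), $\Phi(\R(\C))$ is then binary linear; since it contains $S$ and, by the expansion above, is contained in $\langle C\rangle$, it must equal $\langle C\rangle$. This simultaneously shows that $\langle C\rangle$ is binary linear (being a span) and $\add$-linear (being the Gray image of the $\add$-additive code $\R(\C)$), completing the proof.
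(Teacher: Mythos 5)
The paper states this proposition as a known result imported from \cite{Z2Z4RK} and gives no proof of it, so there is nothing internal to compare against; your argument is a correct and complete reconstruction of the standard one. Both inclusions, the closure of the iterated expansion of (\ref{equation:linear}) at the pairwise level via the additivity of $\Phi$ on order-two vectors, and the identification of $\langle C \rangle$ with $\Phi(\R(\C))$ (using $2\vv_j * \vv_j = 2\vv_j$ and the vanishing of products involving order-two vectors) are exactly the ingredients of the argument in the cited reference.
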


\begin{corollary}[\cite{Z4CyclicRK}]\label{coro:generatorRank}
	Let $\C$ be a $\add$-additive code of type
	$(\alpha,\beta;\gamma,\delta;\kappa)$ and let $C=\Phi(\C)$.
	Let $\cG$ be a generator matrix of $\C$ as in
	(\ref{eq:StandardForm}) and let $\{\vu_i\}_{i=1}^{\gamma}$ be the rows 
	of order
	two and
	$\{\vv_j\}_{j=1}^{\delta}$ the rows of
	order four in $\cG$. Then,  
	$$\R(\C)=\C\cup\langle \{2\vv_j *\vv_k\}_{1\leq j<k\leq 
		\delta}\rangle.$$
\end{corollary}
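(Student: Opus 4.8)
The plan is to prove the equality by passing through the Gray map and leaning on Proposition~\ref{lemm:RankSet}, which already exhibits a generating set of $\langle\Phi(\C)\rangle$ and guarantees that this span is $\add$-linear; consequently $\R(\C)=\Phi^{-1}(\langle\Phi(\C)\rangle)$ is a genuine additive code. I read the right-hand side as the additive code $\D:=\langle\,\C,\{2\vv_j*\vv_k\}_{1\le j<k\le\delta}\,\rangle$ generated by $\C$ together with the cross products (this is the span notation; the bare union is not a subgroup in general), and I record two arithmetic facts in $\Z_2^\alpha\times\Z_4^\beta$ that drive everything. First, each order-two row satisfies $2\vu_i*\vw=\zero$ for every $\vw$, since its $\Z_4$-coordinates lie in $\{0,2\}$ and $2\cdot 2=0$. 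Second, on the diagonal $2\vv_j*\vv_j=2\vv_j\in\C$, because $2v^2=2v$ in $\Z_4$ for every $v$. Finally, every $2\vv_j*\vv_k$ is of order two, lying in $\{\zero\}^\alpha\times 2\Z_4^\beta$.

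First I would establish $\D\subseteq\R(\C)$, i.e.\ $\Phi(\D)\subseteq\langle\Phi(\C)\rangle$. Writing a generic element of $\D$ as $\vc+\vs$ with $\vc\in\C$ and $\vs\in S:=\langle\{2\vv_j*\vv_k\}_{j<k}\rangle$, the correction term in (\ref{equation:linear}) vanishes: $2\vc*\vs=\zero$ because $\vs$ already carries a factor $2$, so each $\Z_4$-entry acquires a factor $4$ and each $\Z_2$-entry a doubling. The same reasoning shows $\Phi$ is additive on $S$, whence $\Phi(\vc+\vs)=\Phi(\vc)+\sum\Phi(2\vv_j*\vv_k)$. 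Since $\Phi(\C)\subseteq\langle\Phi(\C)\rangle$ and each $\Phi(2\vv_j*\vv_k)\in\langle\Phi(\C)\rangle$ by Proposition~\ref{lemm:RankSet}, this lies in $\langle\Phi(\C)\rangle$.

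For the reverse inclusion I would show $\Phi(\D)=\langle\Phi(\C)\rangle$, from which $\R(\C)=\Phi^{-1}(\langle\Phi(\C)\rangle)=\D$ follows by injectivity of $\Phi$. The generators of $\langle\Phi(\C)\rangle$ supplied by Proposition~\ref{lemm:RankSet}, namely $\Phi(\vu_i)$, $\Phi(\vv_j)$, $\Phi(2\vv_j)$ and $\Phi(2\vv_j*\vv_k)$, are all images of elements of $\D$, so it suffices to prove that $\Phi(\D)$ is $\Z_2$-linear; then $\Phi(\D)$ contains the $\Z_2$-span of those generators, which is exactly $\langle\Phi(\C)\rangle$, and together with the previous paragraph this gives equality. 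By the criterion following (\ref{equation:linear}), linearity of $\Phi(\D)$ is equivalent to $2\vu*\vw\in\D$ for all $\vu,\vw\in\D$.

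The main obstacle, and the only computation requiring care, is this last verification. Writing $\vu=\vc_1+\vs_1$ and $\vw=\vc_2+\vs_2$, every mixed term involving an $\vs_i$ dies (a factor $4$ again appears), so $2\vu*\vw=2\vc_1*\vc_2$. Expanding $\vc_1=\sum_i a_i\vu_i+\sum_j b_j\vv_j$ and $\vc_2=\sum_i a_i'\vu_i+\sum_k b_k'\vv_k$ and using $2\vu_i*(\cdot)=\zero$ (together with commutativity of $*$) kills all terms carrying a $\vu_i$, collapsing the product to $\sum_{j,k}2b_jb_k'\,\vv_j*\vv_k=\sum_{j,k}(b_jb_k'\bmod 2)\,(2\vv_j*\vv_k)$. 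The diagonal terms equal multiples of $2\vv_j\in\C$, while the off-diagonal terms lie in $S$; hence $2\vc_1*\vc_2\in\C+S=\D$. This confirms $\Phi(\D)$ is linear and closes the argument, yielding $\R(\C)=\D$.
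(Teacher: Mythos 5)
Your proof is correct, and it follows the route the paper intends: the paper gives no proof of this statement (it is cited from \cite{Z4CyclicRK} and positioned as an immediate consequence of Proposition~\ref{lemm:RankSet}), and your argument is exactly the pull-back of that proposition through the Gray map, with the right reading of $\C\cup\langle\cdot\rangle$ as a generated subgroup and the needed computations ($2\vu_i*\vw=\zero$, $2\vv_j*\vv_j=2\vv_j$, vanishing of all cross terms involving order-two elements) carried out correctly.
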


\begin{lemma}\label{lemm:RankCY}
	Let ${\cal C}$ be a $\add$-additive code. Then, $\R(\C)_Y=\R(\C_Y)$.
\end{lemma}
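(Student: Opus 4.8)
The statement to prove is that $\R(\C)_Y = \R(\C_Y)$, where the left side is the $\Z_4$-coordinate projection of $\R(\C)$ and the right side is the rank code of the quaternary projection $\C_Y$. I would establish this by proving both inclusions using the explicit generator description of $\R$ provided by Corollary \ref{coro:generatorRank}, which gives $\R(\C) = \C \cup \langle \{2\vv_j * \vv_k\}_{1 \leq j < k \leq \delta} \rangle$ in terms of the order-four rows $\{\vv_j\}_{j=1}^{\delta}$ of a standard generator matrix. The key structural fact I will exploit repeatedly is that for any vectors $\vv = (v \mid v')$ and $\vw = (w \mid w')$, the component-wise product satisfies $2\vv * \vw = (\zero \mid 2v' * w')$, since doubling kills the binary part and the cross terms in $X$ vanish; thus the $Y$-projection of a product of codewords is exactly the product of their $Y$-projections.

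The plan for the inclusion $\R(\C)_Y \subseteq \R(\C_Y)$ is as follows. First I would take an arbitrary element of $\R(\C)$, write it via Corollary \ref{coro:generatorRank} as a sum of a codeword of $\C$ and a $\Z_4$-linear combination of the products $2\vv_j * \vv_k$, and then project onto $Y$. The projection of the $\C$-part lands in $\C_Y \subseteq \R(\C_Y)$, and by the product identity above, the projection of each $2\vv_j * \vv_k$ equals $2\vv'_j * \vv'_k$ where $\vv'_j = (\vv_j)_Y$. I then need to observe that the rows $\{\vv'_j\}$ obtained by projecting the order-four rows of $\cG$ onto $Y$ are exactly (a generating set for) the order-four rows of the generator matrix of $\C_Y$ displayed in (\ref{genC_Y}); hence by Corollary \ref{coro:generatorRank} applied to $\C_Y$, each $2\vv'_j * \vv'_k$ lies in $\R(\C_Y)$, giving the desired containment.

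The reverse inclusion $\R(\C_Y) \subseteq \R(\C)_Y$ runs the same machinery backwards: every generator of $\R(\C_Y)$ is either a codeword of $\C_Y$ (which is the $Y$-projection of a codeword of $\C$, by definition of projection) or a product $2\vv'_j * \vv'_k$, which is the $Y$-projection of the generator $2\vv_j * \vv_k \in \R(\C)$; hence each generator of $\R(\C_Y)$ is the image under projection of some element of $\R(\C)$. The main subtlety I anticipate, and the step I would treat most carefully, is the alignment of the order-four rows: I must verify that projecting the order-four rows of the standard generator matrix (\ref{gen}) for $\C$ yields precisely the order-four rows of the matrix (\ref{genC_Y}) for $\C_Y$, so that the two applications of Corollary \ref{coro:generatorRank} use matching index sets $\{1, \dots, \delta\}$ and the products correspond bijectively. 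Once this correspondence is pinned down, both inclusions follow directly from the product identity $2\vv * \vw = (\zero \mid 2v' * w')$, and combining them yields the claimed equality $\R(\C)_Y = \R(\C_Y)$.
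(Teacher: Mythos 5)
Your proposal is correct and follows essentially the same route as the paper's proof: both apply Corollary \ref{coro:generatorRank} to $\C$ and to $\C_Y$, use the identity $2\vv_j*\vv_k=(\zero\mid 2v'_j*v'_k)$, and match the order-four generators of the two codes under the $Y$-projection. Your explicit attention to verifying that the projected order-four rows generate $\C_Y$ in the required form is a point the paper passes over silently, but it does not change the argument.
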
	
\begin{proof}
	Let $\{\vu_i=(u_i\mid u'_i)\}_{i=1}^{\gamma}$ be the
	first $\gamma$ rows and  $\{\vv_j=(v_j\mid v'_j)\}_{j=1}^{\delta}$ the 
	last
	$\delta$ rows of $G$. By Corollary~\ref{coro:generatorRank}, 
	$\R(\C)=\C\cup\langle \{2\vv_j *\vv_k\}_{1\leq j<k\leq \delta}\rangle$.	 
	By the same argument,  $\R(\C_Y)=\C_Y \,\cup \,\langle \{2v'_j 
	*v'_k\}_{1\leq j<k\leq \delta}\rangle$. Since $ 2\vv_j *\vv_k=(0\mid 2v'_j 
	*v'_k)$ for all $1\leq j<k\leq \delta$, the statement follows.
\end{proof}

The following theorem shows that if $\C$ is $\add$-additive
cyclic code, then $\R(\C)$ is also cyclic. As in the case of $\K(\C)$, this
theorem is a generalization of the case when $\C$ is a linear
cyclic code over $\Z_4$ \cite{Z4CyclicRK}.

\begin{theorem}\label{RankCyclic}
	Let $\C$ be a $\add$-additive cyclic code. Then $\R(\C) $ is $\add$-additive 
	cyclic code.
\end{theorem}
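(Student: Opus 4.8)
The plan is to follow the same strategy used for the kernel in Theorem~\ref{K(C)_cyclic}. Since it is already known that $\R(\C)$ is a $\add$-additive code, it suffices to prove that $\R(\C)$ is invariant under the cyclic shift $\pi$. As $\pi$ is an additive bijection of the finite group $\Z_2^\alpha\times\Z_4^\beta$, it is enough to check that $\pi(\vu)\in\R(\C)$ for every $\vu$ in a generating set of $\R(\C)$; closure under $\pi$ together with the finiteness of $\R(\C)$ then yields $\pi(\R(\C))=\R(\C)$. By Corollary~\ref{coro:generatorRank}, $\R(\C)=\C\cup\langle\{2\vv_j*\vv_k\}_{1\le j<k\le\delta}\rangle$, so a natural generating set consists of the codewords of $\C$ together with the order-two elements $2\vv_j*\vv_k$.

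For the generators coming from $\C$ there is nothing to prove: since $\C$ is cyclic, $\pi(\vc)\in\C\subseteq\R(\C)$ for every $\vc\in\C$. The argument therefore reduces to showing that $\pi(2\vv_j*\vv_k)\in\R(\C)$ for all $j<k$.

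The key observation I would use is that the cyclic shift commutes with the component-wise product, that is, $\pi(\vu*\vw)=\pi(\vu)*\pi(\vw)$ for all $\vu,\vw$, since $\pi$ merely permutes coordinates while $*$ acts coordinate-wise. Consequently $\pi(2\vv_j*\vv_k)=2\,\pi(\vv_j)*\pi(\vv_k)$, and because $\C$ is cyclic both $\pi(\vv_j)$ and $\pi(\vv_k)$ again lie in $\C$. Thus I am reduced to the following auxiliary fact: for any $\vu,\vw\in\C$, one has $2\vu*\vw\in\R(\C)$.

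This auxiliary fact is where I would bring in the linearity identity~(\ref{equation:linear}): from $\Phi(2\vu*\vw)=\Phi(\vu+\vw)+\Phi(\vu)+\Phi(\vw)$ and the fact that $\vu$, $\vw$ and $\vu+\vw$ all lie in $\C$, the right-hand side is a sum of elements of $\Phi(\C)\subseteq\langle\Phi(\C)\rangle$, so $\Phi(2\vu*\vw)\in\langle\Phi(\C)\rangle$, which by definition means $2\vu*\vw\in\R(\C)$. Combining this with the previous paragraph gives $\pi(2\vv_j*\vv_k)=2\,\pi(\vv_j)*\pi(\vv_k)\in\R(\C)$, so $\pi$ carries every generator of $\R(\C)$ back into $\R(\C)$ and the theorem follows. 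I do not anticipate a serious obstacle; the only point requiring care is precisely this auxiliary fact, since $2\,\pi(\vv_j)*\pi(\vv_k)$ is a product of arbitrary codewords of $\C$ rather than of the distinguished generators $\vv_j$, so Corollary~\ref{coro:generatorRank} alone does not close the argument and the identity~(\ref{equation:linear}) must be invoked.
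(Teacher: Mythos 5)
Your proof is correct and follows essentially the same route as the paper's: both decompose elements of $\R(\C)$ via Corollary~\ref{coro:generatorRank}, use that the shift $\pi$ is additive and commutes with the componentwise product, and reduce to the containment $2\vu*\vw\in\R(\C)$ for arbitrary $\vu,\vw\in\C$. You are in fact slightly more careful than the paper, which simply asserts that last containment, whereas you justify it explicitly through the identity~(\ref{equation:linear}).
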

\begin{proof}
	Let $\textbf{x}\in\R(\C)$. By Corrollary~\ref{coro:generatorRank}, $\R(\C)$ is 
	generated by $\C$ and $\{2\textbf{v}*\textbf{w}\mid \textbf{v},\textbf{w}\in 
	\C\}$, then $\textbf{x}= \textbf{u} +
	2\textbf{v}*\textbf{w}$, for some $\textbf{u},\textbf{v},\textbf{w}\in\C$. As
	$\C$ is a $\add$-additive cyclic code then
	$\pi(\textbf{u}),\pi(\textbf{v}),\pi(\textbf{w})\in \C$ and
	$2\pi(\textbf{v})*\pi(\textbf{w})\in\R(\C)$. Thus, $\pi(\textbf{x})=
	\pi(\textbf{u})+2\pi(\textbf{v})*\pi(\textbf{w})\in\R(\C)$ and $\R(\C)$ is
	$\add$-additive cyclic code.
\end{proof}

The next proposition is straightforward from Theorems \ref{all-ranks}
and \ref{TypeDependingDeg}.

\begin{proposition}
	Let $\C$ be a $\add$-additive cyclic code. Then
	$$\alpha -\deg(b)+\deg(h) + 2 \deg(g)\leq \rank(\Phi(\C))\leq $$
	$$ \min (\alpha + \beta + \deg(g) - \deg(\gcd(b,\ell \tilde{g})), \; \alpha 
	-\deg(b)+\deg(h) + 2 \deg(g) + {\deg(g) \choose 2} ).$$
\end{proposition}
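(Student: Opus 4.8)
The plan is to combine the type-parameter bounds on the rank with the dictionary that translates type parameters into degrees of generator polynomials; no genuinely new argument is needed, so the work is entirely substitutional.

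First, since $\Phi(\C)$ is a $\add$-linear code of type $(\alpha,\beta;\gamma,\delta;\kappa)$, Proposition~\ref{bounds-rank} (equivalently Theorem~\ref{all-ranks}) applies directly and yields
\begin{equation*}
\gamma+2\delta \leq \rank(\Phi(\C)) \leq \min\left(\beta+\delta+\kappa,\; \gamma+2\delta+{\delta \choose 2}\right).
\end{equation*}
This holds for any $\add$-additive code, cyclic or not; the cyclicity hypothesis is only needed for the second step, where I invoke the degree formulas.

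Second, since $\C$ is cyclic and written in standard form $\langle (b\mid 0),(\ell\mid fh+2f)\rangle$ with $fhg=x^\beta-1$, Theorem~\ref{TypeDependingDeg} gives $\gamma=\alpha-\deg(b)+\deg(h)$, $\delta=\deg(g)$, and $\kappa=\alpha-\deg(\gcd(\ell\tilde{g},b))$. Substituting these into the lower bound gives $\gamma+2\delta=\alpha-\deg(b)+\deg(h)+2\deg(g)$, which is exactly the claimed left-hand side. Substituting into the two terms of the minimum on the right, I would compute $\beta+\delta+\kappa=\alpha+\beta+\deg(g)-\deg(\gcd(b,\ell\tilde{g}))$ and $\gamma+2\delta+{\delta \choose 2}=\alpha-\deg(b)+\deg(h)+2\deg(g)+{\deg(g) \choose 2}$, which are precisely the two arguments of the minimum in the statement.

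There is no real obstacle here: the only points to verify are that the substitutions are carried out correctly (in particular that $\gcd(\ell\tilde{g},b)$ and $\gcd(b,\ell\tilde{g})$ denote the same polynomial, so the two expressions for $\kappa$ agree) and that Theorem~\ref{all-ranks} is legitimately applicable, which it is because its hypotheses are exactly that $\C$ be a $\add$-linear code of the stated type. Assembling the three displayed identities into the two-sided inequality then completes the argument.
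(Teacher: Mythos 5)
Your proof is correct and matches the paper's own argument, which simply states that the proposition is straightforward from Theorem~\ref{all-ranks} (equivalently Proposition~\ref{bounds-rank}) together with the degree formulas of Theorem~\ref{TypeDependingDeg}. The substitutions you carry out are exactly the intended ones and are computed correctly.
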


For a $\add$-additive cyclic code $\C$, define $\C_1 = \langle (b\mid 0)
\rangle$ and $\C_2 = \langle (\ell \mid fh+2f) \rangle$.
Since $\C=\C_1\cup \C_2$ and $\C_1\cap\C_2=\{\zero\}$, we have
that
\begin{equation}\label{rank1}
\rank(\Phi(\C)) = \rank(\Phi(\C_1)) + \rank(\Phi(\C_2)).
\end{equation}

If the code $\C$ is separable, then $\ell=0$ and 
$\rank(\Phi(\C_1))=\rank(\Phi(\C_X))$. Moreover, $\C_2=\langle
(0 \mid fh+2f) \rangle$, and therefore $\rank(\Phi(\C_2))=\rank(\Phi(\C_Y))$. We 
obtain the following result.

\begin{proposition}
	If $\C$ is a separable $\add$-additive cyclic code, then $\R(\C)=\R(\C_X)\times
	\R(\C_Y)$ and $\rank(\Phi(\C))=\kappa_1+\rank(\Phi(\C_Y))$.
\end{proposition}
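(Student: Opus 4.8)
The plan is to establish the product decomposition $\R(\C)=\R(\C_X)\times\R(\C_Y)$ first, and then read off the rank formula from the identity~(\ref{rank1}) together with the remarks preceding the statement.

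For the decomposition, I would begin with the generator description of the rank closure in Corollary~\ref{coro:generatorRank}, which writes $\R(\C)$ as the additive code generated by $\C$ and the cross products $\{2\vv_j*\vv_k\}_{1\le j<k\le\delta}$, where the $\vv_j=(v_j\mid v'_j)$ are the order-four rows of a generator matrix of $\C$. The crucial observation is that multiplication by $2$ annihilates the $\Z_2$ coordinates, so each cross product has the form $2\vv_j*\vv_k=(\zero\mid 2v'_j*v'_k)$ and thus has vanishing $X$-component. Since $\C$ is separable, $\C=\C_X\times\C_Y$, and adjoining elements whose $\Z_2$ part is $\zero$ leaves the $\Z_2$ coordinates ranging freely over $\C_X$; therefore $\R(\C)=\C_X\times D$, where $D=\C_Y+\langle\{2v'_j*v'_k\}\rangle\subseteq\Z_4^\beta$.

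To identify $D$, I would project $\R(\C)=\C_X\times D$ onto the $Y$ coordinates, which gives $\R(\C)_Y=D$ because $\C_X$ contains $\zero$. By Lemma~\ref{lemm:RankCY}, $\R(\C)_Y=\R(\C_Y)$, so $D=\R(\C_Y)$ and hence $\R(\C)=\C_X\times\R(\C_Y)$. Finally, $\C_X$ is a binary linear code, on whose coordinates the Gray map acts as the identity, so $\langle\Phi(\C_X)\rangle=\C_X$ and thus $\R(\C_X)=\C_X$; substituting yields $\R(\C)=\R(\C_X)\times\R(\C_Y)$, the first claim.

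For the rank, I would invoke~(\ref{rank1}) and the preceding remarks: in the separable case $\ell=0$, so $\rank(\Phi(\C))=\rank(\Phi(\C_X))+\rank(\Phi(\C_Y))$, and it remains to show $\rank(\Phi(\C_X))=\kappa_1$. Since $\ell=0$ gives $\C_X=\langle\gcd(b,\ell)\rangle=\langle b\rangle$, a binary cyclic code of dimension $\alpha-\deg(b)$, and since $\kappa_1=\alpha-\deg(b)$ by Theorem~\ref{TypeDependingDeg}, the fact that the Gray map is the identity on $\C_X$ gives $\rank(\Phi(\C_X))=\dim(\C_X)=\kappa_1$. Hence $\rank(\Phi(\C))=\kappa_1+\rank(\Phi(\C_Y))$. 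I expect the only delicate point to be the bookkeeping in the first step, namely ensuring that adjoining the cross products to a separable $\C$ preserves the product structure; this is handled cleanly by the observation that $2$ kills the $\Z_2$ coordinates, so the cross products automatically lie in $\{\zero\}\times\Z_4^\beta$, after which the remaining steps follow directly from Corollary~\ref{coro:generatorRank}, Lemma~\ref{lemm:RankCY}, and the rank decomposition with no computation required.
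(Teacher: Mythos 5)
Your proposal is correct and follows essentially the same route as the paper: the rank formula is obtained exactly as in the remarks preceding the proposition, via the decomposition in~(\ref{rank1}) with $\ell=0$ and $\rank(\Phi(\C_X))=\dim\langle b\rangle=\kappa_1$. For the product decomposition $\R(\C)=\R(\C_X)\times\R(\C_Y)$, which the paper leaves implicit, your argument via Corollary~\ref{coro:generatorRank} (the cross products $2\vv_j*\vv_k$ have zero $X$-component, so adjoining them preserves separability) together with Lemma~\ref{lemm:RankCY} and $\R(\C_X)=\C_X$ is a correct and natural completion in the spirit of the paper's own tools.
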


Note that, if $\C$ is not separable, $\rank(\Phi(\C))$ is not necessarily equal 
to $\kappa_1+\rank(\Phi(\C_Y))$ as it
is shown in the following example.

\begin{example}
	Consider the $\add$-additive code generated by the following matrix.
	\begin{equation}
	\left(\begin{array}{ccc|ccc}
	1&0&0&0&0&0 \\
	0&1&0&0&0&0\\
	0&0&1&2&0&0\\
	0&0&0&1&1&0\\
	0&0&0&1&0&1\\
	\end{array}\right)
	\end{equation}
	In \cite[Example 2]{Z2Z4Image} it was proved that $\phi(\C_Y)$ is binary
	linear whereas $\Phi(\C)$ is not binary linear.
	In this example we have that $\kappa_1 =2$. Since $\Phi(\C_Y)$ is linear,
	$\rank(\Phi(\C_Y))=5.$  Nevertheless, $\Phi(\C)$ is not binary linear and $\rank 
	(\Phi(\C)) =
	8 > 5+2.$
\end{example}

In fact, $\rank(\Phi(\C))$ is always greater or equal to $\kappa_1 + 
\rank(\Phi(\C_Y))$.

\begin{proposition}
	Let $\C$ be a $\add$-additive cyclic code, then $\rank(\Phi(\C)) \geq
	\kappa_1 +\rank(\Phi(\C_Y))$.
\end{proposition}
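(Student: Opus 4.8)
The plan is to establish the inequality $\rank(\Phi(\C)) \geq \kappa_1 + \rank(\Phi(\C_Y))$ by producing an explicit family of independent generators for $\langle \Phi(\C) \rangle$. The key observation is that $\langle \Phi(\C) \rangle$ contains a subspace that projects onto $\langle \phi(\C_Y) \rangle$ in the $Y$-coordinates, together with $\kappa_1$ further vectors whose support is genuinely witnessed in the $X$-coordinates, so the two contributions cannot collapse into one another.

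**First I would** use the standard-form generator matrix (\ref{gen}) and the description of the rank generators from Proposition~\ref{lemm:RankSet} and Corollary~\ref{coro:generatorRank}. Writing $\{\vu_i=(u_i\mid u'_i)\}_{i=1}^{\gamma}$ for the order-two rows and $\{\vv_j=(v_j\mid v'_j)\}_{j=1}^{\delta}$ for the order-four rows, Proposition~\ref{lemm:RankSet} tells us that $\langle \Phi(\C) \rangle$ is spanned by $\{\Phi(\vu_i)\}$, $\{\Phi(\vv_j),\Phi(2\vv_j)\}$ and $\{\Phi(2\vv_j*\vv_k)\}_{j<k}$. By Lemma~\ref{lemm:RankCY} we already know $\R(\C)_Y=\R(\C_Y)$, which means the $Y$-components of these generators, namely $\{\phi(u'_i)\}$, $\{\phi(v'_j),\phi(2v'_j)\}$ and $\{\phi(2v'_j*v'_k)\}$, span exactly $\langle \phi(\C_Y) \rangle$, a space of dimension $\rank(\Phi(\C_Y))$. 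From the first $\kappa_1$ rows of (\ref{gen}), the block $I_{\kappa_1}$ guarantees that $u_1,\dots,u_{\kappa_1}$ are linearly independent in the binary $X$-coordinates, while the corresponding $Y$-components $u'_1,\dots,u'_{\kappa_1}$ are $\zero$.

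**The core step** is a rank/dimension-count argument: I would consider the projection map $\rho$ from $\langle \Phi(\C) \rangle$ onto the $Y$-coordinates $\Z_2^{2\beta}$, which is a linear surjection onto $\langle \phi(\C_Y) \rangle$ by the preceding paragraph. Its kernel consists of those binary vectors in $\langle \Phi(\C) \rangle$ that are zero on all $Y$-coordinates; this kernel contains the $\kappa_1$ vectors $\Phi(\vu_1),\dots,\Phi(\vu_{\kappa_1})$, which are linearly independent because their $X$-parts are independent (the $I_{\kappa_1}$ block). Hence $\dim(\ker \rho)\geq \kappa_1$, and the rank--nullity theorem gives
$$\rank(\Phi(\C))=\dim(\ker\rho)+\dim(\operatorname{im}\rho)\geq \kappa_1+\rank(\Phi(\C_Y)).$$

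**The main obstacle** I anticipate is making rigorous that $\dim(\ker\rho)\geq\kappa_1$ rather than merely noting that $\ker\rho$ contains $\kappa_1$ specific vectors; one must confirm that these $\Phi(\vu_i)$ genuinely lie in $\langle \Phi(\C) \rangle$ (immediate, since they are generators) and that their $X$-parts remain independent \emph{after} the projection has killed the $Y$-part, which is exactly guaranteed by the leading identity block $I_{\kappa_1}$ in the standard form~(\ref{gen}). A subtlety worth checking is that the nonlinearity of $\Phi$ does not interfere: the product terms $\Phi(2\vv_j*\vv_k)$ have $X$-component $\zero$ (since $2\vv_j*\vv_k=(\zero\mid 2v'_j*v'_k)$), so they lie in $\ker\rho$ too but do not disturb the independence count, and the surjectivity of $\rho$ onto $\langle \phi(\C_Y)\rangle$ follows cleanly from Lemma~\ref{lemm:RankCY}. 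Once these two checks are in place, the inequality is immediate from rank--nullity.
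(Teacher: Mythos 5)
Your proof is correct, but it follows a genuinely different route from the paper's. The paper first splits $\C$ into the two subcodes $\C_1=\langle (b\mid 0)\rangle$ and $\C_2=\langle (\ell\mid fh+2f)\rangle$ and invokes the additivity $\rank(\Phi(\C))=\rank(\Phi(\C_1))+\rank(\Phi(\C_2))$ of equation~(\ref{rank1}), identifying $\rank(\Phi(\C_1))=\kappa_1$; it then observes that the only product vectors $2\vv*\vw$ that matter come from $\C_2$, and that $2\vv*\vw\notin\C$ whenever $2v'*w'\notin\C_Y$, to conclude $\rank(\Phi(\C_2))\geq\rank(\phi(\C_Y))$. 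You instead work directly with the space $\langle\Phi(\C)\rangle$ and the projection $\rho$ onto the $2\beta$ Gray coordinates of the $Y$-block: surjectivity of $\rho$ onto $\langle\phi(\C_Y)\rangle$ comes from Lemma~\ref{lemm:RankCY} (or even just from linearity of $\rho$ and $\rho(\Phi(\C))=\phi(\C_Y)$), the $\kappa_1$ rows with identity block $I_{\kappa_1}$ and zero $Y$-part give $\dim(\ker\rho)\geq\kappa_1$, and rank--nullity finishes. Your argument buys some robustness: it does not rely on the decomposition behind~(\ref{rank1}) (whose justification via $\C_1\cap\C_2=\{\zero\}$ is the most delicate point of the paper's route), and the independence of the two contributions is made completely transparent by the kernel/image split. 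The paper's version, on the other hand, reuses machinery ($\C_1$, $\C_2$, and~(\ref{rank1})) already set up for the separable case and for the surrounding results, so it is shorter in context. The only point to state explicitly in a final write-up is that passing to the permuted standard form~(\ref{gen}) is harmless because the rank is invariant under coordinate permutations within $X$ and within $Y$.
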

\begin{proof}
	Let $\C$ be a $\add$-additive cyclic code. By (\ref{rank1}),
	$\rank(\Phi(\C))=\rank(\Phi(\C_1))+\rank(\Phi(\C_2))=\kappa_1+\rank(\Phi(\C_2))$
	. By
	Proposition~\ref{lemm:RankSet}, in order to determine the rank, we have to
	consider the set of vectors $2\vv*\vw$, for $\vv=(v\mid v'),\vw=(w \mid 
	w')\in\C$. Since for all
	$\vv\in\C$ if $\vw\in\C_1$ we obtain $2\vv*\vw=\zero$, we just have to consider
	$\vv,\vw\in\C_2$. We have that if $2 v' * w' \not \in \C_Y $ then
	$2 \vv*\vw \not \in \C.$ Therefore, $\rank(\Phi(\C_2))\geq 
	\rank(\Phi((\C_2)_Y))=\rank(\Phi(\C_Y))$
	and, therefore, $\rank(\Phi(\C_1))\geq \kappa_1+\rank(\Phi(\C_Y))$.
\end{proof}

Now we can determine the rank of a $\add$-additive code $\C$ as the rank of
$\C_X$ and the rank of a linear code over $\Z_4$, $\C'$. As in the case of the
kernel, when $\C$ is separable we have seen that $\C'=\C_Y$, but if $\C$ is not
separable such a code $\C'$ may not be cyclic over $\Z_4$.

\begin{theorem}
	Let $\C$ be a $\add$-additive cyclic code with generator matrix in the form of
	(\ref{gen}) and let $\C'$ be the subcode generated by the matrix in 
	(\ref{genSubcode}).  Then, $$\rank(\Phi(\C)) = \kappa_1 + \kappa_2 
	+\rank(\Phi(\C'_Y)).$$
	
\end{theorem}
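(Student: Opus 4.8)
The plan is to mirror the proof of the corresponding result for the kernel. Writing $\{\vu_i=(u_i\mid u'_i)\}_{i=1}^{\gamma}$ for the order-two rows and $\{\vv_j=(v_j\mid v'_j)\}_{j=1}^{\delta}$ for the order-four rows of the matrix (\ref{gen}), I set $\bar{\C}=\langle\{\vu_i\}_{i=1}^{\kappa_1+\kappa_2}\rangle$ and take $\C'$ as in the statement, so that $\C$ is generated by $\bar{\C}$ together with $\C'$. The argument then reduces to two independent claims: that $\rank(\Phi(\C))=\kappa_1+\kappa_2+\rank(\Phi(\C'))$, and that $\rank(\Phi(\C'))=\rank(\phi(\C'_Y))$.

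For the first claim I would apply Proposition \ref{lemm:RankSet} to describe $\langle\Phi(\C)\rangle$ as the binary span of $\{\Phi(\vu_i)\}_{i=1}^{\gamma}$, $\{\Phi(\vv_j),\Phi(2\vv_j)\}_{j=1}^{\delta}$ and $\{\Phi(2\vv_j*\vv_k)\}_{1\le j<k\le\delta}$. By the shape of (\ref{gen}), the vectors $\Phi(\vu_1),\dots,\Phi(\vu_{\kappa_1+\kappa_2})$ have linearly independent projections onto their first $\kappa_1+\kappa_2$ binary coordinates (coming from the identity blocks $I_{\kappa_1}$ and $I_{\kappa_2}$), while every remaining generator vanishes there: the third-block rows, the doubled rows $\Phi(2\vv_j)$ and the products $\Phi(2\vv_j*\vv_k)$ all have zero binary part, and the order-four rows $\Phi(\vv_j)$ are supported on binary columns beyond the $(\kappa_1+\kappa_2)$-th (the block $S'$). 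Projecting onto the first $\kappa_1+\kappa_2$ binary positions thus splits $\langle\Phi(\C)\rangle$ as the direct sum of $\langle\Phi(\bar{\C})\rangle$, of dimension $\kappa_1+\kappa_2$, and $\langle\Phi(\C')\rangle$, which gives the first claim.

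The second claim is the crux. Since $\Phi$ is a bijection of $\Z_2^\alpha\times\Z_4^\beta$ onto $\Z_2^{\alpha+2\beta}$ and, by definition of $\R$, $\langle\Phi(\C')\rangle=\Phi(\R(\C'))$, I have $\rank(\Phi(\C'))=\log_2|\R(\C')|$ and likewise $\rank(\phi(\C'_Y))=\log_2|\R(\C'_Y)|$. By Lemma \ref{lemm:RankCY}, $\R(\C')_Y=\R(\C'_Y)$, so it suffices to prove that the projection of $\R(\C')$ onto its quaternary coordinates is injective, that is, $|\R(\C')|=|\R(\C')_Y|$. By Corollary \ref{coro:generatorRank}, $\R(\C')$ is generated by $\C'$ together with the products $2\vv_j*\vv_k=(\zero\mid 2v'_j*v'_k)$, which have zero binary part and quaternary part in $2\Z_4^\beta$. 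Thus any element of $\R(\C')$ with zero quaternary part can be written as $\vc+\vy$, where $\vy$ is a combination of these products and $\vc\in\C'$ has quaternary part in $2\Z_4^\beta$, so that $\vc$ is of order two. Because the order-four rows $\vv_j$ carry the identity block $I_{\delta}$ in their quaternary coordinates, any order-two element of $\C'$ involves the $\vv_j$ only through the doubled rows $2\vv_j$, which have zero binary part; hence $(\C'_b)_X=\{\zero\}$. Therefore $\vc$, and so the whole element $\vc+\vy$, has zero binary part, forcing it to be $\zero$. This gives the injectivity and hence $\rank(\Phi(\C'))=\rank(\phi(\C'_Y))$, completing the proof.

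The delicate point is precisely this injectivity. One cannot argue entirely inside $\C'$, since the cross terms $2\vv_j*\vv_k$ that create the rank typically lie outside $\C'$; what rescues the argument is that these terms share with the order-two generators of $\C'$ the property of having no binary support, so that the vanishing of $(\C'_b)_X$ propagates to force the binary part of any quaternary-trivial element of $\R(\C')$ to vanish. This is the structural feature of the standard form (\ref{genSubcode}) that confines all binary information of $\C'$ to its order-four rows, and it is what ultimately makes the rank of $\Phi(\C')$ agree with that of the purely quaternary code $\C'_Y$.
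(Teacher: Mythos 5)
Your proposal is correct and takes essentially the same route as the paper: the same splitting of the generators into $\bar{\C}=\langle\{\vu_i\}_{i=1}^{\kappa_1+\kappa_2}\rangle$ and $\C'$, the same use of Proposition~\ref{lemm:RankSet}/Corollary~\ref{coro:generatorRank} to get $\rank(\Phi(\C))=\kappa_1+\kappa_2+\rank(\Phi(\C'))$ via the identity blocks in the first $\kappa_1+\kappa_2$ binary coordinates, and the same reduction to $\C'_Y$ through Lemma~\ref{lemm:RankCY}. The one point where you go beyond the paper is welcome: the paper passes directly from $\R(\C')_Y=\R(\C'_Y)$ to $\rank(\Phi(\C'))=\rank(\phi(\C'_Y))$, whereas you explicitly verify that the projection of $\R(\C')$ onto the quaternary coordinates is injective (using that every order-two element of $\C'$, and every cross term $2\vv_j*\vv_k$, has zero binary part), which is exactly the detail needed to justify that step.
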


\begin{proof}
	Let $\C$ be a $\add$-additive cyclic code with generator matrix $G$ in the form
	of (\ref{gen}). Let $\{\vu_i=(u_i\mid u'_i)\}_{i=1}^{\gamma}$ be the
	first $\gamma$ rows and  $\{\vv_j=(v_j\mid v'_j)\}_{j=1}^{\delta}$ the last
	$\delta$ rows of $G$.
	Define the codes
	$\bar{\C}=\langle\{\vu_i\}_{i=1}^{\kappa_1+\kappa_2}\rangle$ and
	$\C'=\langle\{\vu_i\}_{i=\kappa_1+\kappa_2+1}^{\gamma}
	,\{\vv_j\}_{j=1}^{\delta}\rangle$.
	
	By Corollary~\ref{coro:generatorRank} we have
	$$\R(\C)=\langle\{\vu_i\}_{i=1}^{\gamma},\{\vv_j\}_{j=1}^\delta, \{2\vv_j 
	*\vv_k\}_{1\leq j<k\leq \delta}\rangle,$$
	
	$$\R(\bar{\C})=\langle\{\vu_i\}_{i=1}^{\kappa_1+\kappa_2}\rangle,$$
	$$\R(\C')=\langle\{\vu_i\}_{i=\kappa_1+\kappa_2+1}^{\gamma}
	,\{\vv_j,2\vv_j\}_{j=1}^{\delta},\{2\vv_j *\vv_k\}_{1\leq
		j<k\leq \delta}\rangle.$$
	Note that $\R(\C)=\R(\bar{\C})\cup\R(\C')$. Moreover, 
	$\R(\bar{\C})\cap\R(\C')=\{\zero\}$ due to the fact that for all
	$1\leq j<k\leq \delta$, $2\vv_j *\vv_k=(0\mid 2v'_j
	*v'_k)\not\in\bar{\C}$. Therefore,
	$\rank(\Phi(\C))=\rank(\Phi(\bar{\C}
	))+\rank(\Phi(\C'))=\kappa_1+\kappa_2+\rank(\Phi(\C'))$.
	Finally, by Lemma~\ref{lemm:RankCY}, $\rank(\Phi(\C'))=\rank(\Phi(\C'_Y))$ and 
	the statement follows.
\end{proof}

\begin{theorem}\label{Theo:GeneratorsRank}
	Let ${\cal C}=\langle (b\mid 0), (\ell\mid fh +2f) \rangle$ be a $\add$-additive 
	cyclic code, where $fhg = x^\beta -1.$ Then,
	$\R(\C)=\langle (b_r\mid 0), ({\ell}_r\mid fh +2\frac{f}{r}) \rangle$, where $r$ 
	is a divisor of $f$ and $b_r$ divides $b$.
\end{theorem}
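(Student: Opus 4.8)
The plan is to establish the structure of $\R(\C)$ by combining the generator description from Corollary~\ref{coro:generatorRank} with the known rank theory for cyclic codes over $\Z_4$, then transfer the resulting polynomial data back to the $\add$ setting. By Theorem~\ref{RankCyclic} we already know $\R(\C)$ is a $\add$-additive cyclic code, so it admits a standard-form representation $\R(\C)=\langle (b_r\mid 0),({\ell}_r\mid f_rh_r+2f_r)\rangle$ with $f_rh_rg_r=x^\beta-1$. The task reduces to identifying the polynomials $f_r$, $h_r$, $b_r$ and showing $f_r=f$, $h_r=h$, and the factor of $2$ is $2\frac{f}{r}$ for a divisor $r$ of $f$.

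\emph{First} I would analyze the $\Z_4$-component. By Lemma~\ref{lemm:RankCY} we have $\R(\C)_Y=\R(\C_Y)$, so the quaternary part of $\R(\C)$ is exactly $\R(\C_Y)$, the rank code of the cyclic $\Z_4$-code $\C_Y=\langle fh+2f\rangle$. The corresponding result for cyclic codes over $\Z_4$ in \cite{Z4CyclicRK} gives that $\R(\C_Y)=\langle fh+2\frac{f}{r}\rangle$ for a suitable divisor $r$ of $f$, determined by the ``$\otimes$'' condition governing when products $2v'_j*v'_k$ fall outside $\C_Y$. This immediately forces $f_r=f$ and $h_r=h$ on the $Y$-side, and pins down the order-two generator as $2\frac{f}{r}$; since adjoining the vectors $2\vv_j*\vv_k$ cannot change the generator of the free part but only enlarges the order-two subcode, the leading polynomial $fh$ is preserved.

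\emph{Next} I would handle the binary component and the coupling polynomial ${\ell}_r$. Since $\C\subseteq\R(\C)$, Proposition~\ref{f'_div_f} applied to the inclusion $\C\subseteq\R(\C)$ gives $f_r\mid f$ (already known to be equality) and $\gcd(b_r,{\ell}_r)\mid\gcd(b,\ell)$; but the reverse containment together with $(\R(\C))_X=\langle\gcd(b_r,{\ell}_r)\rangle\supseteq\langle\gcd(b,\ell)\rangle=\C_X$ — using that $\R(\C)_X\supseteq\C_X$ because the added vectors $2\vv_j*\vv_k$ are supported entirely on $Y$ (they are of the form $(0\mid 2v'_j*v'_k)$) — shows the $X$-punctured code is unchanged, so $(\R(\C))_X=\C_X$. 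This is the key observation: the rank-enlargement is purely quaternary. From $(\R(\C))_X=\C_X$ one deduces $b_r\mid b$ (the binary generator $b_r$ divides $b$ since $\langle b_r\rangle=(\R(\C)\cap(\Z_2[x]\times 0))$-type reasoning), completing the claim.

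\emph{The main obstacle} will be the precise determination of the divisor $r$ and the verification that the new order-two generator is $2\frac{f}{r}$ rather than some other multiple, since this depends delicately on which products $2\vv_j*\vv_k$ already lie in $\C$ versus which genuinely enlarge the code. This is exactly the point where I would invoke the $\Z_4$-result from \cite{Z4CyclicRK} verbatim through Lemma~\ref{lemm:RankCY}, rather than re-deriving it; the $\otimes$-criterion (as in Theorem~\ref{Phi(CY)Linear}) controls this. The coupling term ${\ell}_r$ and the exact divisibility $b_r\mid b$ are more routine once the $Y$-structure is fixed, following the same substitution pattern as in Lemma~\ref{lemm:lk} and the proof of Theorem~\ref{Theo:GeneratorsKernel}.
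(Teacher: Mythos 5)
Your proof is essentially correct but follows a genuinely different route from the paper's. To pin down the quaternary generator, the paper does not puncture to $Y$: it invokes \cite[Lemma 3]{Z2Z4RK} (that $\C$ and $\R(\C)$ contain the same number of order-four codewords) together with $\C\subseteq\R(\C)$ to conclude $f_rh_r=fh$ and $g_r=g$, and then applies Proposition~\ref{f'_div_f} to get $f_r\mid f$, whence $f_r=\frac{f}{r}$ and $h_r=hr$ for some divisor $r$ of $f$. You instead use Lemma~\ref{lemm:RankCY} to identify $\R(\C)_Y$ with $\R(\C_Y)$ and import the quaternary rank result of \cite{Z4CyclicRK}; this is valid and in fact proves more than the statement asks (it identifies $r$ exactly via the $\otimes$-criterion, which the paper defers to Proposition~\ref{prop:RankQuaternary} and Theorem~\ref{theo:rank-r}), at the cost of heavier machinery than the simple counting argument requires. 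Two small points to tidy up. First, your claim that the $Y$-side ``forces $f_r=f$ and $h_r=h$'' is inconsistent with the order-two generator being $2\frac{f}{r}$; in the standard form $f_rh_r+2f_r$ what you actually obtain is $f_rh_r=fh$ with $f_r=\frac{f}{r}$ and $h_r=hr$. Second, $(\R(\C))_X=\C_X$ does not by itself yield $b_r\mid b$, since the punctured code on $X$ is $\langle\gcd(b_r,\ell_r)\rangle$ rather than $\langle b_r\rangle$; the clean argument (and the paper's) is simply that $(b\mid 0)\in\C\subseteq\R(\C)$ and the subcode of $\R(\C)$ with zero $Y$-part is $\langle(b_r\mid 0)\rangle$, so $b_r$ divides $b$.
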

\begin{proof}
	By Theorem \ref{RankCyclic}, $\R(\C)$ is a $\add$-additive cyclic code,
	therefore $\R(\C)=\langle ({b}_r\mid 0), ({\ell}_r\mid {f}_r{h}_r
	+2{f}_r)\rangle$. Since $(b\mid 0)\in\R(\C)$, it is clear that ${b}_r$ divides 
	$b$. By \cite[Lemma 3]{Z2Z4RK}, $\C$
	and $\R(\C)$ have the same number of order four codewords and since $\C\subseteq
	\R(\C)$ we have that ${f}_r{h}_r=fh$. Then ${g}_r=g$. By Proposition
	\ref{f'_div_f}, we know that ${f}_r$ divides $f$ and hence there exists
	$r\in\Z_4[x]$ such that ${f}_r=\frac{f}{r}$ and $h_r=hr$. Therefore,
	$\R(\C)=\langle (b_r\mid
	0), ({\ell}_r\mid fh +2\frac{f}{r})\rangle$.
\end{proof}

Let $\C\langle (b\mid 0), (\ell\mid fh +2f) \rangle$ be a $\add$-additive cyclic 
code. In the following example, we will see that $\R(\C)=\langle (b_r\mid 0), 
({\ell}_r\mid f_rh_r +2f_r)\rangle$, where $b_r\not=b$, if we consider the 
generators of $\R(\C)$ in standard form. 

\begin{example}
	Let $x^7-1=(x-1)p_3q_3$ over $\Z_4$. Let $\C=\langle ((x-1)\mid 0), (1\mid 
	(x-1)+2) \rangle$, with $f=1, h=x-1$ and $g=p_3q_3$. If we compute $\R(\C)$ we 
	obtain that $\R(\C)=\langle (1\mid 0), (0\mid (x-1)+2) \rangle$.
\end{example}

%
$\add$-additive

%

As it is shown in Theorem~\ref{all-ranks}, there exists a $\add$-additive code
for any possible value of the rank. Nevertheless, the following example gives a
particular type $(\alpha, \beta; \gamma, \delta;\kappa)$ such that it is not
possible to construct a $\add$-additive cyclic code with a specific, and valid,
value of the rank.

\begin{example}\label{ex:Rank}
	Let $\C=\langle (b\mid 0), (\ell\mid fh +2f) \rangle$ be a $\add$-additive 
	cyclic code of type $(2,7; 2,3; \kappa)$. By Theorem~\ref{all-ranks}, 
	$\rank(\Phi(\C))\in\{8, 9, 10, 11\}$. We will see that there does not exist any 
	cyclic $\add$-additive code of type $(2,7; 2,3;\kappa)$, $\C$, with 
	$\rank(\Phi(\C))\in\{8, 9, 10 \}$.
	
	Let $x^7-1=(x-1)p_3q_3$ over $\Z_4$, with $p_3$ and $q_3$ as in 
	Example~\ref{ex:Kernel}. By Theorem~\ref{TypeDependingDeg}, $\deg(g)=3$ and 
	$\deg(b)=\deg(h)\leq 2$. Assume without loss of generality that $g=p_3$ and, 
	since $\deg(h)\leq 2$, we have that $q_3$ divides $f$.
	We have already proved, in Example~\ref{ex:Kernel}, that there does not exist a 
	$\add$-additive code of type $(2,7; 2,3; \kappa)$ with linear Gray image. Thus, 
	$\rank(\Phi(\C))\neq 8$.
	
	By Theorem~\ref{Theo:GeneratorsRank}, $\R(\C)=\langle (b\mid 0),
	({\ell}_r\mid f_rh_r +2{f}_{r}) \rangle$ where $r$ divides $f$ and $h_r=hr$. 
	Since $\rank(\Phi(\C))\in\{8, 9, 10, 11\}$, we have that $\deg(r)\leq 3$ as 
	$|\R(\C)|=4^32^{2+\deg(r)}\leq 2^{11}$.
	Since $\gcd(\tilde{q}_3, \tilde{p}_3\otimes\tilde{p}_3)\neq 1$ we have that
	$q_3$ must divide $r$. Therefore $\deg(r)\geq 3$, and by the previous argument, 
	we know that $r=q_3$. So, $\rank(\Phi(\C))\notin\{ 9, 10 \}$.
	
	Finally, we will give $\add$-additive cyclic codes of type $(2,7; 2,3;\kappa)$, for different values of $\kappa$, such that $\rank(\Phi(\C))=11$. 
	Recall that $\kappa\leq
	\min\{\alpha, \gamma\}=2$ and $\kappa= \alpha - \deg(\gcd(b,\ell \tilde{g}))$,
	then
	\begin{itemize}
		\item $\kappa=2$: $\C=\langle (x - 1\mid 0), (1\mid (x^3 + 2x^2 + x + 3)(x-1)
		+2(x^3 + 2x^2 + x + 3))\rangle$, or $\C=\langle (x - 1\mid 0), (1\mid (x^3 +
		3x^2 + 2x + 3)(x -1)
		+ 2(x^3 + 3x^2 + 2x + 3))\rangle$.
		\item $\kappa=1$: $\C=\langle (x - 1\mid 0), (0\mid (x^3 + 3x^2 + 2x + 3)(x -1)
		+ 2(x^3 + 3x^2 + 2x + 3))\rangle$, or $\C=\langle (x - 1\mid 0), (0\mid (x^3 +
		2x^2 + x + 3)(x-1)
		+2(x^3 + 2x^2 + x + 3))\rangle$.
		\item $\kappa=0$: As in Example~\ref{ex:Kernel}, there does not
		exist a $\add$-additive cyclic code of type $(2,7; 2,3; 0)$.
	\end{itemize}
\end{example}

\begin{proposition}\label{prop:RankQuaternary}
	Let $\C= \langle fh +2f \rangle$ be a cyclic code over $\Z_4$ of length 
	$n$, with $fhg=x^n-1$. Then, 
	$$
	\R(\C)= \big\langle fh +2\frac{f}{r} \big\rangle,
	$$
	where $r$ is the Hensel lift of $\gcd(\tilde{f},\tilde{g}\otimes 
	\tilde{g})$.
\end{proposition}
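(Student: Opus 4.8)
The plan is to treat this as the purely quaternary ($\alpha=0$) instance of Theorem~\ref{Theo:GeneratorsRank} and then pin down the divisor $r$ using Wolfmann's linearity criterion, Theorem~\ref{Phi(CY)Linear}. Applying Theorem~\ref{Theo:GeneratorsRank} with no $\Z_2$ coordinates gives at once that $\R(\C) = \langle fh + 2\frac{f}{r}\rangle$ for some divisor $r$ of $f$, with $h_r = hr$ and $g_r = g$ unchanged; hence the only thing left is to determine which divisor $r$ occurs. First I would record the characterization of $\R(\C)$ that I will actually use: since $\R(\C)$ is the pull-back of the binary span $\langle \Phi(\C)\rangle$ and $\Phi$ is injective, $\R(\C)$ is the smallest $\Z_4$-cyclic code $\D \supseteq \C$ whose Gray image $\phi(\D)$ is linear (any linear-image supercode $\D$ satisfies $\phi(\D)\supseteq\langle\phi(\C)\rangle=\phi(\R(\C))$, so $\D\supseteq\R(\C)$).

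Next I would introduce the chain of candidate supercodes $\C_r := \langle fh + 2\frac{f}{r}\rangle$ indexed by the monic divisors $r$ of $f$, noting $\C_1 = \C$. Writing $f_r = f/r$ and $g_r = g$, these are genuine $\Z_4$-cyclic codes, and a short divisibility check shows that $r \mid r'$ implies $\C_r \subseteq \C_{r'}$: the residue part $\widetilde{fh}$ is common to all of them, while the order-two part is $\langle 2\,g\,f_r\rangle$, whose binary generator $\tilde g\,\widetilde{f_r}$ shrinks as $r$ grows. In particular every $\C_r$ contains $\C = \C_1$, so $\R(\C)$ is the smallest member of this nested family whose image is linear.

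Finally I would apply Wolfmann's criterion to each link of the chain: by Theorem~\ref{Phi(CY)Linear}, $\phi(\C_r)$ is linear iff $\gcd(\widetilde{f_r}, \tilde g \otimes \tilde g) = \gcd(\tilde f/\tilde r,\ \tilde g\otimes\tilde g)=1$. Because $n$ is odd, $x^n-1$ is squarefree over $\Z_2$, so $\tilde f$ is squarefree; hence $\gcd(\tilde f/\tilde r, \tilde g\otimes\tilde g)=1$ holds exactly when $\tilde r$ is divisible by $d := \gcd(\tilde f, \tilde g\otimes\tilde g)$. The divisibility-minimal such $r$ is therefore the one with $\tilde r = d$, i.e.\ the Hensel lift of $d$ (which divides $f$ since, $n$ being odd, divisors of $x^n-1$ lift uniquely from $\Z_2[x]$ to $\Z_4[x]$). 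By the nesting, this minimal $r$ yields the minimal linear-image supercode, which is $\R(\C)$, giving $\R(\C)=\langle fh + 2\frac{f}{r}\rangle$ with $r$ the Hensel lift of $\gcd(\tilde f,\tilde g\otimes\tilde g)$, as claimed. The step requiring the most care is this last one: verifying the squarefreeness-based equivalence, checking that $\{r : \phi(\C_r)\text{ linear}\}$ really has a unique divisibility-minimum, and confirming that the Hensel lift of $d$ genuinely divides $f$ so that the minimal $\C_r$ is well defined.
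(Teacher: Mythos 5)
Your proposal is correct and follows essentially the same route as the paper's proof: reduce to the known form $\R(\C)=\langle fh+2\frac{f}{r}\rangle$ with $r\mid f$, characterize $\R(\C)$ as the minimal linear-image supercode, and identify the minimal admissible $r$ via Wolfmann's criterion (Theorem~\ref{Phi(CY)Linear}). You are in fact somewhat more careful than the paper on the final step, where you use squarefreeness of $x^n-1$ over $\Z_2$ to justify that the condition $\gcd(\tilde f/\tilde r,\tilde g\otimes\tilde g)=1$ is equivalent to $\gcd(\tilde f,\tilde g\otimes\tilde g)$ dividing $\tilde r$, which the paper asserts without comment.
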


\begin{proof}
	From \cite{Z4CyclicRK}, we have that $\R(\C)=\langle fh +2\frac{f}{r} 
	\rangle$, where $r$ divides $f$. 
	
	Since $\R(\C)$ is the minimum cyclic code over $\Z_4$ containing $\C$ 
	whose image under the Gray map is linear, then $r$ is the polynomial of minimum 
	degree dividing $f$ satisfying that $\langle fh +2\frac{f}{r}\rangle$ has linear 
	image. This is equivalent, by \cite{wolfmann}, to the condition 
	$\gcd\big(\frac{\tilde{f}}{\tilde{r}},\tilde{g}\otimes \tilde{g}\big)=1$. 
	Therefore, the polynomial $r$ of minimum degree dividing $f$ satisfying this 
	condition is the Hensel lift of $\gcd(\tilde{f},\tilde{g}\otimes\tilde{g})$.
\end{proof}

\begin{proposition}
	Let ${\cal C}=\langle (b\mid 0), (\ell\mid fh +2f) \rangle$ be a 
	$\add$-additive
	cyclic code, where $fhg = x^\beta -1$, such that $\Phi(\C)$ is not 
	linear and $\Phi(\C_Y)$ is linear. Then, 
	$$
	\R(\C)=\langle (b_r\mid 0), ({\ell}_r\mid fh +2f) \rangle,
	$$	
	where $b_r=\gcd(b,\tilde{\mu}\tilde{g}{\ell})$, $\mu$ is as in 
	(\ref{eq:lambda-mu}), and $\ell_r={\ell}-\tilde{\mu}\tilde{g}{\ell} \pmod{b_r}$.
\end{proposition}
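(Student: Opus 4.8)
The plan is to prove the equality by a double inclusion, writing $\mathcal{E}=\langle (b_r\mid 0),(\ell_r\mid fh+2f)\rangle$ for the claimed right-hand side, with $b_r=\gcd(b,\tilde{\mu}\tilde{g}\ell)$ and $\ell_r=\ell-\tilde{\mu}\tilde{g}\ell\pmod{b_r}$. First I would fix the shape of $\R(\C)$. Since $\phi(\C_Y)$ is linear, Theorem~\ref{Phi(CY)Linear} gives $\gcd(\tilde{f},(\tilde{g}\otimes\tilde{g}))=1$, so the Hensel lift in Proposition~\ref{prop:RankQuaternary} is trivial and $\R(\C_Y)=\langle fh+2f\rangle=\C_Y$. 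By Theorem~\ref{RankCyclic} the code $\R(\C)$ is $\add$-additive cyclic, and by Lemma~\ref{lemm:RankCY} its $Y$-part equals $\R(\C_Y)=\C_Y$; hence the order-four generator of $\R(\C)$ carries $Y$-part $fh+2f$, matching the form of $\mathcal{E}$. I also record the identity $1-\tilde{\mu}\tilde{g}=\tilde{\lambda}\tilde{h}$ from (\ref{eq:lambda-mu}), so that $\ell-\tilde{\mu}\tilde{g}\ell$ is exactly the polynomial $\ell'$ of (\ref{eq:3gen}), and $\ell_r=\ell'\bmod b_r$.

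For the inclusion $\R(\C)\subseteq\mathcal{E}$ I would use that $\R(\C)$ is the smallest $\add$-additive code containing $\C$ whose Gray image is linear, so it suffices to check $\C\subseteq\mathcal{E}$ and that $\Phi(\mathcal{E})$ is linear. The containment $\C\subseteq\mathcal{E}$ follows because $b_r\mid b$ gives $(b\mid 0)\in\mathcal{E}$, while $\ell-\ell_r\equiv\tilde{\mu}\tilde{g}\ell\equiv 0\pmod{b_r}$ (as $b_r\mid\tilde{\mu}\tilde{g}\ell$) gives $(\ell-\ell_r\mid 0)\in\langle(b_r\mid 0)\rangle$, whence $(\ell\mid fh+2f)\in\mathcal{E}$. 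For the linearity of $\Phi(\mathcal{E})$ I would apply Theorem~\ref{Phi(C)Linear}: since $b_r\mid\tilde{\mu}\ell\tilde{g}$ one gets $\ell_r\tilde{g}\equiv\ell\tilde{g}\pmod{b_r}$, so $\gcd(b_r,\ell_r\tilde{g})=\gcd(b_r,\ell\tilde{g})$, and the required coprimality of $\tilde{f}b_r/\gcd(b_r,\ell\tilde{g})$ with $(\tilde{g}\otimes\tilde{g})$ should descend from $\gcd(\tilde{f},(\tilde{g}\otimes\tilde{g}))=1$ together with $b_r\mid\tilde{\mu}\tilde{g}\ell$.

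For the reverse inclusion $\mathcal{E}\subseteq\R(\C)$ everything reduces to proving $(b_r\mid 0)\in\R(\C)$: once this holds, $\tilde{\mu}\tilde{g}\ell\in\langle b_r\rangle$ and $(\ell\mid fh+2f)\in\C\subseteq\R(\C)$ produce $(\ell_r\mid fh+2f)\in\R(\C)$ by reduction modulo $b_r$. To obtain $(b_r\mid 0)$ I would use Corollary~\ref{coro:generatorRank}, by which $\R(\C)=\C\cup\langle\{2\vv_j*\vv_k\}\rangle$. Each product has zero $X$-part (because $2=0$ in $\Z_2$) and $Y$-part $2v'_j*v'_k=2(\tilde{v}'_j*\tilde{v}'_k)$, a Schur product of binary reductions lying in $B:=\langle\tilde{f}\tilde{h}\rangle$. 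The nonzeros of the Schur square $B*B$ are products of pairs of roots of $\tilde{g}$, i.e.\ roots of $(\tilde{g}\otimes\tilde{g})$; combined with $\gcd(\tilde{f},(\tilde{g}\otimes\tilde{g}))=1$ this forces $B*B\subseteq\langle\tilde{f}\rangle$, so every product lies in $(0\mid\langle 2f\rangle)=(0\mid(\C_b)_Y)$. Adding such pure-$Y$ codewords to $\C$ converts the $X$-parts of the matching codewords of $\C$ into genuine pure-$X$ codewords of $\R(\C)$; invoking $\C_b=\langle(b\mid 0),(\tilde{\mu}\ell\tilde{g}\mid 2f)\rangle$ from Proposition~\ref{2GenOrderTwoCode}, these $X$-shadows together with $b$ generate the binary cyclic code $\langle b,\tilde{\mu}\tilde{g}\ell\rangle=\langle\gcd(b,\tilde{\mu}\tilde{g}\ell)\rangle=\langle b_r\rangle$, yielding $(b_r\mid 0)\in\R(\C)$.

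The hard part is the third paragraph: showing that the products $2\vv_j*\vv_k$ realize \emph{exactly} the $X$-shadow $\tilde{\mu}\tilde{g}\ell$, generating neither less nor more than $\langle b_r\rangle$. This needs the spectral identification of $B*B$ with the $\otimes$-operation and, more delicately, control of the coupling between the length-$\alpha$ reduction $\tilde{p}$ appearing in the $X$-coordinate and the length-$\beta$ reduction appearing in the $Y$-coordinate for one and the same multiplier $p\in\Z_4[x]$; Proposition~\ref{2GenOrderTwoCode} is the device that channels this coupling through $\tilde{\mu}\tilde{g}\ell$. The secondary technical point is the coprimality descent used for the linearity of $\Phi(\mathcal{E})$, whose gcd bookkeeping must be carried out carefully. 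Finally, the hypothesis that $\Phi(\C)$ is not linear while $\phi(\C_Y)$ is linear is precisely what guarantees that the $X$-generator genuinely shrinks from $b$ to $b_r$, placing us in the regime in which the stated formula is the operative one.
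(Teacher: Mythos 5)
Your overall strategy shares the paper's key ingredients --- Corollary~\ref{coro:generatorRank}, the observation that every product $2\vv_j*\vv_k$ is a pure-$Y$ vector whose $Y$-part lies in $\langle 2f\rangle$ because $\phi(\C_Y)$ is linear, and the regrouping of generators through (\ref{eq:3gen}) and $\gcd(b,\tilde{\mu}\tilde{g}\ell)$ --- but it stops short of the one step that actually determines $b_r$, and you say so yourself. Concretely, your third paragraph proves only the upper bound $\R(\C)\subseteq\C+\langle(0\mid 2f)\rangle$ (the products land in $(\zero\mid\langle 2f\rangle)$), while the non-linearity of $\Phi(\C)$ gives only $\R(\C)\supsetneq\C$. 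Between these two codes there are in general several intermediate cyclic codes (the pure-$Y$ torsion part of $\R(\C)$ is an ideal pinched between $\langle 2fh\rangle$ plus the pure-$Y$ part of $\C$ and $\langle 2f\rangle$, hence corresponds to an undetermined divisor of $h$), so neither bound yields $(0\mid 2f)\in\R(\C)$. Everything downstream --- extracting $(\tilde{\mu}\tilde{g}\ell\mid 0)$ from $(\tilde{\mu}\tilde{g}\ell\mid 2f)\in\C_b$ and hence $(b_r\mid 0)\in\R(\C)$ --- presupposes exactly this membership. The paper's proof is built around first establishing $\R(\C)=\C\cup\langle(0\mid 2f)\rangle$ and only then doing the generator algebra (which you reproduce correctly); without that equality your reverse inclusion does not get off the ground, so this is a genuine gap rather than a routine omission.

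A secondary problem sits in the forward inclusion. You propose to certify the linearity of $\Phi(\mathcal{E})$ by checking the criterion of Theorem~\ref{Phi(C)Linear}, i.e.\ that $\tilde{f}b_r/\gcd(b_r,\ell_r\tilde{g})$ is coprime to $\tilde{g}\otimes\tilde{g}$. Your reduction $\gcd(b_r,\ell_r\tilde{g})=\gcd(b_r,\ell\tilde{g})=\gcd(b,\ell\tilde{g})$ is fine, but the remaining factor $\gcd(b,\tilde{\mu}\tilde{g}\ell)/\gcd(b,\ell\tilde{g})$ only visibly divides $\tilde{\mu}$, and nothing in (\ref{eq:lambda-mu}) prevents $\tilde{\mu}$ from sharing irreducible factors with $\tilde{g}\otimes\tilde{g}$; I do not see how the ``gcd bookkeeping'' closes without already knowing $\mathcal{E}=\R(\C)$. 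It is easier to bypass the criterion entirely: once one knows $\mathcal{E}=\C+\langle(0\mid 2f)\rangle$, every product $2\vu*\vv$ with $\vu,\vv\in\mathcal{E}$ equals $(\zero\mid 2u'*v')$ with $u',v'\in\C_Y$, hence lies in $(\zero\mid\langle 2f\rangle)\subseteq\mathcal{E}$ by the linearity of $\phi(\C_Y)$, and $\Phi(\mathcal{E})$ is linear by (\ref{equation:linear}). This is essentially how the paper proceeds, and it removes the circularity from your second paragraph.
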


\begin{proof}
	Let ${\cal C}=\langle (b\mid 0), (\ell\mid fh +2f) \rangle$ be a 
	$\add$-additive cyclic code, $\C_Y=\langle fh +2f \rangle$. Let $G$ be a 
	generator matrix of $\C$ as in (\ref{eq:StandardForm}) and let $\{\vu_i=(u_i\mid 
	u'_i)\}_{i=1}^{\gamma}$ be the rows of order two and $\{\vv_j=(v_j\mid 
	v'_j)\}_{j=1}^{\delta}$ the rows of
	order four in $G$. By Corollary~\ref{coro:generatorRank}, 
	$\R(\C)=\C\cup\langle \{2\vv_j *\vv_k\}_{1\leq j<k\leq \delta}\rangle$. We have 
	that $\Phi(\C)$ is not linear, therefore there exist $i,k\in\{1,\dots,\delta\}$ 
	such that $2\vv_j *\vv_k=(\zero \mid 2v'_j *v'_k)\not \in \C$. Since 
	$\Phi(\C_Y)$ is linear, $2v'_j *v'_k\in\langle 2fh, 2f \rangle$. Moreover, 
	$\langle (0\mid 2fh) \rangle\in \C$ and therefore $\R(\C)=\C\cup\langle (0\mid 
	2f)\rangle$.
	Considering the generators polynomials of $\R(\C)$ and $\mu$ as in 
	(\ref{eq:lambda-mu}), we have 
	
	\begin{equation*}
	\def\arraystretch{1.5}
	\begin{array}{ccl}
	\R(\C) & = & \langle (b\mid 0), (\ell-\tilde{\mu}\tilde{g}\ell \mid 
	fh),(\tilde{\mu}\tilde{g}\ell \mid 2f),(0\mid 2f) \rangle\\
	& = & \langle (b\mid 0),(\tilde{\mu}\tilde{g}\ell \mid 0), 
	(\ell-\tilde{\mu}\tilde{g}\ell \mid fh+2f) \rangle\\
	& = &\langle (\gcd(b,\tilde{\mu}\tilde{g}{\ell})\mid 0), 
	(\ell-\tilde{\mu}\tilde{g}\ell \mid fh+2f) \rangle.\\
	\end{array}
	\end{equation*}
	
	Therefore, considering the polynomial generators of $\R(\C)$ in standard 
	form, we have that $b_r=	\gcd(b,\tilde{\mu}\tilde{g}{\ell})$ and 
	$\ell_r=\ell-\tilde{\mu}\tilde{g}\ell \pmod{b_r}$.
\end{proof}

\begin{theorem}\label{theo:rank-r}
	Let ${\cal C}=\langle (b\mid 0), (\ell\mid fh +2f) \rangle$ be a 
	$\add$-additive
	cyclic code, where $fhg = x^\beta -1$. Then,
	$$
	\R(\C)=\langle (b_r\mid 0), ({\ell}_r\mid fh +2\frac{f}{r}) \rangle,
	$$
	where  $r$ is the Hensel lift of $\gcd(\tilde{f},\tilde{g}\otimes 
	\tilde{g})$,
	$b_r=\gcd(b,\tilde{\mu}\tilde{g}{\ell})$, $\mu$ is as in (\ref{eq:lambda-mu}), 
	and $\ell_r={\ell}-\tilde{\mu}\tilde{g}{\ell}$.
	
\end{theorem}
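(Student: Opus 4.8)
The plan is to separate the two independent sources of non-linearity: the non-linearity internal to the quaternary part $\C_Y$, which will fix the polynomial $r$ and hence the quaternary generator $fh+2\frac{f}{r}$, and the non-linearity created by the interaction of the $\Z_2$ and $\Z_4$ coordinates, which will fix $b_r$ and $\ell_r$. By Theorem~\ref{RankCyclic} the code $\R(\C)$ is a $\add$-additive cyclic code, so by Theorem~\ref{Theo:GeneratorsRank} it already has the shape $\langle (b_r\mid 0),(\ell_r\mid fh+2\frac{f}{r})\rangle$ with $r$ a divisor of $f$ and $b_r$ a divisor of $b$; the whole task is to identify these three polynomials explicitly.

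First I would pin down $r$ from the quaternary projection. By Lemma~\ref{lemm:RankCY} we have $\R(\C)_Y=\R(\C_Y)$, and by Proposition~\ref{prop:RankQuaternary} the latter equals $\langle fh+2\frac{f}{r}\rangle$, where $r$ is the Hensel lift of $\gcd(\tilde f,\tilde g\otimes\tilde g)$. Since the $Y$-projection of the standard generators of $\R(\C)$ from Theorem~\ref{Theo:GeneratorsRank} is exactly $\langle fh+2\frac{f}{r}\rangle$, comparing the two expressions identifies the $r$ of Theorem~\ref{Theo:GeneratorsRank} with this Hensel lift, which fixes the quaternary generator.

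Next I would determine $b_r$ and $\ell_r$ by repeating, essentially verbatim, the generator manipulation of the preceding proposition (the case $\Phi(\C)$ not linear and $\Phi(\C_Y)$ linear), the only change being $2f\rightsquigarrow 2\frac{f}{r}$ on every quaternary coordinate. Concretely, by Corollary~\ref{coro:generatorRank} one has $\R(\C)=\C\cup\langle\{2\vv_j*\vv_k\}\rangle$, and each cross term $2\vv_j*\vv_k=(0\mid 2v'_j*v'_k)$ is a pure-$Y$ vector of order two; by Step~1 these cross terms will supply the order-two vector $(0\mid 2\frac{f}{r})$ to $\R(\C)$ (this is the crux, flagged below). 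Rewriting $\C$ through (\ref{eq:3gen}) as $\langle (b\mid 0),(\ell'\mid fh),(\tilde{\mu}\tilde{g}\ell\mid 2f)\rangle$ with $\ell'=\ell-\tilde{\mu}\tilde{g}\ell$, and using $r\star(0\mid 2\frac{f}{r})=(0\mid 2f)$, I subtract to obtain the pure-$X$ codeword $(\tilde{\mu}\tilde{g}\ell\mid 2f)-(0\mid 2f)=(\tilde{\mu}\tilde{g}\ell\mid 0)$. Adjoining this to $(b\mid 0)$ collapses the pure-$X$ subcode to $\langle(\gcd(b,\tilde{\mu}\tilde{g}\ell)\mid 0)\rangle$, whereas $(\ell'\mid fh)+(0\mid 2\frac{f}{r})=(\ell'\mid fh+2\frac{f}{r})$ provides the order-four generator. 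This yields $b_r=\gcd(b,\tilde{\mu}\tilde{g}\ell)$ and $\ell_r=\ell'=\ell-\tilde{\mu}\tilde{g}\ell$ (which, by (\ref{eq:lambda-mu}), is just $\tilde{\lambda}\tilde{h}\ell$), exactly as claimed.

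The main obstacle is the step that glues the two computations together: verifying that the cross terms $\{(0\mid 2v'_j*v'_k)\}$, together with the pure-$Y$ vector $(0\mid 2fh)\in\C$, generate exactly the order-two pure-$Y$ vector $(0\mid 2\frac{f}{r})$ used above, and nothing coarser. This is precisely where the quaternary rank theory (the Schur-product/$\tilde g\otimes\tilde g$ description behind Proposition~\ref{prop:RankQuaternary}) must be reconciled with the $\Z_2\Z_4$-interaction, since the componentwise products $v'_j*v'_k$ need not lie in the ideal $\langle fh\rangle$, and one must rule out that the available order-two pure-$Y$ subcode is a proper sublattice that would force the extracted pure-$X$ generator to be a strict divisor of $\gcd(b,\tilde{\mu}\tilde{g}\ell)$. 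In other words, one must show that the $Y$-nonlinearity (encoded by $r$) and the $\Z_2\Z_4$-interaction (encoded by $b_r,\ell_r$) genuinely decouple, so that the $r=1$ computation of the preceding proposition and the $\alpha=0$ computation of Proposition~\ref{prop:RankQuaternary} may simply be superposed. Once this decoupling is established, the generator bookkeeping of the previous paragraph is routine and the theorem follows.
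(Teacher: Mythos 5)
Your proposal follows the paper's proof essentially step for step: the shape of $\R(\C)$ from Theorems~\ref{RankCyclic} and~\ref{Theo:GeneratorsRank}, the identification of $r$ via Lemma~\ref{lemm:RankCY} and Proposition~\ref{prop:RankQuaternary}, and the generator manipulation through (\ref{eq:3gen}) using $r\star(0\mid 2\frac{f}{r})=(0\mid 2f)$ to extract $(\tilde{\mu}\tilde{g}\ell\mid 0)$ and hence $b_r=\gcd(b,\tilde{\mu}\tilde{g}\ell)$, $\ell_r=\ell-\tilde{\mu}\tilde{g}\ell$. The one point you flag as an unresolved obstacle is closed in the paper by a short observation: every cross term $2\vv_j*\vv_k=(0\mid 2v'_j*v'_k)$ is an order-two element of $\R(\C)_Y=\R(\C_Y)=\C_Y\cup\langle 2\frac{f}{r}\rangle$, hence lies in $\langle(0\mid 2\frac{f}{r})\rangle$, which gives $\R(\C)\subseteq\C\cup\langle(0\mid 2\frac{f}{r})\rangle$, while the equality $\R(\C)_Y=\R(\C_Y)$ forces $2\frac{f}{r}$ to actually occur as a $Y$-projection, so that $\R(\C)=\C\cup\langle(0\mid 2\frac{f}{r})\rangle$ and the decoupling you describe indeed holds.
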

\begin{proof}
	Let ${\cal C}=\langle (b\mid 0), (\ell\mid fh +2f) \rangle$ be a 
	$\add$-additive
	cyclic code. By Theorem~\ref{Theo:GeneratorsRank}, 
	$\R(\C)=\langle (b_r\mid 0), ({\ell}_r\mid fh +2\frac{f}{r}) \rangle$, where $r$ 
	is a divisor of $f$ and $b_r$ divides $b$.
	
	Consider the quaternary code $\C_Y=\langle fh +2f \rangle$. By 
	Lemma~\ref{lemm:RankCY}, we have that $(\R(\C))_Y=\R(\C_Y)$. Therefore, by 
	Proposition~\ref{prop:RankQuaternary}, $\R(\C_Y)=\big\langle fh +2\frac{f}{r} 
	\big\rangle,$
	where $r$ is the Hensel lift of $\gcd(\tilde{f},\tilde{g}\otimes 
	\tilde{g})$. Note that $\R(\C_Y)=\C_Y\cup \big\langle 2\frac{f}{r} \big\rangle$.
	
	From Corollary~\ref{coro:generatorRank}, $\C=\C\cup\langle 
	\{2\vv_j *\vv_k\}_{1\leq j<k\leq \delta}\rangle$, where $\{\vu_i=(u_i\mid 
	u_i')\}_{i=1}^{\gamma}$ are the rows of order four of the generator matrix of 
	$\C$ as in  (\ref{eq:StandardForm}). Note that, for all $\vu_j,\vu_k$, $1\leq 
	j\leq k \leq\delta$, $2\vu_j*\vu_k=(0\mid 2u'_j*u_k')$, where 
	$2u'_j*u_k'\in\R(\C_Y)=\C_Y\cup\langle 2\frac{f}{r} \big\rangle$ and, therefore, 
	$(0\mid 2u'_j*u_k')\in\C\cup\langle (0\mid 2\frac{f}{r}) \big\rangle$. Hence, we 
	have that $\R(\C)=\C\cup\langle (0\mid 2\frac{f}{r}) \big\rangle$. 		
	
	Therefore, for $\mu$ as in (\ref{eq:lambda-mu}),
	\begin{equation*}
	\def\arraystretch{1.5}
	\begin{array}{ccl}
	\R(\C) & = & \langle (b\mid 0), 
	(\ell-\tilde{\mu}\tilde{g}\ell \mid fh),(\tilde{\mu}\tilde{g}\ell \mid 
	2f),(0\mid 2\frac{f}{r}) \rangle\\
	& = & \langle (b\mid 0),(\tilde{\mu}\tilde{g}\ell \mid 
	0), (\ell-\tilde{\mu}\tilde{g}\ell \mid fh+2\frac{f}{r}) \rangle\\
	& = &\langle (\gcd(b,\tilde{\mu}\tilde{g}{\ell})\mid 0), 
	(\ell-\tilde{\mu}\tilde{g}\ell \mid fh+2\frac{f}{r}) \rangle.\\
	\end{array}
	\end{equation*}
	
	From the last equation, and considering the polynomial 
	generators of $\R(\C)$ in standard form, we have that $b_r= 
	\gcd(b,\tilde{\mu}\tilde{g}{\ell})$ and $\ell_r=\ell-\tilde{\mu}\tilde{g}\ell 
	\pmod{b_r}$.
\end{proof}

\begin{example}
	Let $\alpha=3$ and $\beta=7$. Consider, as in Example 3, $\C = \langle ( 
	x-1\mid 0), ( 0\mid x-1)\rangle$  where $f= x- 1$ and $h= 1$. As we have seen, 
	$\C$ does not have linear binary image. Then, by Theorem~\ref{theo:rank-r}, we 
	have that $\R(\C)= \langle (1\mid 0), (0\mid (x-1) + 2) \rangle$ where $r=f=x - 
	1$.
\end{example}

\begin{example}
	Let $\alpha=3$ and $\beta=15$. Consider $\C = \langle ( x-1\mid 0), ( 
	1\mid fh + 2f)\rangle$ where $f= x^4 + 2x^2 + 3x + 1$ and $h= (x-1)(x^4 + x^3 + 
	x^2 + x + 1)$. Then, by Theorem~\ref{theo:rank-r}, we have that $\R(\C)= \langle 
	(1\mid 0), (0\mid fh + 2\frac{f}{r}) \rangle$ where $r=f=x^4 + 2x^2 + 3x + 1$.
\end{example}

\section{Conclusions}
Given a $\add$-additive cyclic code $\C$, we have shown that the codes $\K(\C)$ and $\R(\C)$ are also $\add$-additive cyclic. We have computed the generator polynomials of these codes in terms of the generator polynomials of $\C$. Using these results, we have concluded that the dimensions of the binary images of $\K(\C)$ and $\R(\C)$, i.e. the dimension of the kernel and the rank of $\C$, cannot take all the possible values as for a general $\add$-additive code. In other words, if a $\add$-additive code is cyclic, then the set of possible values for the rank and the dimension of the kernel becomes more restrictive.


\begin{thebibliography}{99}

\bibitem{Abu}
\newblock T. Abualrub, I. Siap and N. Aydin, 
\newblock ``$\mathbb{Z}_2\mathbb{Z}_4$-additive cyclic codes,"
\newblock {\em IEEE Transactions on Information Theory},  vol. 60, pp. 1508-1514, 2014.

\bibitem{AddDual}
\newblock J. Borges, C. Fern\'{a}ndez-C\'{o}rdoba, J. Pujol, J. Rif\`{a} and M. Villanueva,
\newblock ``$\mathbb{Z}_2\mathbb{Z}_4$-linear codes: generator matrices and duality,"
\newblock \emph{Designs, Codes and Cryptography}, vol. 54, pp. 167-179, 2010.

%
%
\bibitem{Z2Z4CDual}
J. Borges, C. Fern\'{a}ndez-C\'{o}rdoba, R. Ten-Valls,
\newblock ``$\mathbb{Z}_2\mathbb{Z}_4$-additive cyclic codes, generator polynomials and dual codes,"
\newblock {\em IEEE Transactions on Information Theory}, vol. 62, No. 10, pp. 6348-6354, 2016.
%
%
%
\bibitem{Z2Z4Image}
J. Borges, S.T. Dougherty, C. Fern\'{a}ndez-C\'{o}rdoba, R. Ten-Valls,
\newblock ``Binary images of $\mathbb{Z}_2\mathbb{Z}_4$-additive cyclic codes,"
\newblock arXiv:1707.03214v1, 2017.
%
%


\bibitem{del} P. Delsarte, ``An algebraic approach to the
association schemes of coding theory,'' {\em Philips Res. Rep.
	Suppl.}, vol. 10, 1973.

\bibitem{Z4CyclicRK}
S.T. Dougherty, C. Fern\'{a}ndez-C\'{o}rdoba.
\newblock ``Kernels and ranks of cyclic and negacyclic quaternary codes,''
\newblock \emph{Designs, Codes and Cryptography}, vol. 81, no. 2, pp. 347-364, 2016.

\bibitem{Z4RK} C. Fern{\'a}ndez-C{\'o}rdoba, J. Pujol and M. Villanueva, ``On
rank and kernel of $\Z_4$-linear
codes,'' \emph{Lecture Notes in Computer Science}, no. 5228, pp. 46-55, 2008.

\bibitem{Z2Z4RK} C. Fern{\'a}ndez-C{\'o}rdoba, J. Pujol and M. Villanueva,
``$\Z_2\Z_4$-linear codes: rank and kernel,'' \emph{Designs, Codes and Cryptography}, vol. 56, pp. 43-59, 2010.



\bibitem{sole} A.R. Hammons, P.V. Kumar, A.R. Calderbank, N.J.A. Sloane, P. Sol\'e,  ``The $\Z_4$-linearity of kerdock,
preparata, goethals and related codes,'' \emph{IEEE Transactions on Information Theory}, vol. 40, pp. 301-319, 1994.

\bibitem{macwilliams}
F.J. MacWilliams, N.J.A. Sloane.
\newblock \emph{The Theory of Error-Correcting Codes.}
\newblock North-Holland Publishing Company, Amsterdam, New York, Oxford, 1975.







\bibitem{wolfmann}
J. Wolfmann.
\newblock ``Binary Images of Cyclic Codes over $Z_4$,"
\newblock {\em IEEE Transactions on Information Theory}, vol. 47, pp. 1773-1779, 2001.
\end{thebibliography}
\end{document}